\newtheorem{example}{Example}
\newtheorem{lemma}{Lemma}
\title{Synthesis of Communication Policies for Multi-Agent Systems Robust to Communication Restrictions}
\author{
Saleh Soudijani$^1$
\and
Rayna Dimitrova$^1$\\\
\affiliations
$^1$CISPA Helmholtz Center for Information Security, Germany\\
\emails
\{saleh.soudijani, dimitrova\}@cispa.de
}
\newcommand{\nats}{\ensuremath{\mathbb{N}}}
\newcommand{\prob}{\ensuremath{\mathbb{P}}}
\newcommand{\init}{\ensuremath{\mathit{init}}}
\newcommand{\states}{\ensuremath{\mathcal{S}}}
\newcommand{\acts}{\ensuremath{\mathcal{A}}}
\newcommand{\starget}{\ensuremath{\mathcal{S}_{\mathit{target}}}}
\newcommand{\savoid}{\ensuremath{\mathcal{S}_{\mathit{avoid}}}}
\newcommand{\tpath}{\ensuremath{\tau}}
\newcommand{\policy}{\ensuremath{\pi}}
\newcommand{\lstates}{\ensuremath{\mathcal{L}}}
\newcommand{\ostates}{\ensuremath{\mathcal{O}}}
\newcommand{\wstarget}{\ensuremath{\widehat{\mathcal{S}}_{\mathit{target}}}}
\newcommand{\wsavoid}{\ensuremath{\widehat{\mathcal{S}}_{\mathit{avoid}}}}
\newcommand{\piact}{\ensuremath{\pi_{act}}}
\newcommand{\Piact}{\ensuremath{\Pi_{act}}}
\newcommand{\Piactpos}{\ensuremath{\Pi^{pos}_{act}}}
\newcommand{\Acomm}{\ensuremath{\mathcal{A}_{comm}}}
\newcommand{\picomm}{\ensuremath{\pi_{comm}}}
\newcommand{\Picomm}{\ensuremath{\Pi_{comm}}}
\newcommand{\Picommpos}{\ensuremath{\Pi_{comm}^{pos}}}
\newtheorem{definition}{Definition}
\begin{document}

\maketitle

\begin{abstract}
We study stochastic multi-agent systems in which agents must cooperate to maximize the probability of achieving a common reach-avoid objective. 
In many applications, during the execution of the system, the communication between the agents can be constrained by restrictions on the bandwidth currently available for exchanging local-state information between the agents. 
In this paper, we propose a method for computing joint action and communication policies for the group of agents that aim to satisfy the communication restrictions as much as possible while achieving the optimal reach-avoid probability when communication is unconstrained. Our method synthesizes a pair of action and communication policies robust to restrictions on the number of agents allowed to communicate. To this end, we introduce a novel cost function that measures the amount of information exchanged beyond what the communication policy allows. We evaluate our approach experimentally on a range of benchmarks and demonstrate that it is capable of computing pairs of action and communication policies that satisfy the communication restrictions if such exist.
\end{abstract}

\section{Introduction}
In cooperative multi-agent systems (MAS),  individual agents are required to collaborate to achieve a joint task.   
One way to achieve this collaboration is to provide a centralized joint policy that the agents must adhere to.  
Typically,  such policies need to correlate the actions of different agents. As a result,  their successful execution requires communication between the agents to exchange, for example, local state information,  and to coordinate their actions.
%
However, in many real-world settings,  agents have to operate in environments where communication could be restricted due to physical limitations, such as limited bandwidth or signal interference.
As such restrictions can severely impact the coordination between agents and, hence, their performance, it is imperative that communication restrictions be considered in the design of joint policies for MAS.
%
This requires devising policies that prescribe how the limited resources available for communication should be allocated.

In this paper, we focus on cooperative MAS with joint reach-avoid objectives (which require the agents to reach some target set of joint states while avoiding some unsafe states),  \emph{possibly operating under communication restrictions on the number of agents allowed to communicate. } 
We consider the setting where communicating agents exchange full current state information and jointly select actions.
This requires policies that determine which subset of agents should communicate at the current state of system execution,  which we term \emph{communication policies}.
Clearly,  communication policies cannot depend on the full information about agents' local states,  as this would defeat their purpose.  In this work, we assume that communication policies can use some \emph{public information} about the agents' states, which, in practice, could be very limited or even non-existent. 
A typical example is a system in which agents know the coarse regions in which other agents are located but not their precise locations. 
The rationale is that this public information is significantly less costly to communicate and changes less frequently.

We study the problem of synthesizing communication policies,  together with joint policies that govern the agents' actions,  which we call \emph{action policies}.
The challenge is that these two policies should be synthesized in tandem since the joint action policy should be adapted to the communication policy, requiring as little communication as possible beyond that allowed by the communication policy.
To address this challenge,  we introduce a cost function that, intuitively,  measures the \emph{information exchange between agents required by the action policy that goes beyond what is allowed by the communication policy}.  We show that this cost function can be used in an upper bound on the performance loss when the action policy is executed under restricted communication and following the communication policy. 
Based on this bound,  we propose a method for synthesizing pairs of action and communication policies that minimizes an over-approximation of the cost function and achieves optimal reach-avoid probability under unrestricted communication.
%

\paragraph{Related work.} 
Decision-theoretic models for MAS~\cite{rizk2018decision} such as decentralized MDPs (Dec-MDPs) and decentralized partially observable MDPs (Dec-POMDPs)~\cite{decentralizedcontrolCS} and their respective policy synthesis problems have been extensively studied.  A key characteristic of these models is that agents cannot communicate.
In contrast, in our setting,  agents are allowed to communicate and exchange information about their independent local states and transitions,  but this communication must be minimized relative to a communication policy.\looseness=-1

A number of MAS models exist where communication is allowed but used sparsely to simplify the policy synthesis task. These include~\cite{guestrin2001multiagent}, where a coordination graph representing the dependencies between the agents is given, and  \cite{sparseinteractionsmelo2011decentralized}, where the decentralized model is equipped with information about the states in which the agents need to interact. 
Other methods~\cite{WuZC11} use online planning to use communication dynamically on demand. 
These approaches enable the synthesis of optimal policies that conform to given communication structures or minimize communication. On the other hand, our method synthesizes optimal action policies equipped with communication policies that make them robust to communication restrictions.\looseness=-1

The closest to our work is~\cite{KarabagNT22},  which proposes a technique for constructing joint policies for cooperative MAS that are robust to temporary or permanent loss of communication. In contrast,  the policies we compute must be \emph{robust to communication restrictions},  and thus benefit from \emph{associated communication policies}. 
Thus, while~\cite{KarabagNT22} can use total correlation to synthesize policies minimizing dependency between the agents, we need to develop a cost function whose values depend on the sought communication policy.  Similarly to~\cite{KarabagNT22},  our cost function uses information-theoretic measures, but the challenge is to account for the unknown communication policy.

\section{Preliminaries}\label{sec:prelim}
In this section, we review some definitions and concepts.

For $n \in \nats$,  we define $[n] := \{1,\ldots,n\}$.
We denote the set of discrete probability distributions over a set \(X\) with \(\Delta(X)\).

Markov decision processes (MDPs) provide a framework for modeling and analysis of sequential decision processes.

\begin{definition}
A \emph{Markov decision process (MDP)} is a tuple \(M = (\states,\acts, P, s_\init)\) where 
\(\states\) is a finite set of states, 
\(\acts\) is a finite set of actions, 
\(P: \states \times \acts \rightarrow \Delta(\states)\) is a partial transition probability function, 
and \(s_\init \in \states\) is an initial state.
\end{definition}

For simplicity we sometimes write $P(s,a,s')$ instead of $P(s,a)(s')$,  for $s,s' \in \states$ and $a\in\acts$. 
We denote with $\acts(s) := \{s \in \states \mid \exists s' \in \states.~P(s,a,s') > 0\}$ the set of actions enabled in $s\in\states$. 
We assume that $\acts(s) \neq \emptyset$ for every $s \in \states$.

A \emph{path} in an MDP \(M\) is a finite or infinite sequence
\(\tpath = s_0 a_1 s_1 \ldots  s_{t-1} a_t s_t,\ldots\)
of alternating states and actions such that 
\(P(s_t, a_{t+1},s_{t+1}) > 0\) for all \(t \in \nats\).  

A \emph{policy} for an MDP \(M = (\states,\acts, P, s_\init)\) is a function 
\(\policy: (\states \cdot \acts)^*\cdot \states \to \Delta(\acts)\),
 that maps each finite path ending in a state to a probability distribution over actions and is such that if $\policy(s_0a_1\ldots s_t)(a) > 0$, then $a \in \acts(s_t)$. 
A policy is called \emph{positional}, if its decisions depend solely on the current state. 
Formally,  we can represent a positional policy \(\policy\) as a function \(\policy: \states \to \Delta(\acts)\).
For simplicity,  we write $\policy(\tau,a)$ and $\policy(s,a)$ instead of $\policy(\tau)(a)$ and $\policy(s)(a)$, respectively. 

\begin{definition} 
A \emph{Markov chain} is a triple \(C = (\states, P, s_\init)\) where 
\(\states\) is the set of states, 
\(P: \states \to \Delta(\states)\) is the transition probability function,  and
\(s_\init \in \states\) is the initial state.
\end{definition}

Given an MDP \(M\), a policy \(\policy\) for \(M\) induces an (potentially infinite-state) Markov chain. 
We denote this Markov chain with \(M_{\pi}\), which is defined as
\(M_{\pi} = ((\states \cdot \acts)^*\cdot \states, P_{M_\pi}, s_\init)\), 
where for every 
\(\tpath=s_0 a_1 \ldots s_{t-1} a_t s_t \in (\states\cdot\acts)^*\cdot\states$,
$a \in \acts$ and $s \in \states$ we have that 
$P_{M_\policy}(\tpath,\tpath\cdot a \cdot s) = \policy(\tpath,a)\cdot P(s_t,a,s) \). 
For a positional policy \(\pi\) for an MDP \(M\), 
the induced Markov chain has a finite set of states. 
Formally,  \(M_{\policy} = (\states, P_{M_\policy}, s_\init)\), where for every $s \in \states$ and $s' \in \states$ we have $P_{M_\policy}(s, s') = \sum_{a\in\acts(s)}\policy(s,a)\cdot P(s,a,s') \).

A Markov chain \(C = (\states, P, s_\init)\) can be seen as a sequence of discrete stochastic variables \((S_t, t \in \nats ) \), which generates a stationary process \(S\) where \(\prob(S_t=s)\) is the probability of the chain visiting state \(s \in \states\) at time \(t\). 
%
The \emph{occupancy measure of a state} $s$ is 
\(\nu_{s} := \sum_{t=0}^{\infty} \prob(S_t=s)\).

Given a policy $\policy$ for an MDP \(M = (\states,\acts, P, s_\init)\),  we denote with $\nu_{s,a}$ the \emph{occupancy measure of the state-action pair} \((s,a)\), i.e.,  the expected number of times that action $a$ is taken at state $s$, defined as $\nu_{s,a} := \sum_{t=0}^{\infty} \prob(S_t=s, A_t=a)$.
By definition,  we have that 
$\nu_{s,a} = 
\sum_{t=0}^{\infty} \prob(S_t=s, A_t=a) = 
\sum_{t=0}^{\infty} \prob(S_t=s)\cdot \prob(A_t=a\mid S_t = s) = \policy(s,a)\cdot\nu_s$.

\subsubsection{Entropy of Stochastic Processes}

The entropy is a measure of uncertainty about the outcome of a random variable \cite{shannon1949mathematical}.

\begin{definition}
For a discrete random variable \(X\), its support \(V\) defines a countable sample space from which \(X\) takes a value \(v \in V \) according to a probability mass function (pmf)  \(p(v):=\prob(X=v)\).
The \emph{entropy} of \(X\) is defined as
$H(X) := - \sum_{v \in V} p(v) \log p(v).$
By convention,  \(0\log0=0\).
\end{definition}
The entropy is always non-negative. It vanishes for a deterministic \(X\) (i.e., if \(X\) is completely determined).

Let \((X_1, X_2)\) be a pair of random variables with joint pmf \(p(v_1, v_2)\) and support \(V_1 \times V_2 \). 
The \emph{joint entropy} of  \((X_1,X_2)\) is defined by
$H(X_1,X_2):=-\sum_{v_1 \in V_1} \sum_{v_2 \in V_2} p(v_1,v_2) \log p(v_1,v_2). $

The \emph{conditional entropy} of a random variable \(X_1\) given \(X_2\) is defined as
$H(X_1 | X_2):=-\sum_{v_1 \in V_1} \sum_{v_2 \in V_2} p(v_1,v_2) \log p(v_1 \mid v_2). $

The joint and conditional entropy definitions extend to the collection of \(n\) random variables \cite{cover1999elements}.

The \emph{entropy of a Markov chain} $C = (\states,P,s_\init)$ is defined as the joint entropy over all random variables $S_t$ for $t \in \nats$. That is,  \(H(C) := H(S_0,S_1,S_2,\ldots)= \sum_{t=0}^\infty H(S_t \mid S_{t-1} \ldots S_0 ) \).
The entropy of a Markov chain is in general infinite. 
The entropy of a Markov chain is finite if and only if it is absorbing \cite{BiondiLNW14}. 
In this paper, we restrict our analysis to absorbing Markov chains. 

\cite{BiondiLNW14} showed that the entropy of a Markov chain can be characterized in terms of the occupancy measure of the states and their so-called \emph{local entropy}. The local entropy $L(s)$ of a state $s$ in a Markov chain is the entropy of the probability distribution over the next states defined by $P$, formally,  
$L(s) :=  H(S_{t+1} \mid S_t = s) = -\sum_{s' \in \states}P(s,s') \log P(s,s')$.
Then, as shown in \cite{BiondiLNW14},  the entropy $H(C)$ can be expressed as 
 \(H(C) =\sum_{s\in \states} L(s)\cdot \nu_s\),  where $\nu_s$ is the occupancy measure of   $s$.

\subsubsection{Multi-Agent Markov Decision Processes}

\emph{Multi-agent Markov decision processes (MMDPs)} describe sequential decision-making tasks in which multiple
agents select actions in order to collaboratively maximize a given common reward-based optimization criterion.
A joint policy prescribes actions for all agents.  
During the execution of such a policy,  all agents have access to the joint state of the system.

\begin{definition}\label{def:mmdp}
Formally,  a \emph{Multi-agent Markov decision process (MMDP)} is a tuple \(M=(N,\states,\acts,P,s_\init)\) where:
\begin{itemize}
\item \(N\) is the number of agents;
\item \(\states=\states^1\times \states^2\times \ldots \times \states^N\) is a finite set of global states;
\item \(\acts=\acts^1\times \acts^2\times \ldots \times \acts^N\) is a finite set of joint actions;
\item \(P:\states\times \acts \rightarrow \Delta(\states)\) is the joint transition probability function such that for every  
$s = \langle s^1,\ldots, s^N\rangle \in \states$, 
$a= \langle a^1,\ldots, a^N\rangle \in \acts$  and 
$u = \langle u^1,\ldots, u^N\rangle \in \states$ it holds that
$P(s,a)(u) := \prod_{i=1}^N P^i(s^i,a^i)(u^i)$, 
where $\states^i$, $\acts^i$ and
$P^i : \states^i \times \acts^i \to \Delta(\states^i)$ are the local states,  actions and transition probability function of agent $i \in [N]$;
\item \(s_\init = \langle s_\init^1,\ldots, s_\init^N\rangle \in \states\) is the initial state.
\end{itemize}
\end{definition}

We identify agents with the elements of the set $[N]$. 
Each agent $i \in [N]$ is modeled by an MDP 
$M^i = (\states^i,\acts^i,P^i,s_\init^i)$.
Note that the transition probability functions $P^i$ of the agents are independent and the joint transition probability function in \Cref{def:mmdp} is defined as their product.
We denote with $\overline{i}$ the set $[N] \setminus \{i\}$ of agents other than $i$. \looseness=-1

All agents have access to a centralized view and operate in the environment by executing a positional \emph{ joint policy} \(\pi_{act}:\states \to \Delta(\acts)\). We denote with $\Piactpos(M):=\states \to \Delta(\acts)$ the set of positional joint policies for an MMDP $M$. 

In this paper,  we consider MMDPs with joint \emph{reach-avoid objectives}.  Such an objective is defined as a pair $(\starget,\savoid)$ of sets of states such that \(\starget \cap \savoid = \varnothing \). 
It requires that the agents maximize the probability of reaching  a joint state in $\starget$, while avoiding $\savoid$.

Given a joint policy $\piact$ for an MMDP $M$ with a reach-avoid objective $(\starget,\savoid)$,  we denote with $\prob_{M_\piact}((\neg\savoid) \mathcal{U} \starget)$ the probability of reaching $\starget$ while avoiding $\savoid$ in the Markov Chain $M_\piact$.

Given a reach-avoid objective in an MMDP $M$,   the optimal joint policy synthesis problem requires finding a policy $\policy^*$ such that 
$\prob_{M_{\policy*}}((\neg\savoid) \mathcal{U} \starget) =  \sup_{\policy} \prob_{M_\policy} ((\neg\savoid) \mathcal{U} \starget)$.
The optimal policy that maximizes the reach-avoid probability is denoted as \(\policy^*\), and the optimal value is denoted as \(v^*(M,\starget,\savoid)\).

Since a joint policy in an MMDP has access to the full state,  it is essentially a policy in the product MDP. 
Thus,  for reach-avoid objectives, it suffices to consider positional joint policies.
An optimal joint policy \(\pi_{act}\) for the reach-avoid objective $(\starget,\savoid)$ in the MMDP $M$ can be computed using standard techniques by solving a linear program.

\section{Problem Formulation}\label{sec:problem}
Implementing joint policies in environments with uncertainty requires effective coordination among the agents. Achieving this coordination often depends on establishing robust communication. 
However, agents may face limitations in their communication capabilities, including restrictions on the type and amount of information they can share. Additionally, at certain time steps, the environment might limit the number of agents allowed to communicate.
Let us consider an example.

\begin{example}
Figure~\ref{fig:Env_1_figure} depicts a simple robot navigation problem where three robots must coordinate to reach their respective targets while avoiding collision.
The target locations of robots \(R1\), \(R2\), and \(R3\) are labeled with \(T1\), \(T2\), and \(T3\), respectively.  Each of \(R1\) and \(R2\) has two potential target locations.  Once each of the robots has reached one of their targets,  the team's task is complete.

To maximize the probability of reaching their targets while avoiding collision, the robots must communicate, exchanging information about their current locations and actions.  What communication is necessary depends on the executed joint policy and the resulting execution paths.
For example, if \(R1\) and \(R2\) decide to swap their locations to reach their respective targets at the top of the grid,  communication between these two agents is essential to avoid collisions, while \(R3\) can navigate to its target independently, disregarding the positions of the other robots. On the other hand, if one of \(R1\) or \(R2\) decides to reach its target at the bottom of the grid, communication between this robot and \(R3\) will be needed. Finally, if both \(R1\) and \(R2\) decide to reach their targets at the bottom of the grid, then all three robots must communicate.

Consider the scenario where the communication  is constrained, and at any given time \emph{at most two robots are allowed to communicate}. 
In order to be robust to this restriction,  the joint policy should \emph{minimize the need for communication between all three robots at the same time},  and should be equipped with a \emph{communication policy} that prescribes which pair of robots should communicate at a given time.

\begin{figure}
    \centering
    \includegraphics[scale=0.07]{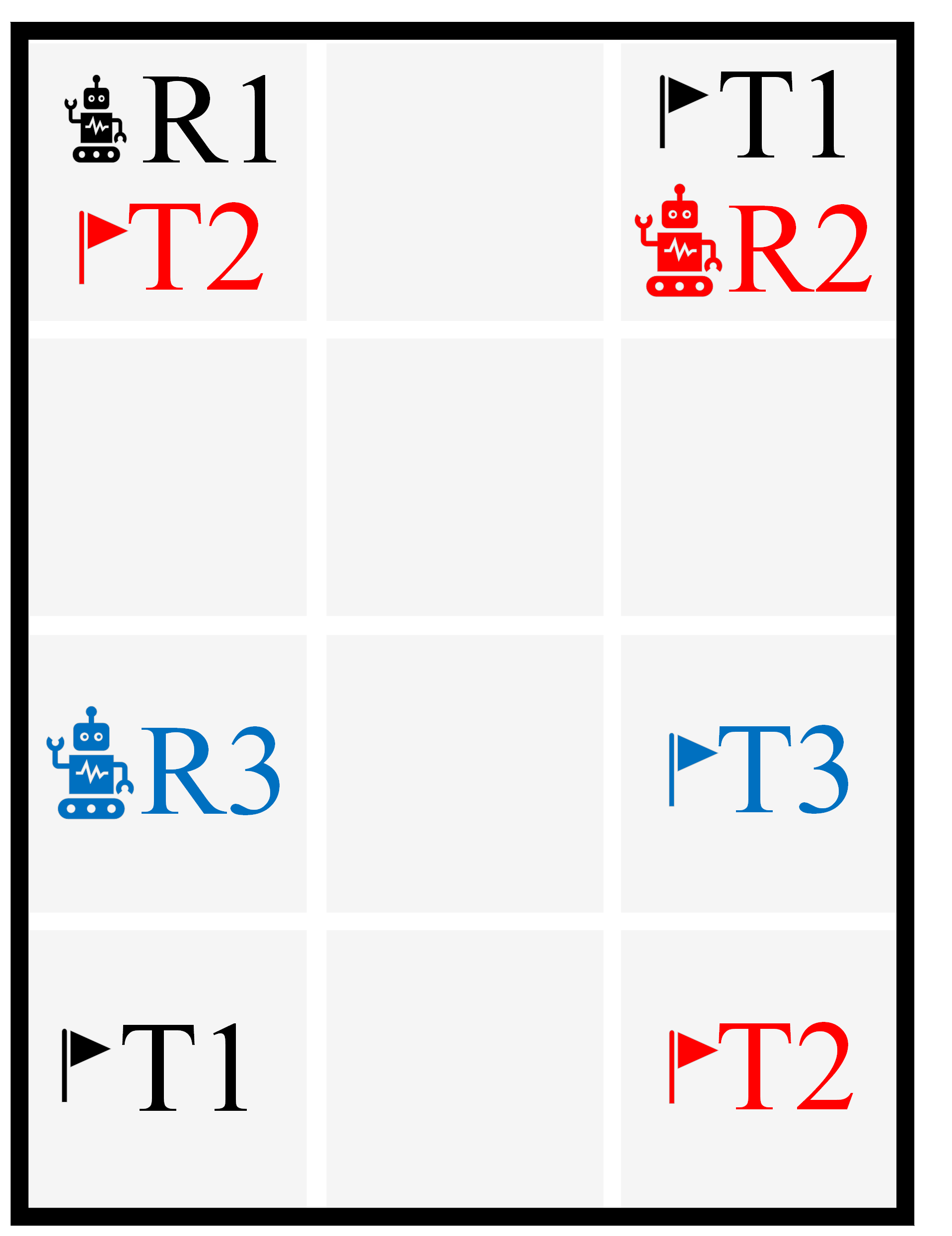}
    \caption{Environment \#1 for a robots navigation problem, with robots \(R1\), \(R2\), and \(R3\) and their respective targets (\(T1\), \(T2\), \(T3\)).}
    \label{fig:Env_1_figure}
\end{figure}

\end{example}

We study the cooperative execution of a joint policy under specific conditions. 
Consider a scenario where agents can, without restriction, share some \emph{public information}, such as, for example,  their current region. 
Additionally, they can share precise state and local information, but there is a limitation on the number of agents permitted to do this at any given time.

We now extend the policy execution in this scenario, which we refer to as \emph{'restricted communication'},  and present our problem formulation. 
We formalize our problem as a Markov game with one and a half players: the multi-agent system, and the stochastic environment. 
The objective of the system is to reach the set of target states while avoiding the ''avoid`` states.
The game is played in a sequence of rounds, starting at an initial state. 
At each step,  public information is freely exchanged between all agents,  and based on this information the  agents collaboratively select a subset of agents for further information exchange,  which includes the sharing of agents' local states. 
Communication is established within this subset, with the selected agents sharing information and making collective decisions. 
Conversely, the remaining agents cannot communicate and share local state information. 
Consequently, the remaining agents must act independently and make decisions solely based on locally available information and estimates of other agents' states.  
After all agents execute the respective policies to select actions,  the process transitions to the next state according to the probabilistic transition relation.

\begin{definition}\label{def:game}
We formulate the team's decision problem as a \emph{cooperative Markov game} represented as a tuple
\(\widehat{M}=(M,\ostates^1,\ldots, \ostates^N, \lstates^1,\ldots \lstates^N,K)\) where 
\begin{itemize}
\item for each $i \in [N]$,  $\ostates^i$ and $\lstates^i$ are finite sets of respectively \emph{publicly observable} and \emph{local states} of agent $i$;
\item $M = (N,\states,\acts,P,s_\init)$ is an MMDP such that \(\states^i=\ostates^i\times \lstates^i\) for each agent $i \in [N]$;
\item $K \in \{0,1,\ldots, N\}$ is the number of agents allowed to communicate at each point in time.
\end{itemize}
\end{definition}
In the above definition, we consider MMDPs of a special form where the states of each agent are factored into a \emph{public} part $\ostates^i$ that can be observed by all the other agents and \emph{local} part $\lstates^i$ that cannot be directly observed by the other agents.  In order for agent $i$'s local state information to become known to another agent,  agent $i$ must communicate that information to that agent.  
We consider a setting where there can be a disruption in the communication, 
resulting in the restriction that only $K$ out of the $N$ agents are allowed to communicate.
When $K=N$ we are in the full-communication case where any joint policy can be executed due to the unrestricted communication between the agents. 
In the other extreme,  when $K=0$ no communication is allowed. 
When communication is restricted,  that is,  $K < N$,  we assume that agents rely on the notion of \emph{imaginary play} introduced in~\cite{KarabagNT22} in order to estimate the current local states of other agents.  
Furthermore,  when $0 < K < N$,  agents need to agree at each step of the execution on a subset of at most $K$ agents that will communicate,  that is,  exchange state information, at this step. 
This is done by choosing the so-called \emph{communication actions} prescribed by a \emph{communication policy}. 
The communication policy is a joint policy that is guaranteed to be implementable  because it relies only on the public part of the agents' states, which can be always shared by all agents.

In \Cref{def:game},  the number $K$ of agents allowed to exchange information when communication is restricted is fixed.  Our results can be extended to the case when $K$ changes dynamically in the course of the execution.

We define the set of \emph{communication actions} consisting of the sets of agents of size exactly $K$, that is 
$\Acomm := \{c \subseteq [N] \mid |c| = K\}$.  
While we could allow communication actions selecting sets with fewer than $K$ agents,  such actions are dominated by those with maximal allowed cardinality.

For $c \in \Acomm$,   the set of remaining agents is $\overline{c}:=[N] \setminus c$.

We denote by $\ostates := \ostates^1 \times \ldots \times\ostates^N$ the set of joint public states.
A \emph{communication policy} is a function of the form $\picomm: \ostates^+ \to \Delta(\Acomm)$.  We let $\Picomm(\ostates, K)$ be the class of all communication policies for given $\ostates$ and $K$.
We denote with $\Picommpos(\ostates,  K) := \ostates \to \Delta (\Acomm)$ the set of positional communication policies for $\widehat M$.

The agents operate in the environment by executing a pair of joint policies \(\pi = (\picomm,\piact)\) where $ \picomm \in \Picomm$ and $ \piact \in \Piact$.  
We refer to $\piact$ as an \emph{action policy} and to $\picomm$ as a \emph{communication policy.} 
Each agent maintains a \emph{local imaginary copy} of the current local states of the other agents.
At each decision step, all agents first share the public parts of their states.  
The agents execute jointly the policy \(\picomm\) to select a subset $c$ of $K$ agents that will communicate with each other at the current step.  
The subset of agents eligible for communication shares the local part of their states with each other, and each agent updates their local imaginary copy based on the received information.  
Thus, the agents in $c$ have accurate knowledge of each other's current local state,  while this can be inaccurate for the rest of the agents. 
Subsequently,  the agents in $c$ jointly execute the action policy \(\piact\) to determine a joint action. 
Each of the other agents executes \(\piact\) independently.
Its state gets updated accordingly,  and the local imaginary copies of the states of the agents in $\overline{i}$ are sampled from the respective distributions.
After that,  the system proceeds to the next decision step. 
This continues until a state in $ \starget \cup \savoid $ is reached.
At each step, the agents selected for communication operate in a centralized manner. 

\subsubsection{Execution under Restricted Communication}

The evolution of the system given a pair of communication and action policies can be formalized as follows. 
Given  \(\widehat{M}=(M,\ostates^1,\ldots, \ostates^N, \lstates^1,\ldots \lstates^N,K)\) with $M=(N,\states,\acts,P,s_\init)$, 
a pair of positional joint policies \(\pi = (\picomm,\piact) \in \Picommpos(\ostates,K)\times\Piactpos(M)\) induces a Markov chain 
$\widehat{M}_{\pi} = (\widehat{\states},\widehat{P},\widehat s_\init)$ defined as follows.
\begin{itemize}
\item The set of states is $\widehat{\states} = \widehat{\states}^1 \times \ldots\times\widehat{\states}^N$, where for each agent $i \in [N]$ we have  $\widehat{\states}^i := \states^i \times \prod_{j \in \overline i} \lstates^j$.
\item Let
$\widehat s_1 = \langle \widehat s^1_1,\ldots, \widehat s_1^N\rangle \in \widehat\states$ and 
$\widehat s_2 = \langle \widehat s_2^1,\ldots, \widehat s_2^N\rangle\in\widehat\states$ 
where 
$\widehat s^i_1 = \langle\langle o^i_1,l^i_1\rangle,\langle l_1^{i,j}\rangle_{j \in \overline i}\rangle$ and 
$\widehat s^i_2 = \langle\langle o^i_2,l^i_2\rangle,\langle l_2^{i,j}\rangle_{j \in \overline i}\rangle$ for all $i \in [N]$.
Let
$\widehat P(\widehat s_1,\widehat s_2) = 
\sum_{c \in\Acomm} \sum_{a\in \acts}\sum_{\substack{a_j \in \acts \\ \text{for all } j\in \overline{c}}}
q \cdot 
\prod_{i\in c} p_{i,a,c}\cdot 
\prod_{j\in \overline c} p_{j,a_j,c}$ where the quantities
$q$,  and $p_{i,a,c}$, and $p_{j,a_j,c}$  are defined below.
We pick the smallest $i_{\min} \in c$ and define the element $l_c^{k} \in \lstates^k$ for 
$k\in [N]$,  
where $l_c^{k} = l_1^{k}$ if $k \in c$ and $l_c^{k} = l_1^{i_{min},k}$ otherwise.

Then,  we define

$\begin{array}{lll}q &:=& 
\picomm(o^1_1,\ldots,o^N_1)(c) \cdot \\&&
\piact(\langle\langle o_1^1, l _c^{1}\rangle,\ldots\langle o_1^N, l _c^{N}\rangle\rangle)(a) \cdot \\&&
\prod_{j \in \overline{c}} \piact(\langle\langle o_1^1, l^{j,1}\rangle,\ldots\langle o_1^N, l^{j,N}\rangle\rangle)(a_j).\end{array}$

For $i \in c$ and $a = \langle a^1,\ldots,a^N\rangle \in \acts$, let 
 $p_{i,a,c} :=
P^i(\langle o^i_1,l^i_1\rangle,a^i, \langle o^i_2,l_2^i\rangle)
\prod_{k \in \overline i}\frac{P^k(\langle o^k_1,l_c^{k}\rangle,a^k,\langle o^k_2,l_2^{i,k}\rangle)}
{\sum_{l \in \lstates^k}P^k(\langle o^k_1,l_c^{k}\rangle,a^k,\langle o^k_2,l\rangle)}$.

For $j \in \overline c$  and $a_j = \langle a^1,\ldots,a^N\rangle \in \acts$, let 
 $p_{j,a_j,c} := 
P^j(\langle o^j_1,l^j_1\rangle,a^j,\langle o^j_2,l_2^j\rangle)
\prod_{k \in \overline j}\frac{P^k(\langle o^k_1,l^{j,k}\rangle,a^k,\langle o^k_2,l_2^{j,k}\rangle)}
{\sum_{l \in \lstates^j}P^k(\langle o^k_1,l^{j,k}\rangle,a^k,\langle o^k_2,l\rangle)}$.

\item $\widehat s_\init = \langle \widehat{s}_\init^1,\ldots,\widehat{s}_\init^N\rangle$, where
$s_\init^i = \langle o_\init^i, l_\init^i\rangle$ and 
$\widehat{s}_\init^i = \langle\langle o_\init^i, l_\init^i\rangle,\langle l_\init^j\rangle_{j \in \overline i}\rangle$.
\end{itemize}

For $U \subseteq \states$, we define $\widehat U \subseteq\widehat S$ as $\widehat U = \{\langle\langle s^i, m^i \rangle\rangle_{i \in [N]} \in \widehat S\mid \langle s^i \rangle_{i \in [N]} \in U\}$.
We thus lift  $\starget$ and $\savoid$ to  $\wstarget$ and $\wsavoid$. 
For a pair \(\pi = (\picomm,\piact) \in \Picommpos(\ostates,K)\times\Piactpos(M)\) of communication and action policies,   $\prob_{\widehat M_\pi}((\neg\wsavoid) \mathcal{U} \wstarget)$ is the probability of reaching $\starget$ while avoiding $\savoid$, executing $\picomm$ and $\piact$.

Our goal is to compute a pair \((\picomm^*,\piact^*)\) of positional communication and action policies such that $\piact^*$ is optimal for $M$ under unrestricted communication, and among all pairs with optimal action policies,  $\pi^*$ is optimal for $\widehat M$.  

\noindent
\textbf{Problem 1} Given a cooperative Markov game
 \(\widehat{M}=
 (M= (N,\states,\acts,P,s_\init),\ostates^1,\ldots, \ostates^N, \lstates^1,\ldots \lstates^N,K)\) as  in \Cref{def:game} and 
 a reach-avoid objective $(\starget,\savoid)$ for $M$,  find a pair of positional policies  \(\pi^* = (\picomm^*,\piact^*) \in \Picommpos(\ostates,K)\times\Piactpos(M)\) such that
$\prob_{M_{\piact^*}}((\neg\savoid) \mathcal{U} \starget) = v^*(M,\starget,\savoid)$
and for every 
\(\pi = (\picomm,\piact) \in \Picommpos(\ostates,K)\times\Piactpos(M)\) for which we have
$\prob_{M_\piact}((\neg\savoid) \mathcal{U} \starget) = v^*(M,\starget,\savoid)$, it also holds that 
\[\prob_{\widehat M_{\pi^*}}((\neg\wsavoid) \mathcal{U} \wstarget) \geq \prob_{\widehat M_{\pi}}((\neg\wsavoid) \mathcal{U} \wstarget).\]

We restrict  \textbf{Problem 1} to \emph{positional} communication and action policies.  By considering a cooperative Markov game where $|\ostates|=1$  but different sets of agents need to communicate over time,  it is easy to see that communication policies with memory are strictly more powerful.  However,  for the sake of efficient synthesis,  we focus on positional policies.\looseness=-1

\section{Policy Synthesis}\label{sec:synthesis}

For the rest of this section,  we fix a cooperative Markov game
\(\widehat{M}=(M,\ostates^1,\ldots, \ostates^N, \lstates^1,\ldots \lstates^N,K)\) with $M= (N,\states,\acts,P,s_\init)$ as in \Cref{def:game}, and  a reach-avoid objective $(\starget,\savoid) $.
Our goal is to compute a joint action policy that is both optimal and robust to communication restrictions.  To this end, we introduce a cost function based on entropy for information sharing among agents. 
We first present this cost function, followed by our approach for computing 
a pair of action and communication policies.

\subsection{Cost Function for Information Sharing Relative to a Communication Policy}

Recall that we denote 
$\ostates := \ostates^1 \times \ldots \ostates^N$ and 
$\lstates := \lstates^1 \times \ldots \lstates^N$. 
For each communication action $c \in \Acomm$, we define
$\ostates^c := \prod_{i \in c}\ostates^i$ and $\lstates^c := \prod_{i \in c}\lstates^i$.

Let  $\pi = (\picomm,\piact) \in \Picommpos(\ostates,K)\times\Piactpos(M)$ be a pair of positional joint communication and action policies. 

The cost function $D_{(\pi_{\text{comm}},\pi_{\text{act}})}$
 which we define,   measures the information exchange between agents required by the action policy that goes beyond what is allowed by the communication policy.
 It considers all agents $i$ and all possible coalitions $c$ of $K$ agents.  With each, it associates a value  $G^i(\pi_{\text{comm}},\pi_{\text{act}})$  or $G^c(\pi_{\text{comm}},\pi_{\text{act}})$, respectively. 
These are sums of entropy over time,  with additional conditioning on the global observations $\ostates$ and weighted by the probability that at the respective time step the process is ``relevant''. 
\[\begin{array}{lll}
D_{(\pi_{\text{comm}},\pi_{\text{act}})} & := & 
\sum_{i \in [N]} G^i(\pi_{\text{comm}},\pi_{\text{act}}) \\&+ &
\sum_{c \in \Acomm} G^c(\pi_{\text{comm}},\pi_{\text{act}}) \\&-&
\sum_{t=1}^\infty H \left(A_t S_t \middle| S_0 A_1 S_1 \dots A_{t-1}S_{t-1} \right),
\end{array}
\]
where
\[\begin{array}{ll}
G^i(\pi_{\text{comm}},\pi_{\text{act}}) & := 
\sum_{t=1}^{\infty} \sum\limits_{o \in \ostates \atop l^i\in\lstates^i }
w'(o,i) p'(o,l^i)  L'(i,o,l^i)\\
G^c(\pi_{\text{comm}},\pi_{\text{act}}) & := 
\sum_{t=1}^{\infty} \sum\limits_{o \in \ostates \atop l^c\in\lstates^c}
w''(o,c) p''(o,l^c)  L''(c,o,l^c)\\
w'(o, i) & :=\sum\limits_{c \in \Acomm, i\not\in c}\picomm(o)(c)\\
w''(o, c) & : =\picomm(o)(c)\\
p'(o,l^i) & :=  \prob(O_{t-1} = o,L_{t-1}^i = l^i)\\
p''(o,l^c) & :=  \prob(O_{t-1} = o,L_{t-1}^c = l^c)\\
L'(i,o,l^i) & := -\sum\limits_{a^i \in \acts^i, o_1^i \in \ostates^i, l_1^i \in \lstates^i}  p'(a^i,o_1^i,l_1^i,o,l^i) \\
L''(c,o,l^c) & := -\sum\limits_{a^c \in \acts^c, o_1^c \in \ostates^c , l_1^c \in \lstates^c} p''(a^c,o_1^c,l_1^c,o,l^c) 
\end{array}
\]
\[
\begin{array}{l}
p'(a^i,o_1^i,l_1^i,o,l^i) := \\
\prob(A_t^i =a^i,O^i_t=o^i_1,L_t^i=l_1^i \mid O_{t-1}=o, L_{t-1}^i=l^i )\cdot\\
\log\big(\prob(A_t^i =a^i,O_t^i=o_1^i,L_t^i=l_1^i \mid O_{t-1}=o,L_{t-1}^i=l^i )\big);\\
p''(a^c,o_1^c,l_1^c,o,l^c) :=\\
\prob(A_t^c =a^c,O_t^c=o_1^c,L_t^c=l_1^c \mid O_{t-1}=o,L_{t-1}^c=l^c )\cdot\\
\log\big(\prob(A_t^c =a^c,O_t^c=o_1^c,L_t^c=l_1^c \mid O_{t-1}=o, L_{t-1}^c=l^c )\big).
\end{array}
\]

Note that if $K=0$, that is, $\Acomm = \{\emptyset\}$, then $\picomm$ is a constant function and no communication between any agents is allowed.  In such case,  $D_{(\pi_{\text{comm}},\pi_{\text{act}})}$ is the total correlation from~\cite{KarabagNT22}.
When $K > 0$,  only the correlation between agents that are outside of what is allowed by $\picomm$ contributes to the value of $D_{(\pi_{\text{comm}},\pi_{\text{act}})}$. 
If $D_{(\pi_{\text{comm}},\pi_{\text{act}})}$ is $0$,  this means that all dependencies between the agents in $\piact$ are covered by the respective agents being allowed by $\picomm$ to communicate at the necessary points in time.

The cost function $D_{(\pi_{\text{comm}},\pi_{\text{act}})}$ has a key property, namely it allows us to provide an upper bound on the performance loss under restricted communication.  This is established in the next theorem.  The proof can be found in~\Cref{sec:app-synthesis}.

\begin{restatable}{theorem}{performanceloss}\label{thm:performanceloss}
 For any cooperative Markov game \(\widehat{M}\) with MMDP $M$,  reach-avoid objective $(\starget,\savoid)$,  and $\pi = (\picomm,\piact) \in \Picommpos(\ostates,K)\times\Piactpos(M)$,  it holds that
\[\begin{array}{l}
    \prob_{M_{\piact}}((\neg\savoid) \mathcal{U} \starget) - \prob_{\widehat M_{\pi}}((\neg\wsavoid) \mathcal{U} \wstarget) \leq\\
    \sqrt{1-\exp\left(-D_{(\pi_{\text{comm}},\pi_{\text{act}})}  \right)}.
\end{array}
\]
\end{restatable}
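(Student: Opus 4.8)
The plan is to relate the two reach-avoid probabilities via the total variation distance between two distributions over infinite paths, and then bound that total variation distance by the square-root expression using Pinsker's inequality together with an identification of $D_{(\pi_{\text{comm}},\pi_{\text{act}})}$ as (an upper bound on) a relative entropy between these two path distributions. Concretely, the Markov chain $M_{\piact}$ induces a distribution $\mu$ over infinite paths $s_0 a_1 s_1 a_2 \dots$ of the MMDP, and the Markov chain $\widehat M_{\pi}$ induces a distribution $\widehat\mu$; by projecting away the imaginary-copy components, $\widehat\mu$ pushes forward to a distribution $\nu$ on the same path space as $\mu$. Since the events ``$(\neg\savoid)\,\mathcal{U}\,\starget$'' and ``$(\neg\wsavoid)\,\mathcal{U}\,\wstarget$'' are measurable events in these respective path spaces that correspond to each other under this projection, we get
\[
\prob_{M_{\piact}}((\neg\savoid)\,\mathcal{U}\,\starget) - \prob_{\widehat M_{\pi}}((\neg\wsavoid)\,\mathcal{U}\,\wstarget) \;\leq\; \delta_{TV}(\mu,\nu),
\]
where $\delta_{TV}$ is total variation distance. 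This first step is essentially the same reduction as in~\cite{KarabagNT22}; the only subtlety is checking that the projection is well-defined as a coupling and that the target/avoid events match up, which follows from the definition of $\widehat U$ and the lifting of $\starget,\savoid$ to $\wstarget,\wsavoid$.

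Next I would invoke Pinsker's inequality in the form $\delta_{TV}(\mu,\nu) \leq \sqrt{1 - \exp(-D_{KL}(\mu \,\|\, \nu))}$ (the Bretagnolle--Huber bound, which is the tightest elementary bound of this exponential shape), so it suffices to show $D_{KL}(\mu\,\|\,\nu) \leq D_{(\pi_{\text{comm}},\pi_{\text{act}})}$. By the chain rule for KL divergence over the time steps, $D_{KL}(\mu\,\|\,\nu) = \sum_{t=1}^\infty \mathbb{E}_{\mu}\big[D_{KL}\big(\mu(A_tS_t \mid \cdot) \,\|\, \nu(A_tS_t \mid \cdot)\big)\big]$, where the conditioning is on the history $S_0A_1\dots A_{t-1}S_{t-1}$. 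In $\mu$, the one-step conditional is just $\piact$ composed with $P$; in $\nu$, it is the marginal over $c \sim \picomm$ and over the imaginary copies of the factored, partially-independent update described in the definition of $\widehat M_\pi$. The key combinatorial/information-theoretic fact to establish is that for each $t$,
\[
\mathbb{E}_{\mu}\big[D_{KL}(\mu \,\|\, \nu)\big] \;\leq\; \sum_{i\in[N]} (\text{$t$-th term of } G^i) + \sum_{c\in\Acomm}(\text{$t$-th term of }G^c) - H(A_tS_t \mid S_0A_1S_1\dots A_{t-1}S_{t-1}),
\]
i.e. the per-step KL is dominated by: a sum of ``local entropies'' $L'(i,o,l^i)$, weighted by $w'(o,i) = \picomm(o)(\{c : i\notin c\})$ and by the occupancy $p'(o,l^i)$, capturing the cost of agent $i$ acting on a stale imaginary copy whenever $i$ is not in the communicating coalition; plus analogous terms $L''(c,o,l^c)$ for the coalitions; minus the true joint conditional entropy $H(A_tS_t\mid\text{history})$, which is subtracted because the genuine randomness of the joint step is not a source of divergence. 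Summing over $t$ gives exactly $D_{(\pi_{\text{comm}},\pi_{\text{act}})}$.

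The main obstacle will be this last step: carefully decomposing the one-step KL divergence between the true joint update and the imaginary-play update into the agent-wise and coalition-wise entropy contributions. This requires (i) writing the $\nu$ one-step kernel as the mixture $\sum_{c}\picomm(o)(c)(\cdots)$ with the factored form from the definition of $\widehat P$, (ii) using convexity of KL divergence in its second argument (or the log-sum inequality) to push the mixture over $c$ outside, (iii) for each fixed $c$, bounding the KL between the product kernel (agents in $c$ coordinated, agents in $\overline c$ independent, imaginary copies resampled from the correct marginals) and the true kernel by a sum of conditional entropies — here one uses that KL divergence to a product-of-marginals equals a sum of (multi-information-like) terms, and that each such term is bounded by the relevant conditional entropy since entropy upper-bounds any such divergence against the true distribution — and (iv) matching the resulting conditioning structure (conditioning on $O_{t-1}=o$, on $L_{t-1}^i=l^i$ or $L_{t-1}^c=l^c$) to the precise definitions of $p'$, $p''$, $L'$, $L''$. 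I expect the bookkeeping between the conditioning on full histories versus on just $(O_{t-1},L_{t-1}^i)$ — justified because $\piact$ and the kernels are positional and the agents' transition functions $P^i$ are independent — to be the delicate part, and the $-H(A_tS_t\mid\text{history})$ correction term to be what makes the inequality tight enough to close. Once the per-step bound is in hand, summing over $t$ and chaining with the two displayed inequalities completes the proof.
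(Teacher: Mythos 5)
Your proposal follows essentially the same route as the paper's proof: the paper first bounds the performance loss by the total variation distance between the two path distributions and applies the Bretagnolle--Huber inequality (its Lemma~1), and then bounds the resulting KL divergence by $D_{(\pi_{\text{comm}},\pi_{\text{act}})}$ exactly as you outline --- expanding the KL over time steps, writing the restricted-communication kernel as a mixture over communication actions $c$ drawn from $\picomm(o_{t-1})$, applying the log-sum inequality to pull the mixture outside the logarithm, and identifying the surviving terms with $G^c$, $G^i$ and $-H$ by marginalization. The one point to tighten is your step (iii): the paper does not bound the cross-entropy terms \emph{by} conditional entropies (Gibbs' inequality goes the wrong way for that), but rather shows they \emph{equal} the conditional entropies appearing in $G^i$ and $G^c$ because the imaginary-play kernels coincide with the true conditionals given $(O_{t-1},L_{t-1}^i)$ and $(O_{t-1},L_{t-1}^c)$ respectively.
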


Due to the form of $D_{(\pi_{\text{comm}},\pi_{\text{act}})}$, in our method for computing a pair of communication and action policies, described in the rest of the section, we will use a proxy function.
 
\subsection{Policy Synthesis}

Our approach proceeds in two steps. 

\subsubsection{Optimistic Optimal Value for Reach-Avoid Probability}
As we require the action policy to be optimal under unrestricted communication,  in the first step, we compute the optimal value $v^*(M,\starget,\savoid)$ for the reach-avoid probability assuming unrestricted communication. 

\subsubsection{Minimizing the Cost of Communication}
In the second step, we use the value $v^*(M,\starget,\savoid)$ as a threshold in the computation of a pair $(\picomm,\piact)$ of policies.
This threshold constrains the action policy $\piact$ to be optimal under unrestricted communication.
Additionally,  we formulate an objective function based on the cost function $D_{(\pi_{\text{comm}},\pi_{\text{act}})}$.
To this end,  we provide a proxy to $D_{(\pi_{\text{comm}},\pi_{\text{act}})}$,  expressed in terms of occupancy measures.
%
The term $\sum_{t=1}^\infty H \left(A_t S_t \middle| S_0 A_1 S_1 \dots A_{t-1}S_{t-1} \right)$ can be expressed in terms of occupancy measure using existing results~\cite{BiondiLNW14}.
For the other two terms in $D_{(\pi_{\text{comm}},\pi_{\text{act}})}$, we provide upper bounds.

Consider a pair of joint policies \(\pi = (\pi_{comm},\pi_{act})\) and the  Markov chain $M_{\piact}$ induced from the MMDP $M$.
This Markov chain generates a stationary process \(X\), which is the joint path of the agents. 
The entropy \(H\left(X\right)\) of \(X\) has a closed form expression in terms of \(\nu_{s,a}\).

\begin{restatable}{proposition}{expressionentropy}\label{prop:expressH}
The entropy of the joint state--action process until reaching the target can be expressed in terms of the state-action occupancy measure \(\nu_{s,a}\) as
$$
\begin{array}{l}
H(S_0) + 
\sum_{t=1}^{\infty} 
H \left(A_t  S_t  \middle| S_0 A_1 S_1 \dots A_{t-1} S_{t-1}  \right) =\\ 
- \left(\sum_{s,a'}  \nu_{s,a'}  \cdot \log \left( \frac{\nu_{s,a'}}{\sum_{b}{\nu_{s,b}}} \right)\right)\\
- \left(\sum_{s,a',s'}  \nu_{s,a'}  \cdot P(s,a',s') \cdot  \log P(s,a',s')\right).
\end{array}
$$ 
\end{restatable}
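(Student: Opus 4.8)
The plan is to derive the claimed closed form by combining two classical facts: (i) the chain-rule decomposition of the entropy of a stochastic process into per-step conditional entropies, and (ii) the occupancy-measure characterization of the entropy of an absorbing Markov chain from~\cite{BiondiLNW14}, namely $H(C) = \sum_{s} L(s)\cdot\nu_s$ with $L(s) = -\sum_{s'} P(s,s')\log P(s,s')$. The twist here is that the process $X$ whose entropy we want is not the state process $(S_t)$ of $M_{\piact}$ but the interleaved state--action process, so I first need to set up the right Markov chain on an enlarged state space and then apply the Biondi et al.\ result to it.

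First I would observe that the left-hand side is, by definition of the entropy of a stochastic process and the chain rule for entropy (as recalled in the Preliminaries for Markov chains), exactly $H(S_0, A_1, S_1, A_2, S_2, \dots)$, i.e.\ the entropy $H(X)$ of the joint state--action process generated by $M_{\piact}$. Next I would note that $(S_0, A_1, S_1, A_2, \dots)$ is itself a trajectory of an absorbing Markov chain $C'$ whose states are of two kinds: ``state nodes'' $s \in \states$ and ``state--action nodes'' $(s,a)$. From a state node $s$ the chain moves to $(s,a)$ with probability $\piact(s,a)$; from a state--action node $(s,a)$ it moves to $s'$ with probability $P(s,a,s')$. (Absorption of $M_{\piact}$ on reaching the target carries over.) The occupancy measure of the state node $s$ in $C'$ equals $\nu_s = \sum_b \nu_{s,b}$, and the occupancy measure of the node $(s,a)$ equals $\nu_{s,a}$; this is immediate from $\nu_{s,a} = \piact(s,a)\cdot\nu_s$, already noted in the Preliminaries.

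Then I would apply $H(C') = \sum_{\text{nodes } v} L(v)\cdot\nu_v$, splitting the sum over the two types of nodes. For a state node $s$, the local entropy is $L(s) = -\sum_{a}\piact(s,a)\log\piact(s,a) = -\sum_a \frac{\nu_{s,a}}{\nu_s}\log\frac{\nu_{s,a}}{\nu_s}$, so $\sum_s L(s)\nu_s = -\sum_{s,a}\nu_{s,a}\log\frac{\nu_{s,a}}{\sum_b\nu_{s,b}}$, the first term on the right-hand side. For a state--action node $(s,a)$, the local entropy is $L(s,a) = -\sum_{s'}P(s,a,s')\log P(s,a,s')$, so $\sum_{s,a}L(s,a)\nu_{s,a} = -\sum_{s,a,s'}\nu_{s,a}\,P(s,a,s')\log P(s,a,s')$, the second term. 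Summing the two gives exactly the right-hand side of the proposition, which completes the argument.

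The main obstacle I anticipate is not any single calculation but the care needed in the bookkeeping around absorption and finiteness: I must make sure $C'$ is genuinely absorbing (so its entropy is finite and the Biondi et al.\ formula applies), that the occupancy measures $\nu_s$, $\nu_{s,a}$ are the ones induced by $\piact$ on the relevant sub-chain ``until reaching the target'' consistently with how the left-hand side is defined, and that the convention $0\log 0 = 0$ is invoked so that actions/transitions of probability zero contribute nothing. A secondary subtlety is justifying that $H(S_0, A_1, S_1, \dots)$ as an entropy of a process equals $H(C')$ as defined for Markov chains in the Preliminaries — this is just matching the two chain-rule expressions term by term, but it should be stated explicitly since the indexing (a state step followed by an action step) is slightly nonstandard.
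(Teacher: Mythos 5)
Your proof is correct, and it reaches the identity by a genuinely different (though closely related) route from the paper's. The paper does not pass through an auxiliary chain: it computes each term $H(A_tS_t \mid S_0A_1S_1\ldots A_{t-1}S_{t-1})$ directly, using the Markov property to collapse the conditioning to $S_{t-1}$, which yields $\sum_s \prob(S_{t-1}=s)\cdot L_M(s)$ with the combined local entropy $L_M(s) = -\sum_{a',s'}\piact(s)(a')P(s,a')(s')\log\bigl(\piact(s)(a')P(s,a')(s')\bigr)$; summing over $t$ turns $\sum_{t}\prob(S_{t-1}=s)$ into $\nu_s$, and the two displayed terms then emerge by splitting the logarithm of the product and substituting $\nu_{s,a'}=\nu_s\cdot\piact(s)(a')$ (together with the observation that $H(S_0)=0$ for a deterministic initial state). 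In effect the paper re-derives the occupancy-measure entropy formula for the state--action process from scratch, whereas you obtain it by embedding that process into an interleaved two-phase Markov chain $C'$ and invoking the formula $H(C')=\sum_v L(v)\cdot\nu_v$ of Biondi et al.\ as a black box, with the split into the two terms coming for free from the two node types. Your route is shorter and makes the structure (policy entropy plus transition entropy) more transparent, but it carries the extra proof obligations you already identified: that $H(S_0,A_1,S_1,\ldots)=H(C')$ despite the redundant copy of $S_t$ inside the node $(S_t,A_{t+1})$ (redundant components add no entropy, and the finer chain-rule decomposition of $H(C')$ regroups into the coarser one on the left-hand side), that the occupancy measures of the two node types are exactly $\nu_s$ and $\nu_{s,a}$ (immediate from $\nu_{s,a}=\piact(s,a)\cdot\nu_s$ in the preliminaries), and that $C'$ inherits absorption so that the cited formula applies with a finite value. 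Once these routine checks are written out, your argument is a complete and valid alternative to the paper's direct computation.
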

	
The path of a single agent $i$ or a group of agents $c$ follows a hidden Markov model where \(X\) is the underlying process and \(X^i\), or  \(X^c\), respectively, is the observed process. 
Therefore, the terms $G^i(\pi_{\text{comm}},\pi_{\text{act}})$ and 
$G^c(\pi_{\text{comm}},\pi_{\text{act}})$ do not have  closed-form expressions based on occupancy measures. 
Instead, we employ stationary processes which induce the same occupancy measures,  and derive expressions that are  upper bounds for $G^i(\pi_{\text{comm}},\pi_{\text{act}})$ and 
$G^c(\pi_{\text{comm}},\pi_{\text{act}})$.

As $s^i = \langle o^i, l^i\rangle$ for some $o^i\in\ostates^i$, $l^i \in \lstates^i$,  we write $\nu_{o^i,l^i,a^i}$ instead of 
$\nu_{s^i,a^i}$. 
For $o \in \ostates$ and $c \in \Acomm$, we define 
$\nu_{o,c} := \nu_{o} \cdot \picomm(o)(c) = (\sum_{l \in \lstates, a\in \acts} \nu_{o,l,a}) \picomm(o)(c)$.

For each agent $i \in [N]$ and each set of agents $c \in \Acomm$, we consider the stationary process that induces the same occupancy measures $\nu_{o,l^i,a^i}$ and  $\nu_{o,l^c,a^c}$, respectively,  as the joint policy. 
We establish the following proposition.

\begin{restatable}{proposition}{expressionsgs}\label{prop:approxG}
Let
$$
\begin{array}{l}
\Bar{G^i} (\pi_{\text{comm}},\pi_{\text{act}}) = \\
- \left(\sum_{o,l^i,a^i}  \nu_{o,l^i,a^i}\cdot  w'(o,i) \cdot \log \left( \frac{\nu_{o,l^i,a^i}}{\sum_{b^i}{\nu_{o,l^i,b^i}}} \right)\right) \\
- \left(\sum_{o,l^i,a^i,o_1^i,l^i_1}  \nu_{o,l^i,a^i} \cdot  w'(o,i)  \cdot
h'(o^i,l^i,a^i, o_1^i,l^i_1) \right),\\
 \Bar{G^c} (\pi_{\text{comm}},\pi_{\text{act}}) = \\
- \left(\sum_{o,l^c,a^c}  \nu_{o,l^c,a^c}  \cdot  w''(o,c) \cdot \log \left( \frac{\nu_{o,l^c,a^c}}{\sum_{b^c}{\nu_{o,l^c,b^c}}} \right)\right)\\
- \left(\sum_{o,l^c,a^c,o_1^c,l^c_1}  \nu_{o,l^c,a^c} \cdot  w''(o,c) \cdot
h''(o^c,l^c,a^c,o_1^c,l^c_1) \right),
\end{array}
$$
$
\begin{array}{l}
w'(o,i) = 
\sum\nolimits_{c \in \Acomm, i\not\in c}\frac{\nu_{o,c}}{\sum\nolimits_{c' \in \Acomm}\nu_{o,c'}},\\
w''(o,c) = 
\frac{\nu_{o,c}}{\sum\nolimits_{c' \in \Acomm}\nu_{o,c'}},
\end{array}
$\\
$
\begin{array}{l}
h'(o^i,l^i,a^i, o_1^i,l^i_1) := \\
P^i(o^i,l^i,a^i)(o_1^i,l^i_1) \cdot  \log P^i(o^i,l^i,a^i)(o_1^i,l^i_1),\\
h''(o^c,l^c,a^c,o_1^c,l^c_1):= \\ 
P^c(o^c,l^c,a^c)(o_1^c,l^c_1) \cdot  \log P^c(o^c,l^c,a^c)(o_1^c,l^c_1),\\
P^c(\langle o^j\rangle_{j\in c},\langle l^j\rangle_{j\in c},\langle a^j\rangle_{j\in c})(\langle o_1^j\rangle_{j\in c},\langle l_1^j\rangle_{j\in c})=\\ \Pi_{j \in c}P^j( o^j, l^j,a^j)( o_1^j, l_1^j).
\end{array}
$

Then,  it holds that $G^i(\pi_{\text{comm}},\pi_{\text{act}}) \leq \Bar{G^i} (\pi_{\text{comm}},\pi_{\text{act}})$ and 
$G^c(\pi_{\text{comm}},\pi_{\text{act}}) \leq \Bar{G^c} (\pi_{\text{comm}},\pi_{\text{act}})$.

 \end{restatable}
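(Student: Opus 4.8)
The plan is to prove the two inequalities $G^i \le \bar G^i$ and $G^c \le \bar G^c$ in parallel, since the argument for the coalition $c$ is structurally identical to that for a single agent $i$ (replace $\states^i,\acts^i,P^i$ by the product objects $\states^c,\acts^c,P^c$), so I would carry out the single-agent case in detail and then remark that the coalition case follows mutatis mutandis. The starting point is to unfold the definition of $G^i(\picomm,\piact)$: it is a sum over $t$ of terms that are weighted (by $w'(o,i)$ and $p'(o,l^i)=\prob(O_{t-1}=o,L_{t-1}^i=l^i)$) sums of $L'(i,o,l^i)$, and $L'$ is, by inspection of its defining formula, exactly a conditional entropy $H(A_t^iO_t^iL_t^i \mid O_{t-1}=o,L_{t-1}^i=l^i)$ of the next local state-action triple of agent $i$ given its current local observation-state pair. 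So $G^i = \sum_{t\ge 1}\sum_{o,l^i} w'(o,i)\,\prob(O_{t-1}=o,L_{t-1}^i=l^i)\,H(A_t^iO_t^iL_t^i\mid O_{t-1}=o,L_{t-1}^i=l^i)$.

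The key step is then a two-part reduction. First, the marginal process of agent $i$'s local coordinate $X^i=(O^i_t,L^i_t,A^i_t)_t$ is \emph{not} Markov — it is a hidden Markov process obtained by projecting the joint Markov chain $M_\piact$ — so $G^i$ as written is a sum of one-step conditional entropies of a non-Markov process, which does not directly have a closed form in occupancy measures. The fix, as the surrounding text signals, is to pass to the stationary process that has the same one-step marginals $\prob(O_{t-1}=o,L^i_{t-1}=l^i)$ and the same occupancy measures $\nu_{o,l^i,a^i}$ but in which the transition of agent $i$ is governed purely by $P^i$; concretely, I would use the identity $\prob(A_t^i=a^i\mid O_{t-1}=o,L^i_{t-1}=l^i)=\sum_{o',l',\ldots}(\text{joint action law})$ and the fact that, conditioned on agent $i$'s current true local state $\langle o^i,l^i\rangle$ and chosen action $a^i$, the next local state $\langle o_1^i,l_1^i\rangle$ is distributed \emph{exactly} as $P^i(o^i,l^i,a^i)$ regardless of the rest of the system. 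This replaces the $(o_1^i,l_1^i)$-dependent part of $L'$ by the $h'$ term built from $P^i$, giving an \emph{equality} for the "transition" half, and the $\log(\nu_{o,l^i,a^i}/\sum_{b^i}\nu_{o,l^i,b^i})$ term matches the "action-choice" half of the conditional entropy once we sum over $t$ and fold $\sum_{t\ge1}\prob(O_{t-1}=o,L^i_{t-1}=l^i) = \nu_{o,l^i}$ (occupancy measure) — here one has to be a little careful that $w'(o,i)$ depends only on $o$ so it factors out of the $t$-sum cleanly, and that $\nu_{o,l^i,a^i}=\picomm$-independent occupancy of $M_\piact$ indeed equals $\sum_t\prob(O_{t-1}=o,L^i_{t-1}=l^i,A^i_{t-1}=a^i)$ up to the index shift, which is where the $t$ versus $t-1$ bookkeeping must be pinned down.

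The reason we get an \emph{inequality} rather than equality — and this is the main obstacle — is the action-choice half: $G^i$ sums $\sum_t \prob(O_{t-1}=o,L^i_{t-1}=l^i)\,H(A^i_t\mid O_{t-1}=o,L^i_{t-1}=l^i)$, i.e. a time-indexed average of conditional entropies of agent $i$'s action given its \emph{local} history-free state, whereas the proxy uses $-\sum \nu_{o,l^i,a^i}\log(\nu_{o,l^i,a^i}/\sum_b\nu_{o,l^i,b^i})$, which is the entropy of the action distribution \emph{aggregated over time} (i.e. with respect to the time-averaged state-action measure). By concavity of entropy (equivalently, the log-sum inequality / Jensen applied to $-x\log x$), the entropy of a mixture of distributions is at least the mixture of the entropies, so aggregating the per-time-step conditional action laws into one time-averaged law can only \emph{increase} the entropy — this is precisely what yields $G^i \le \bar G^i$. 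I would spell this out by writing $\prob(A^i_t=a^i\mid O_{t-1}=o,L^i_{t-1}=l^i)$ as the $t$-th mixture component with weight $\prob(O_{t-1}=o,L^i_{t-1}=l^i)/\nu_{o,l^i}$, noting $\sum_{a^i}\nu_{o,l^i,a^i}=\nu_{o,l^i}$ wait — more precisely the mixture weights normalize correctly — and invoking concavity; the transition half contributes with equality and the $w'$ weights are carried along unchanged throughout. Finally I would note the coalition version is obtained by the substitution described above, using that $P^c=\prod_{j\in c}P^j$ so the same "next local state distributed exactly as $P^c$ given current local state and action" property holds for the coalition viewed as a unit, completing the proof.
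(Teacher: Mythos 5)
Your proposal is correct and follows essentially the same route as the paper's proof: factor the conditional law of $(A_t^i,O_t^i,L_t^i)$ given $(O_{t-1},L_{t-1}^i)$ into the action-selection part and the transition part, observe that the latter is exactly $P^i$ (hence contributes with equality), and bound the time-indexed action-entropy terms by the entropy of the time-aggregated occupancy-measure distribution. If anything, you are more precise than the paper about where the inequality arises — the paper states the bound at the level of each fixed $t$ (its equation for $L(i,o,l^i,t)$), which is only valid after summing over $t$ with the weights $\prob(O_{t-1}=o,L_{t-1}^i=l^i)$, exactly the Jensen/concavity-over-the-time-mixture step you identify.
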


Combining \Cref{prop:expressH} and \Cref{prop:approxG}, we obtain an upper bound $\Bar{D}_{(\pi_{\text{comm}},\pi_{\text{act}})}$ on \(D_{(\pi_{\text{comm}},\pi_{\text{act}})}\) based on occupancy measures. That is, we have $D_{(\pi_{\text{comm}},\pi_{\text{act}})} \leq\Bar{D}_{(\pi_{\text{comm}},\pi_{\text{act}})}$ for
\begin{equation}\label{eq:approxcost}
\begin{array}{ll}
\Bar{D}_{(\pi_{\text{comm}},\pi_{\text{act}})}  &: =   
- H(X) + \sum_{i \in [N]} \bar{G}^i(\pi_{\text{comm}},\pi_{\text{act}}) \\&
+ \sum_{c \in \Acomm} \bar{G}^c(\pi_{\text{comm}},\pi_{\text{act}}) .
\end{array}
\end{equation}

\paragraph{Cost optimization problem}
Using the function $\Bar{D}$ defined in  \cref{eq:approxcost}, we formulate bellow, in (\ref{opt}),  the optimization problem with decision variables \(x_{o,l,a}\) and \(x_{o,c}\), representing the occupancy measures for each joint public state-local state-action triplet \((o,l,a)\) and for each joint public state-communication action pair \((o,c)\), respectively. Through appropriate marginalization, \(x_{o,l^i,a^i}\) and \(x_{o,l^c,a^c}\) represent the public state-local state-action occupancy measures for individual agents and groups of agents, respectively.
For synthesis, we assume that the occupancy measure is finite for all states $s \in \states \setminus (\states_{avoid}\cup \states_{target} )$. 
We add absorbing sink-states and corresponding actions to \(\widehat{M}\),  denoted with \( (o_\alpha,l_\alpha) = \left( (o_\alpha^1,l_\alpha^1), \dots, (o_\alpha^n,l_\alpha^n) \right) \) and \( a_\alpha = \left( a_\alpha^1, \dots, a_\alpha^n \right) \), respectively. These states represent the end of the game concerning the reach-avoid objective, that is, for all \((o,l) \in (\starget \cup \savoid)\) we have \(P((o,l),a_\alpha)((o_\alpha,l_\alpha))=1\).

We consider the optimization problem (\ref{opt}) with the objective (\ref{eq:optproblem}) and constraints (\ref{opt-defgi}) -- (\ref{opt-valid}) given below.
The value $v^*$ computed in the first step is used to constrain from below the reach-avoid probability of the action policy. 
Additionally, the value of  the function $\Bar{D}$ must be minimized.

\begin{subequations}
    \label{opt}
\begin{align}
   & \min_{ (x_{o,l,a}, x_{o,c})}{\bar{d} = -h + \sum\limits_{i \in [N]} g^i + \sum\limits_{c \in \Acomm} g^c} \ \ \ \text{s.t.} \label{eq:optproblem}\\
   & \quad \quad
   g^i = \ldots \; \forall i \in [N] 
   \quad \text{/* encodes $\Bar{G^i}$ */} \label{opt-defgi}\\
   & \quad \quad
   g^c = \ldots \;\forall c \in \Acomm
   \quad \text{/* encodes $\Bar{G^c}$ */}\label{opt-defgc}\\
   & \quad\quad
    h = \ldots 
    \quad \text{/* encodes $H(X)$ */}\label{opt-defgh}\\
   & \quad \quad
   w'(o,i) =  \ldots , w''(o,c) =  \ldots
   \ \ \forall o , i ,c \label{opt-defw}\\
   & \quad \quad
   v^* \leq \ldots  \quad \text{/* reach-avoid probability $v$ */}\label{opt-defv}\\
     & \quad \quad
     \sum\nolimits_{a \in \acts \cup \{a_\alpha\}}{x_{o,l,a}} = \ldots
     \ \ \forall (o,l) \in \states\label{opt-flow}\\
	\nonumber
   & \quad \quad
   x_{o,l, a}  \geq 0, \ \    x_{o_\alpha,l_\alpha,a}  = 0 
    \ \ \forall\left(o, l\right) \in \states, a \in \acts\cup\{a_\alpha\}\\
   \nonumber
   & \quad \quad
   x_{o, c}  \geq 0, \ \ x_{o_\alpha,c}  = 0  
   \ \ \forall o \in \ostates, c \in \Acomm\\
   & \quad \quad \sum\nolimits_{l \in \lstates, a \in \acts}{x_{o,l,a}} = \sum\nolimits_{c \in \Acomm}x_{o,c} \ \ \forall o \in \ostates \label{opt-valid}
\end{align}
\end{subequations}
%

\begin{figure*}[t!]
    \centering
    \begin{subfigure}[t]{0.23\textwidth}
    \centering 
        \includegraphics[scale=0.075]{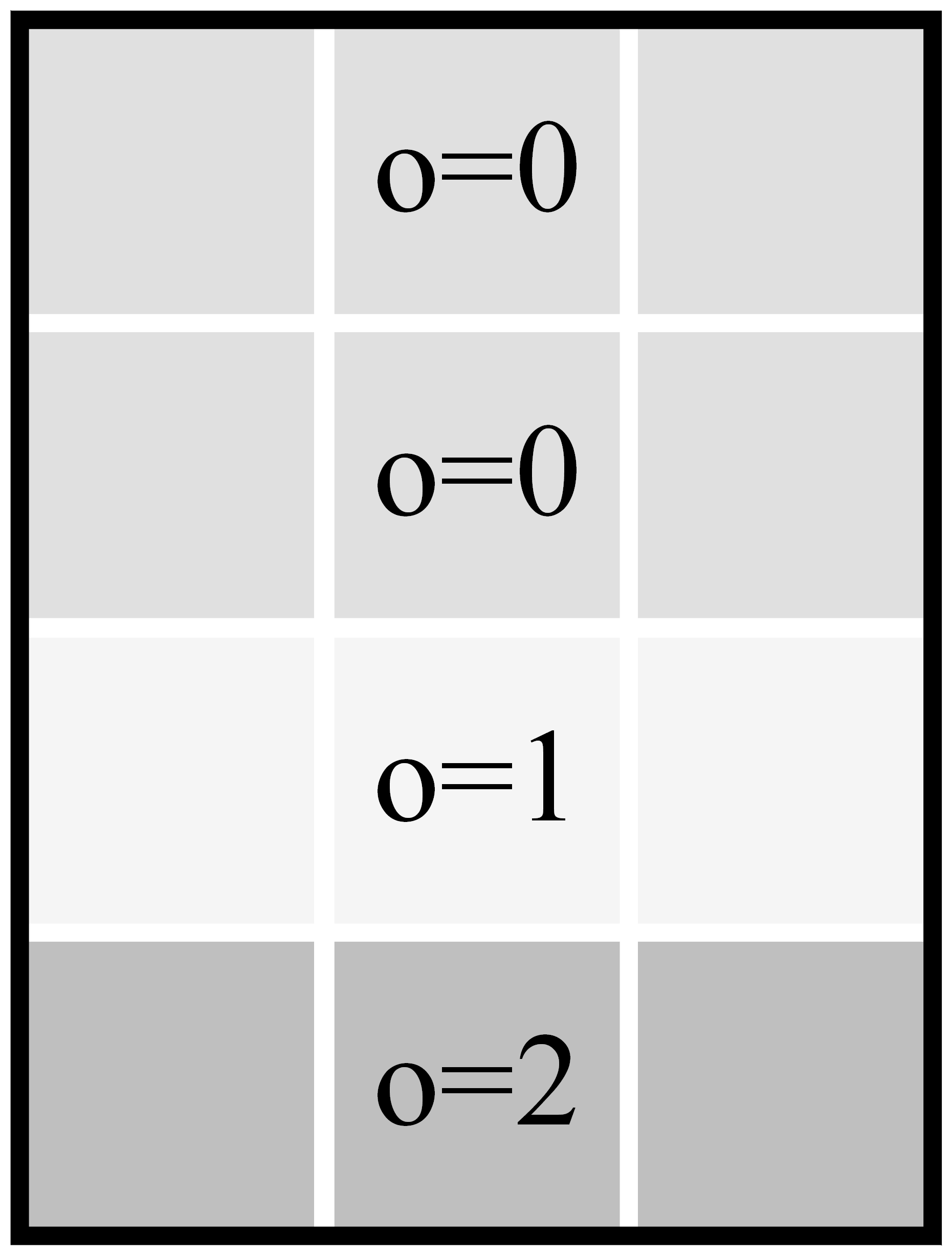}
    \caption{Public information\\labeled regions for Scenario \#1.}
    \label{fig:Env_1_public_region}
    \end{subfigure}
    \hspace{0.3cm}
    \begin{subfigure}[t]{0.23\textwidth}
    \centering 
        \includegraphics[scale=0.075]{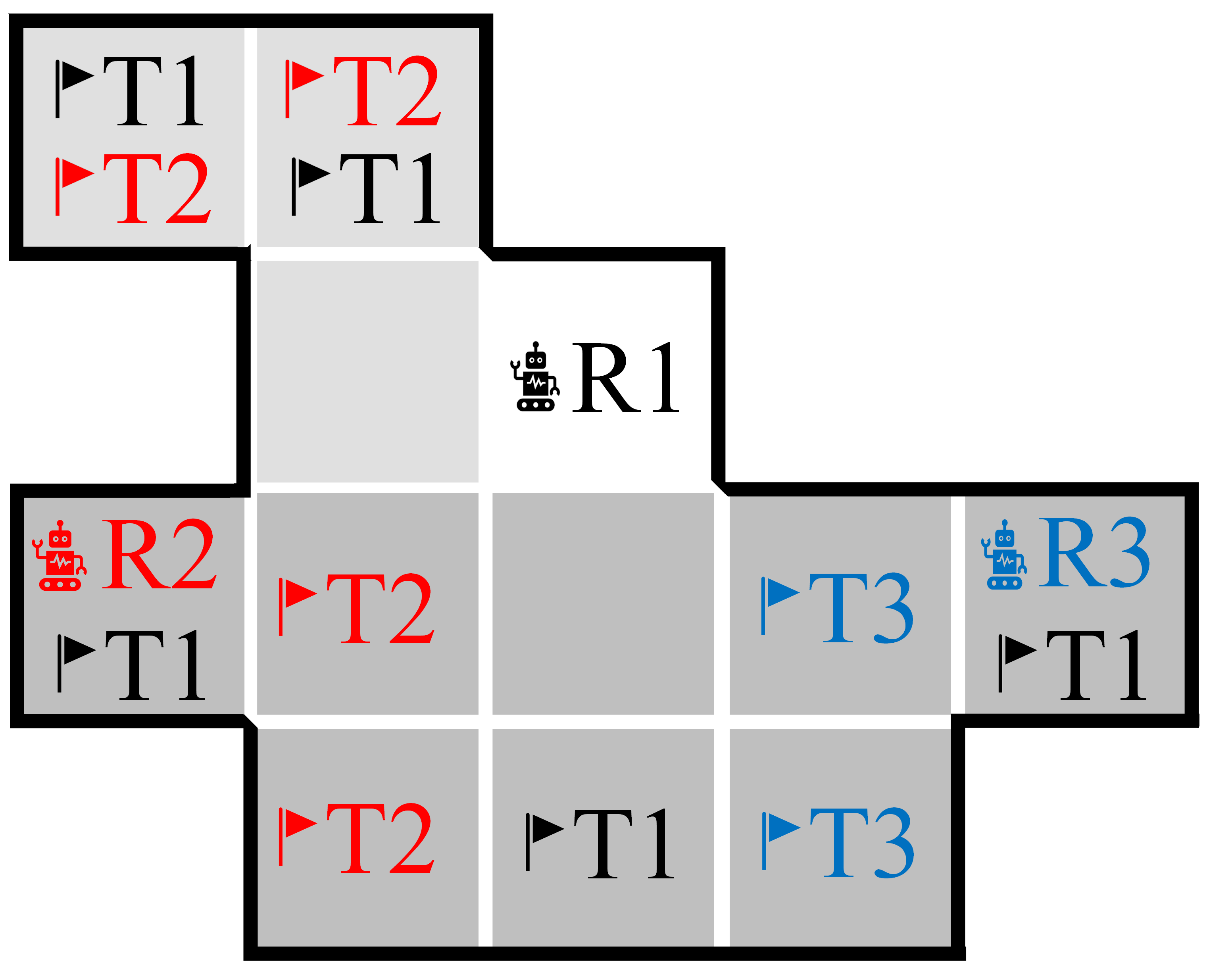}
    \caption{Environment for\\ Scenario \#2.}
    \label{fig:Environment_2}
    \end{subfigure}
    \hspace{0.3cm}
    \begin{subfigure}[t]{0.23\textwidth}
    \centering 
   \includegraphics[scale=0.075]{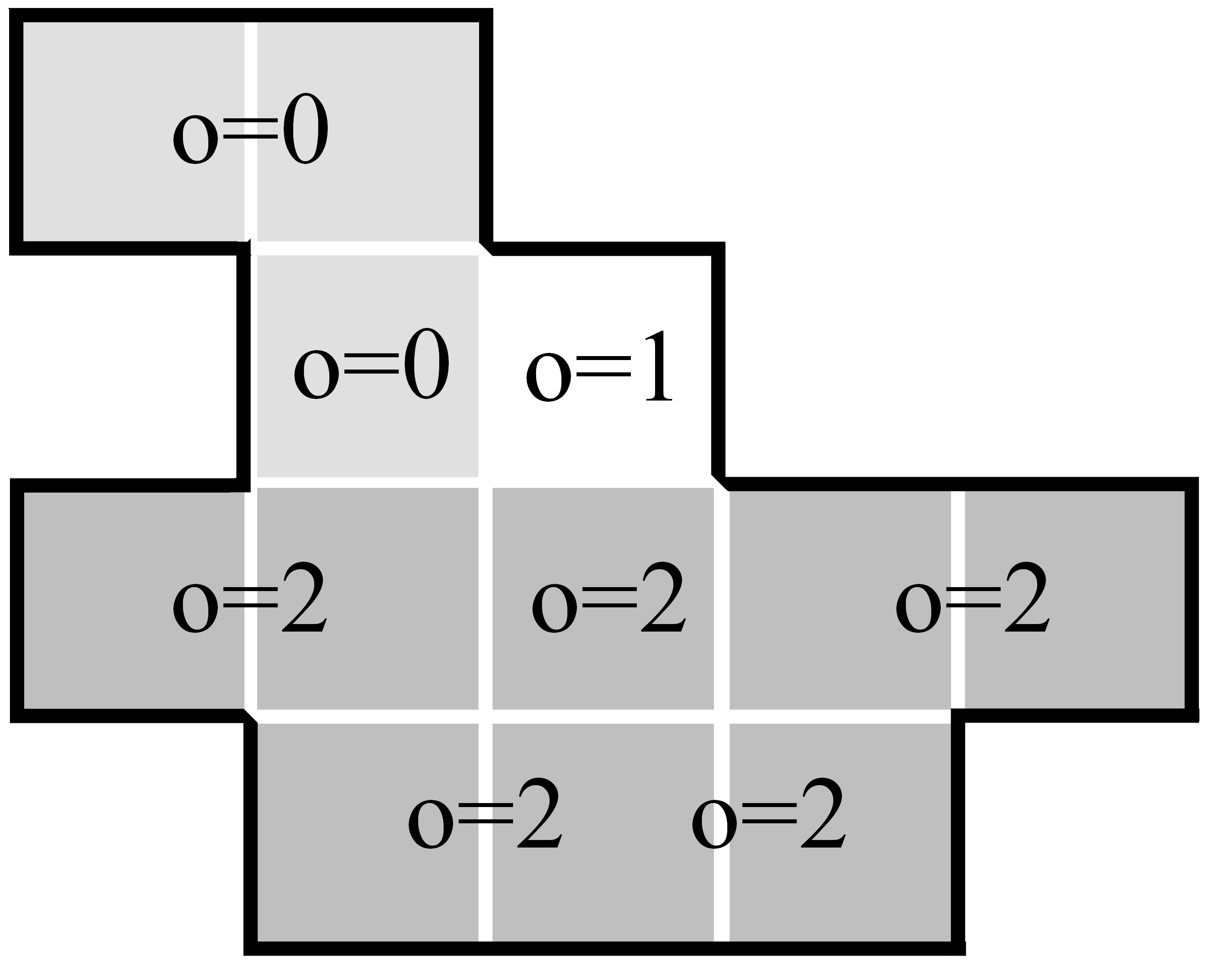}
         \caption{ Public information\\labeled regions for Scenario \#2.}
         \label{fig:Environment_2_region}
    \end{subfigure}
    \vspace{.2cm}

    \begin{subfigure}[t]{0.23\textwidth}
    \centering 
   \includegraphics[scale=0.075]{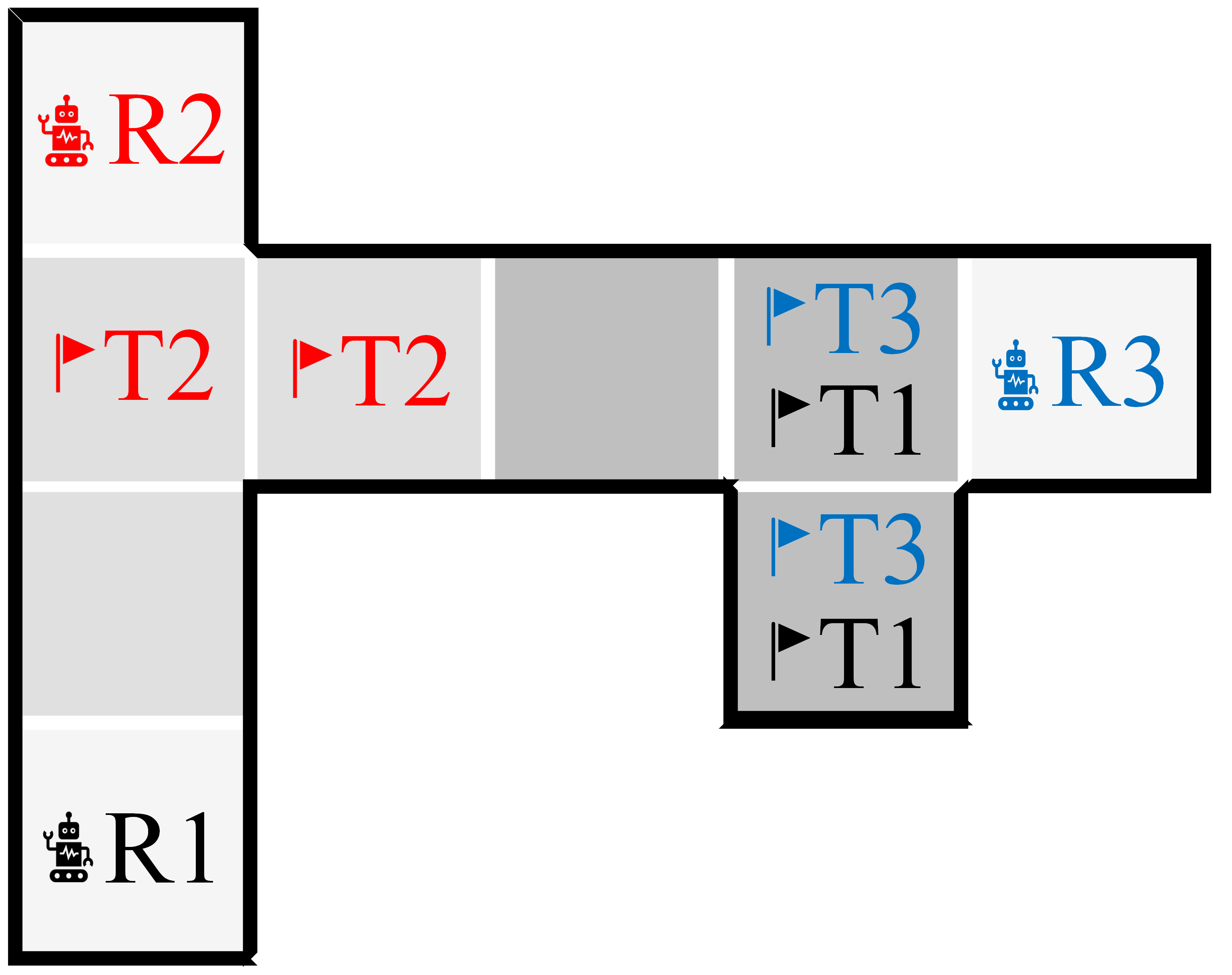}
         \caption{ Environment for\\ Scenario \#3.}
         \label{fig:subfig_map_Env_3}
    \end{subfigure}
    \begin{subfigure}[t]{0.23\textwidth}
    \centering 
        \includegraphics[scale=0.075]{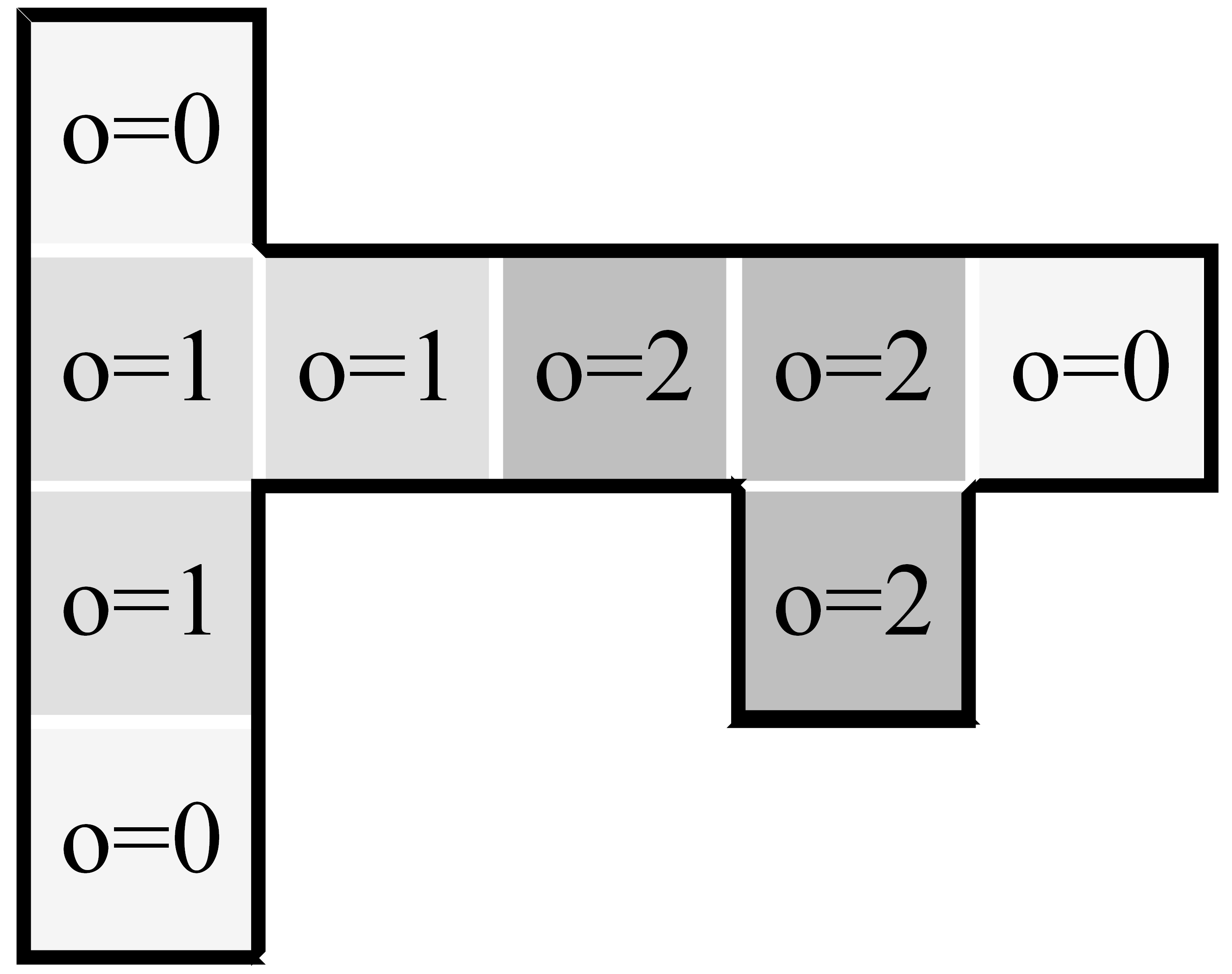}
         \caption{Public information \\labeled regions for Scenario \#3. }
         \label{fig:subfig_region_Env_3}
    \end{subfigure}
    \begin{subfigure}[t]{0.23\textwidth}
    \centering 
	\includegraphics[scale=0.075]{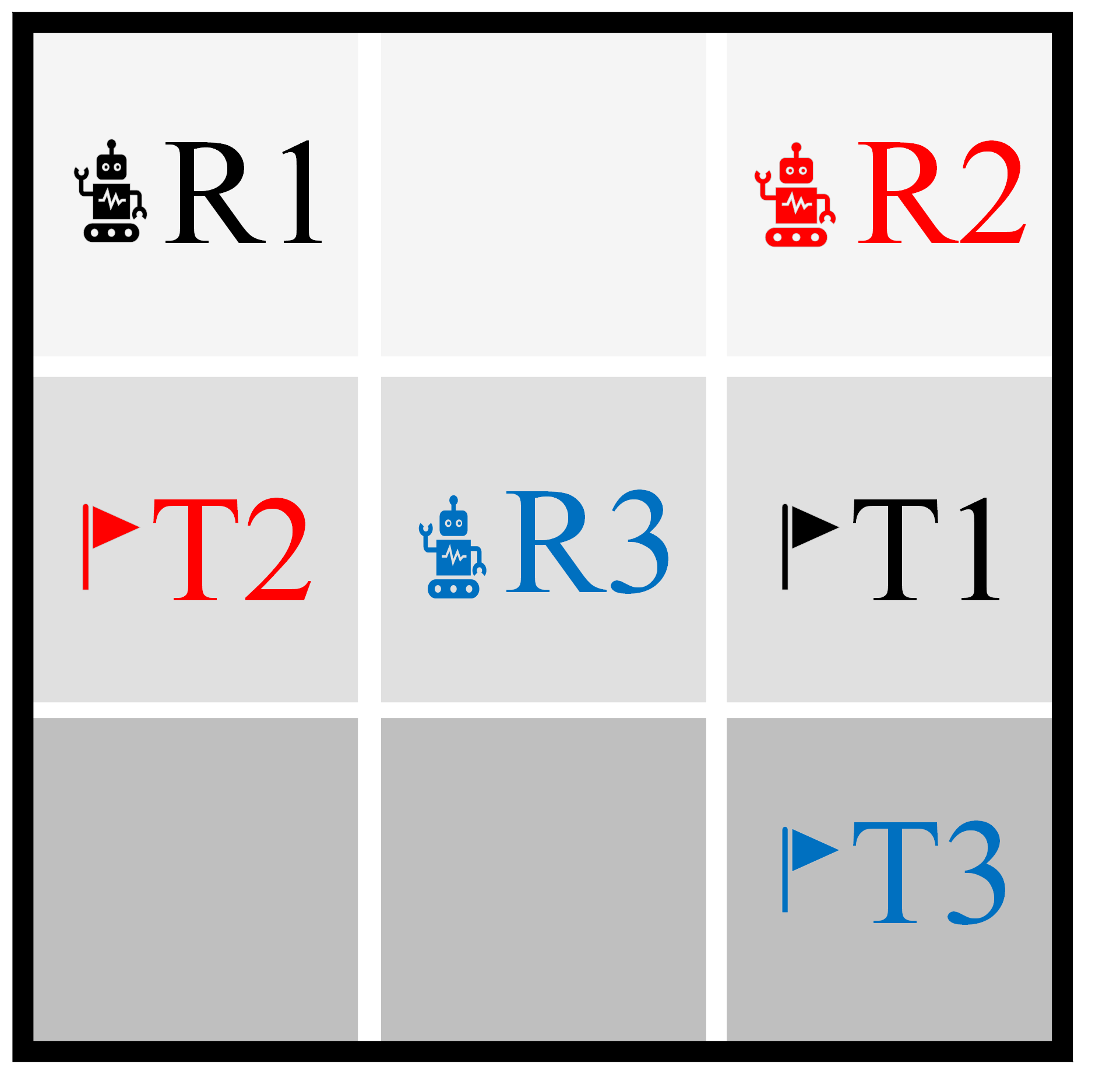}
    \caption{Environment for \\ Scenario \#4.}
    \label{fig:Environment_4}
    \end{subfigure}
    \begin{subfigure}[t]{0.23\textwidth}
    \centering 
        \includegraphics[scale=0.075]{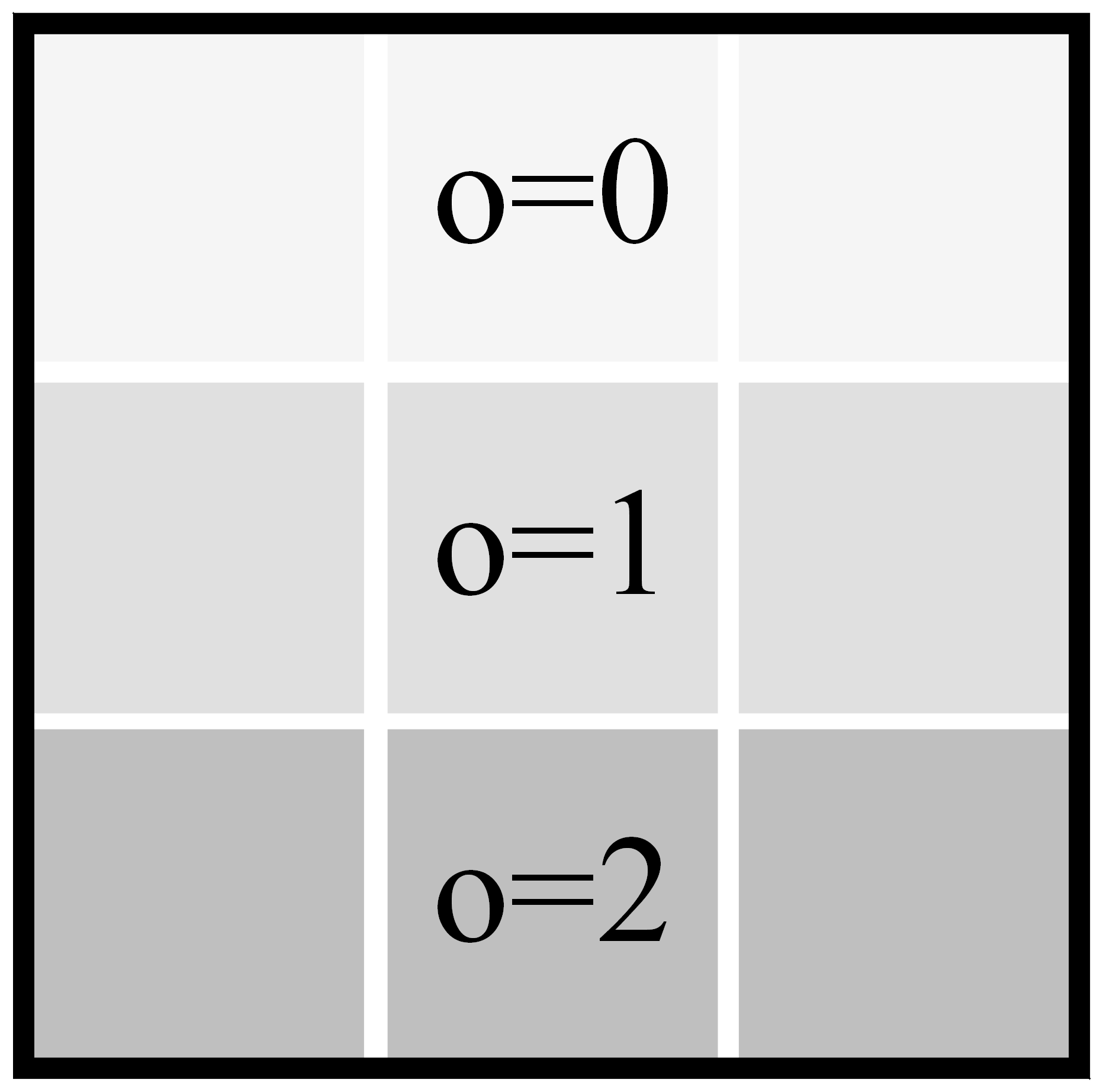}
    \caption{Public information \\labeled regions for Scenario \#4.}
    \label{fig:Environment_4_region}
    \end{subfigure}
    \vspace{-.2cm}
    \caption{Grid environments and regions labeled with public information for the scenarios in \Cref{sec:eval}.  Robots' initial positions are indicated by $R1$, $R2$, and $R3$, and their target positions by $T1$, $T2$, and $T3$.}
\end{figure*}

The constraints (\ref{opt-defgi}) -- (\ref{opt-flow}) are presented below.

Constraints (\ref{opt-defgi}), (\ref{opt-defgc}) and (\ref{opt-defgh}) capture the definitions of the expressions 
$\Bar{G^i} (\pi_{\text{comm}},\pi_{\text{act}}), 
\Bar{G^c} (\pi_{\text{comm}},\pi_{\text{act}})$ and 
\(H\left(X\right)\) from \Cref{prop:approxG} and \Cref{prop:expressH} respectively.  
Formally, 

$
\begin{array}{l}
g^i = -  \Big(\sum_{o,l^i,a^i}  x_{o,l^i,a^i}\cdot  w(o,i) \cdot \log ( \frac{x_{o,l^i,a^i}}{\sum_{b^i}{x_{o,l^i,b^i}}})\Big)\\
\quad - \Big(\sum_{\substack{o,l^i,a^i \\ o_1^i,l^i_1}}  x_{o,l^i,a^i} \cdot  w(o,i)  \cdot P^i(o^i,l^i,a^i)(o_1^i,l^i_1) \\
\quad \cdot  \log P^i(o^i,l^i,a^i)(o_1^i,l^i_1)\Big)\\
g^c = - \Big(\sum_{o,l^c,a^c}  x_{o,l^c,a^c}  \cdot  w(o,c) \cdot \log ( \frac{x_{o,l^c,a^c}}{\sum_{b^c}{x_{o,l^c,b^c}}} )\Big) \\
\quad -\Big(\sum_{\substack{o,l^c,a^c\\o_1^c,l^c_1}}  x_{o,l^c,a^c} \cdot  w(o,c) \cdot P^c(o^c,l^c,a^c)(o_1^c,l^c_1) \\
\quad \cdot  \log P^c(o^c,l^c,a^c)(o_1^c,l^c_1)\Big)\\
h = - \Big(\sum_{s,a'}  x_{s,a'}  \cdot \log ( \frac{x_{s,a'}}{\sum_{b}{x_{s,b}}} )\big)\\
\quad - \big(\sum_{s,a',s'}  x_{s,a'}  \cdot P(s,a')(s') \cdot  \log P(s,a')(s')\Big).
\end{array}$

Similarly,  constraints (\ref{opt-defw}) encode the respective definitions from \Cref{prop:approxG}.  Formally,  we have 

$
\begin{array}{l}
w(o,i) =  \sum\limits_{c \in \Acomm, i\not\in c}\frac{x_{o,c}}{\sum\limits_{c' \in \Acomm}x_{o,c'}}\\
w(o,c) =  \frac{x_{o,c}}{\sum\limits_{c' \in \Acomm}x_{o,c'}}.
\end{array}
$

Constraint (\ref{opt-defv}) lower-bounds the reach-avoid probability for the sought action policy, and constraint~(\ref{opt-flow}) enforces the usual flow constraint for occupancy measures:
$
\begin{array}{l}
v^* \leq \displaystyle\sum_{\substack{(o,l) \in \states \setminus (\states_{avoid} \cup \states_{target}) \\ a \in \acts ,  (o',l') \in \states_{target}}}
x_{o,l,a} \, P(o,l,a)(o',l'), \\ \negthickspace
\displaystyle\sum_{a \in \acts \cup \{a_\alpha\}}\negthickspace\negthickspace{{x_{o,l,a}} = \negthickspace \sum_{\substack{(o',l')\in \states \\ b \in \acts \cup \{a_\alpha\} }} \negthickspace\negthickspace{x_{o',l',b}P(o',l',b)(o,l)}+{\mathbb{1}}_{\{s_\init=s\}}}.
\end{array}
$

Finally,  constraints (\ref{opt-valid}) express the relationship between the occupancy measures \(x_{o,l,a}\) and \(x_{o,c}\)  via the occupancy measure of $o \in \ostates$.

\paragraph{Policies from an optimal solution}
Let   (\(x_{o,l,a}^*\), \(x_{o,c}^*\)) be an optimal solution to the optimization problem (\ref{opt}). We define the pair $(\picomm^*,\piact^*)$ of policies by
\begin{equation}\label{eq:policydef}
\begin{split}
    \picomm^*(o)(c) & = \frac{x_{o,c}^*}{\sum_{d \in \Acomm}{x_{o,d}^*}}, \\
    \piact^*(o,l)(a) & = \frac{x_{o,l,a}^*}{\sum_{b \in \acts}{x_{o,l,b}^*}}. 
    \end{split}
\end{equation}

The next theorem states that $(\picomm^*,\piact^*)$  has the desired properties, namely $\piact^*$ ensures $v^*$ under unrestricted communication, and $(\picomm^*,\piact^*)$ minimizes the value of $\bar{D}$.
 
\begin{restatable}{theorem}{thmoptimization}
 Let \(\widehat{M}\) be a cooperative Markov game with MMDP $M$,  
 $(\starget,\savoid)$ be a reach-avoid objective, 
  and $(\picomm^*,\piact^*)$ be the pair of policies defined by (\ref{eq:policydef}) for an optimal solution to (\ref{eq:optproblem}). 
 Then,   we have
 $\prob_{M_{\piact^*}}((\neg\savoid )\mathcal{U} \starget) = 
 \sup_{\policy} \prob_{M_\policy} ((\neg\savoid) \mathcal{U} \starget).$
 Furthermore, for every  pair $(\picomm',\piact') \in  \Picommpos(\ostates,K)\times\Piactpos(M)$ of positional joint communication and action policies, if the policy $\piact'$ is optimal, that is, if
$\prob_{{M}_{\piact'}}((\neg\savoid )\mathcal{U} \starget) = \sup_{\policy} \prob_{{M}_\policy} ((\neg\savoid) \mathcal{U} \starget)$,  then for function $\bar{D}$ defined in \cref{eq:approxcost} we have
 \( \bar{D}_{(\picomm^*, \piact^*)} \leq  \bar{D}_{(\picomm',\piact')}  \).
\end{restatable}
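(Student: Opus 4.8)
The theorem has two parts: optimality of $\piact^*$ for the unrestricted reach-avoid probability, and minimality of $\bar D_{(\picomm^*,\piact^*)}$ among pairs with optimal action policy. For the first part, the plan is to invoke the standard correspondence between feasible points of the occupancy-measure LP and policies in an MDP. Specifically, the flow constraints (\ref{opt-flow}) together with the nonnegativity and sink-state constraints guarantee that the $x^*_{o,l,a}$ form a valid state-action occupancy measure of some policy; the policy extracted by the normalization in (\ref{eq:policydef}) is exactly the one whose occupancy measure is $x^*$ (this is the classical dual-LP argument for MDPs, e.g.\ as used for entropy maximization in~\cite{BiondiLNW14}). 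Constraint (\ref{opt-defv}) then forces the reach-avoid probability of that policy, which equals $\sum x^*_{o,l,a} P(o,l,a)(\starget)$, to be at least $v^* = v^*(M,\starget,\savoid)$; since $v^*$ is by definition the supremum of the reach-avoid probability over all policies, equality holds, giving $\prob_{M_{\piact^*}}((\neg\savoid)\mathcal{U}\starget) = v^*$.

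For the second part, I would argue by a change of variables between policy pairs and feasible LP points. Given any pair $(\picomm',\piact')$ with $\piact'$ optimal, let $\nu'_{o,l,a}$ and $\nu'_{o,c} := \nu'_o\cdot\picomm'(o)(c)$ be the induced occupancy measures in $M_{\piact'}$ and the derived communication occupancy. I would verify that $(\nu'_{o,l,a}, \nu'_{o,c})$ satisfies all constraints (\ref{opt-defgi})--(\ref{opt-valid}): the flow constraint (\ref{opt-flow}) holds because $\nu'$ is an occupancy measure of $M_{\piact'}$; constraint (\ref{opt-defv}) holds because $\piact'$ is optimal, so its reach-avoid probability is $v^*\ge v^*$; constraint (\ref{opt-valid}) holds by the definition $\nu'_{o,c} = \nu'_o\cdot\picomm'(o)(c)$ and the fact that $\sum_{c}\picomm'(o)(c)=1$ while $\nu'_o = \sum_{l,a}\nu'_{o,l,a}$; the auxiliary variables $g^i, g^c, h, w(\cdot)$ are then determined as functions of $(\nu'_{o,l,a},\nu'_{o,c})$ by their defining constraints. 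Crucially, I must check that the value of the LP objective $\bar d$ at this feasible point equals $\bar D_{(\picomm',\piact')}$: this is where \Cref{prop:expressH} and \Cref{prop:approxG} enter, since those propositions state precisely that the occupancy-measure expressions $-H(X) + \sum_i \bar G^i + \sum_c \bar G^c$ evaluated at the occupancy measures induced by $(\picomm',\piact')$ equal $\bar D_{(\picomm',\piact')}$, and the marginalized quantities $\nu'_{o,l^i,a^i}, \nu'_{o,l^c,a^c}$ appearing there are exactly the marginalizations of $\nu'_{o,l,a}$ and the $w'(o,i), w''(o,c)$ expressions from \Cref{prop:approxG} coincide with the LP's $w(o,i), w(o,c)$ under the identity $\nu'_{o,c}=\nu'_o\picomm'(o)(c)$. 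Hence $\bar d(\nu') = \bar D_{(\picomm',\piact')}$. Since $x^*$ is an optimal (minimizing) solution, $\bar d(x^*)\le \bar d(\nu') = \bar D_{(\picomm',\piact')}$, and finally $\bar d(x^*) = \bar D_{(\picomm^*,\piact^*)}$ by the same propositions applied in the reverse direction (the occupancy measures induced by $(\picomm^*,\piact^*)$ are exactly $x^*$, using the first-part argument that (\ref{eq:policydef}) recovers the policy with occupancy measure $x^*$, and $x^*_{o,c} = x^*_o\cdot\picomm^*(o)(c)$ by (\ref{eq:policydef}) and (\ref{opt-valid})). This chains to $\bar D_{(\picomm^*,\piact^*)}\le \bar D_{(\picomm',\piact')}$.

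The main obstacle I anticipate is the careful bookkeeping in the correspondence between $(\picomm^*,\piact^*)$ and $x^*$ in \emph{both} directions, especially the communication-occupancy variable $x_{o,c}$: I need that for the policy pair extracted via (\ref{eq:policydef}), the true induced occupancy of the pair $(o,c)$ (i.e.\ $\nu_o\cdot\picomm^*(o)(c)$ in the chain $\widehat M_\pi$, or rather the projection used in defining $\bar D$) matches $x^*_{o,c}$ — this requires that the action-policy's occupancy of $o$ equals $\sum_{l,a} x^*_{o,l,a}$, which is exactly what constraint (\ref{opt-valid}) enforces, and that $x^*_{o,c}/\sum_d x^*_{o,d} = \picomm^*(o)(c)$, which holds by construction. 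A secondary subtlety is a degenerate case where $\sum_{d\in\Acomm} x^*_{o,d} = 0$ (or $\sum_b x^*_{o,l,b}=0$) for some $o$, so that the normalization in (\ref{eq:policydef}) is undefined; this corresponds to $o$ (or $(o,l)$) being unreachable under $x^*$, and I would handle it by defining the policy arbitrarily there (e.g.\ uniformly), noting it does not affect any occupancy measure or either term of the objective. Once these correspondences are nailed down, the rest is a direct appeal to \Cref{prop:expressH}, \Cref{prop:approxG}, and the optimality of the LP solution.
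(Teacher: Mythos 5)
Your proposal is correct, and it supplies essentially the argument the paper intends but never writes down: the appendix proves only Lemma~1, Theorem~\ref{thm:performanceloss}, and Propositions~\ref{prop:expressH}--\ref{prop:approxG}, and leaves this theorem to the standard correspondence between feasible points of the occupancy-measure program (\ref{opt}) and positional policies, which is exactly what you reconstruct. Your two directions are the right ones: (i) any pair $(\picomm',\piact')$ with optimal $\piact'$ induces a feasible point $(\nu'_{o,l,a},\nu'_{o,c})$ at which the objective $\bar d$ equals $\bar D_{(\picomm',\piact')}$ by construction of constraints (\ref{opt-defgi})--(\ref{opt-defw}), since $\bar D$ is itself \emph{defined} through the occupancy-measure expressions of \Cref{prop:expressH} and \Cref{prop:approxG}; and (ii) the pair extracted via (\ref{eq:policydef}) from an optimal $x^*$ induces occupancy measures equal to $x^*$, so $\bar d(x^*)=\bar D_{(\picomm^*,\piact^*)}$, and constraint (\ref{opt-defv}) pins the reach-avoid probability to $v^*$. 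You also correctly identify the load-bearing role of constraint (\ref{opt-valid}), which makes the program's $w(o,i)$, $w(o,c)$ coincide with $w'(o,i)$, $w''(o,c)$ from \Cref{prop:approxG} under the identification $x_{o,c}=\nu_o\cdot\picomm(o)(c)$. The one caveat to record explicitly is that the exactness of the policy/occupancy correspondence in direction (ii) --- that normalizing $x^*$ yields a policy whose occupancy measure is exactly $x^*$ rather than merely dominated by it --- relies on the paper's standing assumption that occupancy measures are finite on $\states\setminus(\starget\cup\savoid)$ (transience outside the sinks); your handling of the zero-denominator case is fine, but without that assumption a feasible point could carry positive circulation on an end component that the extracted policy does not reproduce, so the assumption should be invoked at that step.
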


\section{Experimental Evaluation}\label{sec:eval}
We evaluated our approach on four multi-robot navigation scenarios.  In each scenario, we consider a MAS with three agents and $K = 2$. We used the formalization in \Cref{sec:synthesis} to synthesize for each problem a pair of communication and action policies.
All experiments were performed on a Macbook Pro with an Apple M2 chip and 32GB memory. 
The two optimization problems are solved using GLPK \cite{GLPK} and SNOPT \cite{snopt},  respectively.  The detailed setup and results of all scenarios are included in~\Cref{sec:app-benchmarks}.

With this evaluation we aim to demonstrate the following.
\begin{enumerate}
\item Our method synthesizes policies with zero communication cost (which means they fully conform to the communication restrictions) when such policies exist.
\item Our method is capable of synthesizing policies that adhere to communication restriction while incurring zero communication cost—an outcome that cannot be achieved through approaches based solely on minimizing total correlation as an objective function.
\item Our method can synthesize communication policies that adapt dynamically to the current state of public information so that the set of communicating agents changes.
\item There is a trade-off between performance (the reach-avoid probability) and the value of our cost function.
\end{enumerate}

Next, we describe the four scenarios and the respective results.  Further details can be found in~\Cref{sec:app-benchmarks}.

\subsubsection{Scenario \#1 with Navigation Tasks}
We consider the environment in \Cref{fig:Env_1_figure}.
The task of the robots,  initialized as marked in the figure,  is to navigate to one of their target cells,  labeled \(T1\), \(T2\), and \(T3\), respectively, avoiding collision.
At any given time, only two robots can communicate and share precise locations.  Which ones communicate is decided based on public information,  which is the current regions of the robots. The  regions,  labeled \(o=0\), \(o=1\), and \(o=2\) are shown in \Cref{fig:Env_1_public_region}.
The possible actions are moving in one of the four cardinal directions  or remaining in place. 
Moves succeed with probability \(0.9\) and fail, resulting in remaining in the current cell,  with probability \(0.1\).  The remain action and impossible actions stay in place.

\paragraph{Result} 
The method proposed in Section~\ref{sec:synthesis} generates a pair of policies where the action policy is optimal under unrestricted communication,  with reach-avoid probability \(0.99\).
The synthesized communication policy results in \emph{cost zero}.
It assigns probability $1$ to robots \(R1\) and \(R2\)  communicating. The synthesized action policy matches that,  unlike other possible action policies with reach-avoid probability \(0.99\).
Thus, our method identifies a suitable action policy that can be equipped with a communication policy achieving cost zero.

\subsubsection{Scenario \#2 with a Swarm Intersection}
The goal of this scenario is to compare our method with the approach based on minimizing the total correlation. In the scenario depicted in Figure~\ref{fig:Environment_2}, three robots are required to navigate to their respective target locations while avoiding collisions. The publicly available information corresponding to each region is illustrated in Figure~\ref{fig:Environment_2_region}. Again, only two out of three agents can communicate their precise location without an extra communication cost.

The work closest to ours is~\cite{KarabagNT22}, which employs total correlation to measure dependencies between agents and to synthesize an (action) policy that is robust to communication loss. 
However,  \cite{KarabagNT22} does not consider communication policies since robustness is w.r.t.~complete loss of communication, not to partial limitation of the communication.
Thus,  there is a baseline for comparison only for the action policies synthesized by our approach (which are optimal),  namely the policies computed using total correlation in the objective function.
Minimizing total correlation cannot yield action policies that satisfy communication restrictions, whereas our method can do so, and also generates a matching communication policy.

\paragraph{Result}
In scenario \#2, our approach finds a pair of action and communication policies that satisfy the reach-avoid property with probability \(1\) and ensure communication cost \(0\). Thus,  it finds policies that satisfy the communication restriction.
In contrast,  the action policy generated using total correlation violates the communication restriction at time $t=2$ when it requires coordination among three agents. 
Here, the minimum total correlation is \(0.591\), while the total correlation of the policy computed by our method is \(0.693\), and thus it will not be computed by minimizing total correlation.

\subsubsection{Scenario \#3 with a Hallway}
In this scenario, we consider the environment in \Cref{fig:subfig_map_Env_3}. 
The public information regions, labeled \(o=0\), \(o=1\), and \(o=2\), are depicted in \Cref{fig:subfig_region_Env_3}.
In this scenario,  coordination among certain robots is critical at certain time steps to avoid collisions.  As before, only two of the robots are allowed to communicate at any given time.

\paragraph{Result}
The synthesized action policy achieves a reach-avoid probability of \(1\), and the communication policy ensures zero communication cost by selecting the appropriate set of robots to communicate in different public information states.

\subsubsection{Scenario \#4 with High Uncertainty}
The robots in the environment shown in Figure~\ref{fig:Environment_4} must navigate to their respective target cells while avoiding collisions.  The public information regions here correspond to the rows, as shown in Figure~\ref{fig:Environment_4_region}.
The actions are the same as in Scenario \#1. 
Here, however, the move actions lead to the desired cell with probability $0.9$,  and the remaining \(0.1\) probability is redistributed across the current cell and all other neighboring cells.  For impossible move actions,  the full transition probability \(1.0\) is redistributed among the current cell and all valid neighboring cells. As in previous scenarios, when communication is restricted, only two of three agents can communicate. In this scenario, a crowded intersection necessitates communication among all agents to  avoid collisions.

\paragraph{Result}
Here, the optimal reach-avoid probability under full communication is \(0.958\). 
No pair of action and communication policies exists that achieves this probability with zero communication cost.
If, however, we lower the threshold and only ask for a policy that guarantees \(0.92\) reach-avoid probability under unrestricted communication,  then our method synthesizes one with zero cost.

\subsubsection{Performance}
Our evaluation is focused on evaluating the quality of our approach on a range of relatively small but interesting problem instances, as well as its principle feasibility. 
Although our method requires solving large nonlinear optimization problems to synthesize a pair of policies, the running time remains reasonable for the considered benchmarks. 
The runtimes range from a few minutes for Environment~\#3 (with 528 constraints and 62,581 variables) to one hour for Environment~\#4 (with 531 constraints and 62,956 variables) and Environment~\#2 (with 1,332 constraints and 163,081 variables), and up to three hours for Environment~\#1 (with 1,344 constraints and 164,581 variables).
Clearly,  the performance depends on the number of agents and the sparsity of the transition probability matrices.  In the future, we plan to conduct larger-scale experiments and develop techniques for further improving performance.

\section{Discussions and Future Work}\label{sec:discussion}

\paragraph{Limitations and Extensions}
One limitation of the model we study in this paper is that the number $K$ of agents always allowed to communicate is fixed.  We can incorporate a dynamically changing $K$ as part of the state,  making the model and cost function more complex.  Another limitation is the restriction to positional policies, which allows us to formulate the problem via occupancy measures.  In the future we will consider extensions with bounded memory communication policies.
Another challenge is the scalability, in particular with growing number of agents,  which we plan to address by developing methods that iteratively improve the policies for subsets of agents.
Additionally,  we assume that the model, including public information is given. While in many cases identifying public information is natural (regions, fixed capabilities),  exploring relations to agents' observations and approaches to deriving such information is an interesting direction.

\paragraph{Richer Communication Models} The focus of our work is to establish a rigorous theoretical foundation, and provide novel insights and methodology. 
Our main aim is to provide the necessary basis for further theoretical exploration and practical applications. 
Moving forward, we plan to study extensions with richer communication restrictions (such as dynamic changes in communication availability) and explore ways to make the approach robust to implementation aspects such as delayed or noisy communication.

\paragraph{Identifying Public Information} 
The granularity of the public information affects the number of decision variables (for the communication policy) and hence the performance of policy synthesis.   More (i.e.,  finer) public information leads to higher computation times. 
The choice of public information depends on the application and available communication bandwidth. In many cases this is natural, such as coarser geographic regions or agents' capabilities.
Developing techniques for identifying public information is one interesting direction for future work.

\bibliographystyle{named}
\bibliography{main}

\begin{thebibliography}{}

\bibitem[\protect\citeauthoryear{Biondi \bgroup \em et al.\egroup
  }{2014}]{BiondiLNW14}
Fabrizio Biondi, Axel Legay, Bo~Friis Nielsen, and Andrzej Wasowski.
\newblock Maximizing entropy over markov processes.
\newblock {\em J. Log. Algebraic Methods Program.}, 83(5-6):384--399, 2014.

\bibitem[\protect\citeauthoryear{Bretagnolle and
  Huber}{1979}]{Bretagnolle-Huber-inequality}
Jean Bretagnolle and Catherine Huber.
\newblock Estimation des densit{\'e}s: risque minimax.
\newblock {\em Zeitschrift f{\"u}r Wahrscheinlichkeitstheorie und verwandte
  Gebiete}, 47:119--137, 1979.

\bibitem[\protect\citeauthoryear{Cover and Thomas}{2006}]{cover1999elements}
Thomas~M. Cover and Joy~A. Thomas.
\newblock {\em Elements of information theory {(2.} ed.)}.
\newblock Wiley, 2006.

\bibitem[\protect\citeauthoryear{Gill \bgroup \em et al.\egroup }{2018}]{snopt}
Philip~E Gill, Walter Murray, Michael~A Saunders, and Elizabeth Wong.
\newblock User’s guide for snopt 7.7: Software for large-scale nonlinear
  programming.
\newblock {\em Center for Computational Mathematics Report CCoM}, 15(3), 2018.

\bibitem[\protect\citeauthoryear{Goldman and
  Zilberstein}{2004}]{decentralizedcontrolCS}
Claudia~V. Goldman and Shlomo Zilberstein.
\newblock Decentralized control of cooperative systems: Categorization and
  complexity analysis.
\newblock {\em J. Artif. Intell. Res.}, 22:143--174, 2004.

\bibitem[\protect\citeauthoryear{Guestrin \bgroup \em et al.\egroup
  }{2001}]{guestrin2001multiagent}
Carlos Guestrin, Daphne Koller, and Ronald Parr.
\newblock Multiagent planning with factored mdps.
\newblock {\em Advances in neural information processing systems}, 14, 2001.

\bibitem[\protect\citeauthoryear{Karabag \bgroup \em et al.\egroup
  }{2022}]{KarabagNT22}
Mustafa~O. Karabag, Cyrus Neary, and Ufuk Topcu.
\newblock Planning not to talk: Multiagent systems that are robust to
  communication loss.
\newblock In Piotr Faliszewski, Viviana Mascardi, Catherine Pelachaud, and
  Matthew~E. Taylor, editors, {\em 21st International Conference on Autonomous
  Agents and Multiagent Systems, {AAMAS} 2022, Auckland, New Zealand, May 9-13,
  2022}, pages 705--713. International Foundation for Autonomous Agents and
  Multiagent Systems {(IFAAMAS)}, 2022.

\bibitem[\protect\citeauthoryear{Makhorin}{2008}]{GLPK}
Andrew Makhorin.
\newblock Glpk (gnu linear programming kit).
\newblock {\em http://www. gnu. org/s/glpk/glpk. html}, 2008.

\bibitem[\protect\citeauthoryear{Melo and
  Veloso}{2011}]{sparseinteractionsmelo2011decentralized}
Francisco~S Melo and Manuela Veloso.
\newblock Decentralized mdps with sparse interactions.
\newblock {\em Artificial Intelligence}, 175(11):1757--1789, 2011.

\bibitem[\protect\citeauthoryear{Rizk \bgroup \em et al.\egroup
  }{2018}]{rizk2018decision}
Yara Rizk, Mariette Awad, and Edward~W Tunstel.
\newblock Decision making in multiagent systems: A survey.
\newblock {\em IEEE Transactions on Cognitive and Developmental Systems},
  10(3):514--529, 2018.

\bibitem[\protect\citeauthoryear{Shannon and
  Weaver}{1949}]{shannon1949mathematical}
Claude~E Shannon and Warren Weaver.
\newblock The mathematical theory of communication. university of illinois.
\newblock {\em Urbana}, 117:10, 1949.

\bibitem[\protect\citeauthoryear{Wu \bgroup \em et al.\egroup }{2011}]{WuZC11}
Feng Wu, Shlomo Zilberstein, and Xiaoping Chen.
\newblock Online planning for multi-agent systems with bounded communication.
\newblock {\em Artif. Intell.}, 175(2):487--511, 2011.

\end{thebibliography}

\cleardoublepage 

\appendix

\onecolumn  

\section{Proofs from~\Cref{sec:synthesis}}\label{sec:app-synthesis}
 \begin{definition}
     The \emph{Kullback–Leibler divergence (KL divergence)} between two probability mass functions $p(x)$ and $q(x)$ with the same countable support $V$, is defined as $D_{KL}\left(p \parallel q\right)=\sum_{x \in V} p\left(x\right) \log \left(\frac{p\left(x\right)}{q\left(x\right)}\right)$. 
 \end{definition}
 
In the above definition,  $p \log\frac{p}{0}=\infty$. 
The KL divergence is always non-negative and is zero if and only if $p(x)=q(x)$.
The KL divergence quantifies how $p(x)$ differs from $q(x)$. 
In order to show \cref{thm:performanceloss}, we first prove a  lemma that establishes a relationship between the performance loss value and the KL divergence between the distribution of joint paths induced by the joint policy executed without communication restriction and under communication restrictions.  

\begin{lemma}\label{lemma:KL-D}
Let  $\widehat{M}$ be a cooperative Markov game as in \Cref{def:game} and 
 $\pi = (\picomm,\piact) \in \Picommpos(\ostates,K)\times\Piactpos(M)$.
Let $\Gamma_{M_{\piact}}$ be the distribution of joint paths induced by the action policy $\piact$ on $M$, and let $\Gamma_{\widehat{M}_{\pi}}$ be the distribution of joint paths in $M$ induced by $\pi$ on $\widehat{M}$.
Then it holds that 
\begin{align*}
    \prob_{M_{\piact}}((\neg\savoid) \mathcal{U} \starget) - \prob_{\widehat M_{\pi}}((\neg\wsavoid) \mathcal{U} \wstarget) \leq
     \sqrt{1-\exp\left(-D_{KL}\left(\Gamma_{M_\piact} \parallel \Gamma_{\widehat{M}_{\pi}} \right) \right)}. 
\end{align*}
\end{lemma}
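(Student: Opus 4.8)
The plan is to first reduce the left-hand side to the total variation distance between the two joint-path distributions, and then invoke the Bretagnolle--Huber inequality, which relates total variation distance to KL divergence in exactly the form appearing in the statement.

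First I would note that the reach-avoid objective $(\neg\savoid)\,\mathcal{U}\,\starget$ determines a measurable event $E$ in the space of infinite joint paths of $M$ under the cylinder $\sigma$-algebra — namely the set of joint paths that visit $\starget$ before ever visiting $\savoid$, a countable union of cylinders. Since $\Gamma_{M_{\piact}}$ is by definition the distribution of joint paths under $\piact$ on $M$, we have $\prob_{M_{\piact}}((\neg\savoid)\,\mathcal{U}\,\starget) = \Gamma_{M_{\piact}}(E)$. Because $\wstarget$ and $\wsavoid$ constrain only the joint-state components of the augmented states of $\widehat{M}_\pi$, the augmented-path event for the lifted objective is precisely the preimage of $E$ under the projection that forgets the imaginary copies, and $\Gamma_{\widehat{M}_\pi}$ is defined as the pushforward of the $\widehat{M}_\pi$-path measure under this projection; hence $\prob_{\widehat{M}_\pi}((\neg\wsavoid)\,\mathcal{U}\,\wstarget) = \Gamma_{\widehat{M}_\pi}(E)$. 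Then, using $P(E) - Q(E) \leq \sup_A (P(A) - Q(A)) = \big\|P - Q\big\|_{\mathrm{TV}}$ for any event $E$ and probability measures $P,Q$ on a common space, I obtain
\[ \prob_{M_{\piact}}((\neg\savoid)\,\mathcal{U}\,\starget) - \prob_{\widehat{M}_\pi}((\neg\wsavoid)\,\mathcal{U}\,\wstarget) \;\leq\; \big\| \Gamma_{M_{\piact}} - \Gamma_{\widehat{M}_\pi} \big\|_{\mathrm{TV}}. \]

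The second step is to apply the Bretagnolle--Huber inequality: for probability measures $P,Q$ on a common space, $\big\|P - Q\big\|_{\mathrm{TV}} \leq \sqrt{1 - \exp(-D_{KL}(P \parallel Q))}$. A short self-contained proof that I would include: writing $p,q$ for densities w.r.t.\ a common dominating measure and $\delta = \big\|P-Q\big\|_{\mathrm{TV}}$, Cauchy--Schwarz gives $\big(\int\sqrt{pq}\big)^2 \leq \big(\int\min(p,q)\big)\big(\int\max(p,q)\big) = (1-\delta)(1+\delta) = 1 - \delta^2$, while Jensen's inequality (concavity of $\log$) gives $\int\sqrt{pq} = \mathbb{E}_P\big[\sqrt{q/p}\big] \geq \exp\big(-\tfrac12 D_{KL}(P\parallel Q)\big)$; combining these yields $\delta^2 \leq 1 - \exp(-D_{KL}(P\parallel Q))$. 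Instantiating with $P = \Gamma_{M_{\piact}}$ and $Q = \Gamma_{\widehat{M}_\pi}$ gives exactly the asserted bound.

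The point requiring the most care is the measure-theoretic setup rather than any hard inequality: the joint-path measures live on an infinite product space, so I would set them up via Kolmogorov extension on cylinders, verify measurability of $E$, and read $D_{KL}(\Gamma_{M_{\piact}} \parallel \Gamma_{\widehat{M}_\pi})$ as the limit of the KL divergences of the $T$-step truncations, which is monotone nondecreasing and lies in $[0,\infty]$ by the chain rule for KL divergence. If absolute continuity fails — which can happen since under restricted communication agents act on imaginary local states and may thereby assign probability zero to transitions possible in $M$ — then $D_{KL} = \infty$, the right-hand side equals $1$, and the inequality is trivial because its left-hand side is a difference of probabilities. Since the paper restricts to absorbing chains with added sink states, all the relevant sums and the truncation limit are well behaved, so no further integrability issues arise.
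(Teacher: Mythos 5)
Your proposal is correct and follows essentially the same route as the paper: both reduce the difference of reach-avoid probabilities to the total variation distance between $\Gamma_{M_{\piact}}$ and $\Gamma_{\widehat{M}_{\pi}}$ and then apply the Bretagnolle--Huber inequality. The only differences are cosmetic --- you supply a self-contained proof of Bretagnolle--Huber and spell out the measure-theoretic setup, whereas the paper cites the inequality and works directly with sums over path sets.
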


\begin{proof}
Let \(T\) denote the set of paths in $M$ reaching \(\starget\), and let \(T'\) be a set of paths in $M$ chosen arbitrarily. Also denote a generic path by $\zeta =s_0 a_1 s_1\ldots$. Then,
\begin{align}
 \prob_{M_{\piact}}((\neg\savoid) \mathcal{U} \starget) - \prob_{\widehat M_{\pi}}((\neg\wsavoid) \mathcal{U} \wstarget) \nonumber &= \sum_{\zeta \in T} \Gamma_{M_{\piact}}\left(\zeta\right) - \Gamma_{\widehat{M}_{\pi}}\left(\zeta\right) 
\nonumber \\
& \leq \left| \sum_{\zeta \in T} \Gamma_{M_{\piact}}\left(\zeta\right) - 
\Gamma_{\widehat{M}_{\pi}}\left(\zeta\right) \right| 
\nonumber\\
& \leq \sup_{T'} \left| \sum_{\zeta \in T'} \Gamma_{M_{\piact}}\left(\zeta\right) - 
\Gamma_{\widehat{M}_{\pi}}\left(\zeta\right) \right|
\nonumber\\
& \leq \sqrt{1-\exp\left(-D_{KL}\left(\Gamma_{M_\piact} \parallel \Gamma_{\widehat{M}_{\pi}} \right) \right)} \, , \label{eq:Bretagnolle-Huber_line4}
\end{align}
where \eqref{eq:Bretagnolle-Huber_line4} is due to Bretagnolle-Huber inequality \cite{Bretagnolle-Huber-inequality}.
\end{proof}

Next, we establish an upper bound on the performance loss based on the cost function we introduced in \Cref{sec:synthesis}, which in turn gives a lower bound on the reach avoid probability under restricted communication.
\performanceloss*
\begin{proof}
Let \(T\) denote the set of paths in $M$ reaching \(\starget\), and let \(T'\) be a set of paths in $M$ chosen arbitrarily. 
We denote a generic path by $\zeta =s_0 a_1 s_1\ldots$. We use \( \mu_{M_\piact}(.)\) and  \(\mu_{\widehat{M}_{\pi}}(.)\) to denote the probability of a state or a path under the distribution of \( \Gamma_{M_\piact}\) and  \(\Gamma_{\widehat{M}_{\pi}}\), respectively. Then,
\begin{align*}
& D_{KL}\left(\Gamma_{M_\piact} \parallel \Gamma_{\widehat{M}_{\pi}}\right) = \sum\limits_{\zeta} 
\mu_{M_\piact}\left(\zeta\right) \cdot \log \left( 
\frac{\mu_{M_\piact}\left(\zeta\right)}
{\mu_{\widehat{M}_{\pi}}\left(\zeta\right)} 
\right) \\
& = \sum\limits_{\zeta} 
\mu_{M_\piact}\left(s_0\right)
\cdot \mu_{M_\piact}\left(a_1 s_1 \mid s_0\right)
\cdot \mu_{M_\piact}\left(a_2 s_2 \mid s_0 a_1 s_1\right) \cdots \cdot \log \left[
\frac{\mu_{M_\piact}\left(s_0\right) 
\mu_{M_\piact}\left(a_1 s_1 | s_0\right) 
\mu_{M_\piact}\left(a_2 s_2 | s_0 a_1 s_1\right) \cdots}
{\mu_{\widehat{M}_{\pi}}\left(s_0\right) 
\mu_{\widehat{M}_{\pi}}\left(a_1 s_1 | s_0\right) 
\mu_{\widehat{M}_{\pi}}\left(a_2 s_2 | s_0 a_1 s_1\right) \cdots } 
\right] \\
& = \sum\limits_{t=1}^{\infty} \sum\limits_{\zeta} 
\mu_{M_\piact}\left(s_0\right)
\cdot \mu_{M_\piact}\left(a_1 s_1 \mid s_0\right) \cdots \cdot \mu_{M_\piact}\left(a_t s_t \mid s_0 a_1 s_1 \ldots a_{t-1} s_{t-1} \right) \cdots \cdot \log \left( 
\frac{\mu_{M_\piact}\left(a_t s_t \mid s_0 a_1 s_1 \ldots a_{t-1} s_{t-1} \right)}
{\mu_{\widehat{M}_{\pi}}\left(a_t s_t \mid s_0 a_1 s_1 \ldots a_{t-1} s_{t-1} \right)} 
\right)
\end{align*}

Note that each state \(s\) includes two parts of publicly observable \(o\) and local states \(l\). At each time point \(t\) by the log sum inequality \cite{cover1999elements}, \(\sum\limits_{c \in \Acomm} \picomm(o)(c)=1\), and
\begin{align*}
& \mu_{\widehat{M}_{\pi}}\left(a_t s_t \mid s_0 a_1 s_1 \ldots a_{t-1} s_{t-1} \right) = \sum\limits_{c \in \Acomm} 
\picomm(o_{t-1})(c) \cdot 
\mu^{c}\left(a_t^c s_t^c \mid s_0 a_1 s_1 \ldots a_{t-1} s_{t-1} \right) \cdot \prod_{i \notin c} 
\mu^{i}\left(a_t^i s_t^i \mid s_0 a_1 s_1 \ldots a_{t-1} s_{t-1} \right)
\end{align*}
we have (\ref{eq:long_equation4}) below
\begin{subequations}  \label{eq:long_equation4}
\begin{align} 
\sum\limits_{\zeta} 
& \mu_{M_\piact}(s_0) \mu_{M_\piact}(a_1 s_1 \mid s_0) \cdots \mu_{M_\piact}(a_t s_t \mid s_0 a_1 s_1 \ldots a_{t-1} s_{t-1}) \cdots \nonumber\\ 
& \quad \cdot \log \left( 
\frac{\mu_{M_\piact}(a_t s_t \mid s_0 a_1 s_1 \ldots a_{t-1} s_{t-1})}
{\sum\limits_{c \in \Acomm} \picomm(o_{t-1})(c) 
\mu^{c}(a_t^c s_t^c \mid s_0 a_1 s_1 \ldots a_{t-1} s_{t-1}) \cdot 
\prod_{i \notin c} \mu^{i}(a_t^i s_t^i \mid s_0 a_1 s_1 \ldots a_{t-1} s_{t-1})} 
\right) \\ 
= \sum\limits_{\zeta} 
& \mu_{M_\piact}(s_0) \mu_{M_\piact}(a_1 s_1 \mid s_0) \cdots \left( \sum\limits_{c \in \Acomm} \picomm(o_{t-1})(c) \right) 
\mu_{M_\piact}(a_t s_t \mid s_0 a_1 s_1 \ldots a_{t-1} s_{t-1}) \cdots \nonumber\\ 
& \quad \cdot \log \left( 
\frac{\sum\limits_{c \in \Acomm} \picomm(o_{t-1})(c) 
\mu_{M_\piact}(a_t s_t \mid s_0 a_1 s_1 \ldots a_{t-1} s_{t-1})}
{\sum\limits_{c \in \Acomm} \picomm(o_{t-1})(c) 
\mu^{c}(a_t^c s_t^c \mid s_0 a_1 s_1 \ldots a_{t-1} s_{t-1}) \cdot 
\prod_{i \notin c} \mu^{i}(a_t^i s_t^i \mid s_0 a_1 s_1 \ldots a_{t-1} s_{t-1})} 
\right) \\ 
\leq \sum\limits_{\zeta} 
& \mu_{M_\piact}(s_0) \cdots \mu_{M_\piact}(a_{t-1} s_{t-1} \mid s_0 \ldots a_{t-2} s_{t-2}) \cdot \mu_{M_\piact}(a_{t+1} s_{t+1} \mid s_0 \ldots a_t s_t) \cdots \nonumber\\ 
& \quad \cdot \Biggl( \sum\limits_{c \in \Acomm} \picomm(o_{t-1})(c) 
\mu_{M_\piact}(a_t s_t \mid s_0 a_1 s_1 \ldots a_{t-1} s_{t-1}) \nonumber\\ 
& \quad \cdot \log \left( 
\frac{\picomm(o_{t-1})(c) \mu_{M_\piact}(a_t s_t \mid s_0 a_1 s_1 \ldots a_{t-1} s_{t-1})}
{\picomm(o_{t-1})(c) \mu^{c}(a_t^c s_t^c \mid s_0 a_1 s_1 \ldots a_{t-1} s_{t-1}) \cdot 
\prod_{i \notin c} \mu^{i}(a_t^i s_t^i \mid s_0 a_1 s_1 \ldots a_{t-1} s_{t-1})} 
\right) \Biggr) \\ 
= \sum\limits_{\zeta} & \sum\limits_{c \in \Acomm} 
 \picomm(o_{t-1})(c) \mu_{M_\piact}(\zeta) \nonumber\\ 
& \quad \cdot \log \left( 
\frac{\mu_{M_\piact}(a_t s_t \mid s_0 a_1 s_1 \ldots a_{t-1} s_{t-1})}
{\mu^{c}(a_t^c s_t^c \mid s_0 a_1 s_1 \ldots a_{t-1} s_{t-1}) \cdot 
\prod_{i \notin c} \mu^{i}(a_t^i s_t^i \mid s_0 a_1 s_1 \ldots a_{t-1} s_{t-1})} 
\right) \\ 
= \sum\limits_{\zeta} & \sum\limits_{c \in \Acomm} \picomm(o_{t-1})(c) \mu_{M_\piact}(\zeta) \cdot \log \left( \mu_{M_\piact}(a_t s_t \mid s_0 a_1 s_1 \ldots a_{t-1} s_{t-1}) \right) \nonumber\\ 
- & \sum\limits_{\zeta} \sum\limits_{c \in \Acomm} 
\picomm(o_{t-1})(c) \mu_{M_\piact}(\zeta) \cdot \log \left( \mu^{c}(a_t^c s_t^c \mid s_0 a_1 s_1 \ldots a_{t-1} s_{t-1}) \right) \nonumber\\ 
- & \sum\limits_{\zeta} \sum\limits_{c \in \Acomm} 
\picomm(o_{t-1})(c) \mu_{M_\piact}(\zeta) \cdot \log \left( \prod_{i \notin c} \mu^{i}(a_t^i s_t^i \mid s_0 a_1 s_1 \ldots a_{t-1} s_{t-1}) \right).
\end{align}
\end{subequations}
\clearpage

By taking the sum over \(t\), we have the following:

\begin{subequations} \label{eq:long_equation5}
\begin{align}
 D_{KL}\left(\Gamma_{M_\piact} \parallel \Gamma_{\widehat{M}_{\pi}}\right) \leq & \sum\limits_{t=1}^{\infty} \sum\limits_{\zeta} 
\mu_{M_\piact}(\zeta) \cdot
\log \left( \mu_{M_\piact}\left(a_t s_t \mid s_0 a_1 s_1 \ldots a_{t-1} s_{t-1} \right) \right) \nonumber\\
& - \sum\limits_{t=1}^{\infty} \sum\limits_{\zeta} 
\sum\limits_{c \in \Acomm} \picomm(o_{t-1})(c) 
\mu_{M_\piact}(\zeta) \cdot \log \left( \mu^{c}\left(a_t^c s_t^c \mid s_0 a_1 s_1 \ldots a_{t-1} s_{t-1} \right) \right) \nonumber\\
& - \sum\limits_{t=1}^{\infty} \sum\limits_{\zeta} 
\sum\limits_{c \in \Acomm} \picomm(o_{t-1})(c) 
\mu_{M_\piact}(\zeta) \cdot \log \left( \prod_{i \notin c} 
\mu^{i}\left(a_t^i s_t^i \mid s_0 a_1 s_1 \ldots a_{t-1} s_{t-1} \right) \right) \\
= & \sum\limits_{t=1}^{\infty} \sum\limits_{\zeta} 
\mu_{M_\piact}(\zeta) \cdot \log \left( \mu_{M_\piact}\left(a_t s_t \mid s_0 a_1 s_1 \ldots a_{t-1} s_{t-1} \right) \right) \nonumber\\
& - \sum\limits_{t=1}^{\infty} \sum\limits_{c \in \Acomm} \Bigg( 
\sum\limits_{\zeta} 
\picomm(o_{t-1})(c) \mu_{M_\piact}(\zeta) \cdot \log \left( \mu^{c}\left(a_t^c s_t^c \mid s_0 a_1 s_1 \ldots a_{t-1} s_{t-1} \right) \right) \nonumber\\
& \quad \hspace{2.5cm} + \sum\limits_{\zeta} 
\picomm(o_{t-1})(c) \mu_{M_\piact}(\zeta) \cdot \log \left( \prod_{i \notin c} 
\mu^{i}\left(a_t^i s_t^i \mid s_0 a_1 s_1 \ldots a_{t-1} s_{t-1} \right) \right) 
\Bigg) \\
= & \sum\limits_{t=1}^{\infty} \sum\limits_{\zeta} 
\mu_{M_\piact}(\zeta) 
\cdot \log \left( \mu_{M_\piact}\left(a_t s_t \mid s_0 a_1 s_1 \ldots a_{t-1} s_{t-1} \right) \right) \nonumber\\
& - \sum\limits_{t=1}^{\infty} \sum\limits_{c \in \Acomm} \Bigg( 
\sum\limits_{\zeta} 
\picomm(o_{t-1})(c) \mu_{M_\piact}(\zeta) \cdot \log \left( \mu^{c}\left(a_t^c s_t^c \mid s_{t-1} \right) \right) \nonumber\\
& \quad\hspace{2.5cm} + \sum\limits_{\zeta} 
\picomm(o_{t-1})(c) \mu_{M_\piact}(\zeta) \cdot \log \left( \prod_{i \notin c} 
\mu^{i}\left(a_t^i s_t^i \mid s_{t-1} \right) \right) 
\Bigg).  \label{eq:*}
\end{align}
\end{subequations}
\clearpage

Applying the definition of $G^c$ and $G^i$ we obtain
\begin{subequations} \label{eq:long_equation6}
\begin{align}
& \sum_{c \in \Acomm} G^c(\pi_{\text{comm}},\pi_{\text{act}}) 
+ \sum_{i \in [N]} G^i(\pi_{\text{comm}},\pi_{\text{act}})\nonumber\\
& =\sum\limits_{c \in \Acomm} \sum\limits_{t=1}^{\infty} 
\sum\limits_{o_{t-1}, l_{t-1}^c} \prob(o_{t-1}, l_{t-1}^c) 
\cdot \picomm(o_{t-1})(c) \cdot L(c, o_{t-1}, l_{t-1}^c, t) \nonumber\\
&\quad + \sum_{i \in [N]} \sum\limits_{t=1}^{\infty} 
\sum\limits_{o_{t-1}, l_{t-1}^i} \prob(o_{t-1}, l_{t-1}^i) \cdot \left( \sum\limits_{c \in \Acomm, i\not\in c} \picomm(o_{t-1})(c) \right) \cdot L(i, o_{t-1}, l_{t-1}^i, t) \\
&= \sum\limits_{t=1}^{\infty} \sum\limits_{c \in \Acomm} \Bigg( 
\sum\limits_{o_{t-1}, l_{t-1}^c} \prob(o_{t-1}, l_{t-1}^c) 
\cdot \picomm(o_{t-1})(c) \cdot L(c, o_{t-1}, l_{t-1}^c, t) \nonumber\\
&\quad \hspace{2.5cm} + \sum_{i \notin c} \sum\limits_{o_{t-1}, l_{t-1}^i} 
\prob(o_{t-1}, l_{t-1}^i) \cdot \picomm(o_{t-1})(c) \cdot L(i, o_{t-1}, l_{t-1}^i, t) \Bigg)\\
&= \sum\limits_{t=1}^{\infty} \sum\limits_{c \in \Acomm} \Bigg( 
\sum\limits_{o_{t-1}, l_{t-1}^c} \sum\limits_{a_t^c} \sum\limits_{o_t^c, l_t^c} \picomm(o_{t-1})(c) \cdot \prob(o_{t-1}, l_{t-1}^c) \cdot  \prob(a_t^c,o_t^c,l_t^c \mid o_{t-1},l_{t-1}^c) \cdot  \log(\prob(a_t^c,o_t^c,l_t^c \mid o_{t-1},l_{t-1}^c)) \nonumber\\
&\quad \hspace{2.5cm} + \sum_{i \notin c} \sum\limits_{o_{t-1}, l_{t-1}^i} \sum\limits_{a_t^i} \sum\limits_{o_t^i, l_t^i} \picomm(o_{t-1})(c) \cdot \prob(o_{t-1}, l_{t-1}^i) \cdot \prob(a_t^i,o_t^i,l_t^i \mid o_{t-1},l_{t-1}^i)\cdot \log(\prob(a_t^i,o_t^i,l_t^i \mid o_{t-1},l_{t-1}^i)) \Bigg)\\
&= \sum\limits_{t=1}^{\infty} \sum\limits_{c \in \Acomm} \Bigg( 
\sum\limits_{o_{t-1}, l_{t-1}^c} \sum\limits_{a_t^c} \sum\limits_{o_t^c, l_t^c} \picomm(o_{t-1})(c) \cdot \prob(o_{t-1}, l_{t-1}^c, a_t^c, o_t^c,l_t^c)\cdot \log(\prob(a_t^c,o_t^c,l_t^c \mid o_{t-1},l_{t-1}^c)) \nonumber\\
&\quad \hspace{2.5cm} + \sum_{i \notin c} \sum\limits_{o_{t-1}, l_{t-1}^i} \sum\limits_{a_t^i} \sum\limits_{o_t^i, l_t^i} \picomm(o_{t-1})(c) \cdot \prob(o_{t-1}, l_{t-1}^i, a_t^i,o_t^i,l_t^i) \cdot \log(\prob(a_t^i,o_t^i,l_t^i \mid o_{t-1},l_{t-1}^i)) \Bigg).
\end{align}
\end{subequations}
To reduce notational complexity, we use \(\acts^{\bar{i}}= \acts^{1} \times \ldots \times \acts^{i-1} \times \acts^{i+1} \times \ldots \times \acts^{n}\) to represent the joint actions of agent \(i\)'s teammates, excluding agent \(i\) itself. Similarly, we denote the publicly observable states and local states of agent \(i\)'s teammates, excluding agent \(i\) itself, as \(\ostates^{\bar{i}}\) and \(\lstates^{\bar{i}}\), respectively.
In a similar manner, for a group of agents \(c\), we denote the actions, publicly observable states, and local states of the teammates of group \(c\), excluding the agents within \(c\) itself, as \(\acts^{\bar{c}}\), \(\ostates^{\bar{c}}\), and \(\lstates^{\bar{c}}\), respectively.
By applying the definition of marginal probability,
\clearpage

\begin{subequations} \label{eq:long_equation7}
\begin{align}
& \sum_{c \in \Acomm} G^c(\pi_{\text{comm}},\pi_{\text{act}}) 
+ \sum_{i \in [N]} G^i(\pi_{\text{comm}},\pi_{\text{act}}) \nonumber\\
&= \sum\limits_{t=1}^{\infty} \sum\limits_{c \in \Acomm} \Bigg( 
\sum\limits_{o_{t-1}, l_{t-1}^c} \sum\limits_{a_t^c} 
\sum\limits_{o_t^c, l_t^c} \sum\limits_{\ldots} 
\picomm(o_{t-1})(c) \nonumber\\
& \quad \hspace{3.0cm} \cdot \prob(o_0, l_0, a_1, o_1, l_1, \ldots, 
a_{t-1}, o_{t-1}, l_{t-1}^{\bar{c}}, l_{t-1}^c, a_t^{\bar{c}}, a_t^c, o_t^{\bar{c}}, o_t^c, l_t^{\bar{c}}, l_t^c, a_{t+1}, o_{t+1}, l_{t+1}, \ldots) \nonumber\\
& \quad \hspace{3.5cm} \cdot \log(\prob(a_t^c, o_t^c, l_t^c \mid o_{t-1}, l_{t-1}^c)) \nonumber\\
& \hspace{2.5cm} + \sum_{i \notin c} \sum\limits_{o_{t-1}, l_{t-1}^i} 
\sum\limits_{a_t^i} \sum\limits_{o_t^i, l_t^i} \sum\limits_{\ldots} 
\picomm(o_{t-1})(c) \nonumber\\
& \quad \hspace{3.0cm} \cdot \prob(o_0, l_0, a_1, o_1, l_1, \ldots, 
a_{t-1}, o_{t-1}, l_{t-1}^{\bar{i}}, l_{t-1}^i, a_t^{\bar{i}}, a_t^i, o_t^{\bar{i}}, o_t^i, l_t^{\bar{i}}, l_t^i, a_{t+1}, o_{t+1}, l_{t+1}, \ldots) \nonumber\\
& \quad  \hspace{3.5cm} \cdot \log(\prob(a_t^i, o_t^i, l_t^i \mid o_{t-1}, l_{t-1}^i)) 
\Bigg)\\
&= \sum\limits_{t=1}^{\infty} \sum\limits_{c \in \Acomm} \Bigg( 
\sum\limits_{o_0 l_0 a_1 o_1 l_1 a_2\ldots } 
\picomm(o_{t-1})(c) \cdot \prob(o_0 l_0, a_1, o_1 l_1, a_2,\ldots) \cdot 
\log(\prob(a_t^c, o_t^c, l_t^c \mid o_{t-1}, l_{t-1}^c)) \nonumber\\
& \hspace{2.5cm} + \sum_{i \notin c} \sum\limits_{o_0 l_0 a_1 o_1 l_1 a_2\ldots} \picomm(o_{t-1})(c) \cdot \prob(o_0 l_0, a_1, o_1 l_1, a_2,\ldots) \cdot 
\log(\prob(a_t^i, o_t^i, l_t^i \mid o_{t-1}, l_{t-1}^i)) 
\Bigg)\\
&= \sum\limits_{t=1}^{\infty} \sum\limits_{c \in \Acomm} \Bigg( 
\sum\limits_{\zeta} 
\picomm(o_{t-1})(c) \cdot \prob(\zeta) \cdot \log(\prob(a_t^c, o_t^c, l_t^c \mid o_{t-1}, l_{t-1}^c)) \nonumber\\
& \hspace{2.5cm} + \sum_{i \notin c} \sum\limits_{\zeta} \picomm(o_{t-1})(c) \cdot \prob(\zeta) \cdot \log(\prob(a_t^i, o_t^i, l_t^i \mid o_{t-1}, l_{t-1}^i)) 
\Bigg)\\
&= \sum\limits_{t=1}^{\infty} \sum\limits_{c \in \Acomm} \Bigg( 
\sum\limits_{\zeta} 
\picomm(o_{t-1})(c) \cdot \prob(\zeta) \cdot \log(\prob(a_t^c, o_t^c, l_t^c \mid o_{t-1}, l_{t-1}^c)) \nonumber\\
& \hspace{2.5cm} + \sum\limits_{\zeta} \picomm(o_{t-1})(c) \cdot \prob(\zeta) \cdot \log( \prod\limits_{i \notin c} \prob(a_t^i, o_t^i, l_t^i \mid o_{t-1}, l_{t-1}^i)) 
\Bigg). \label{eq:**}
\end{align}
\end{subequations}

A comparison between (\ref{eq:*}) and (\ref{eq:**}) reveals that

\begin{align*}
& D_{KL}\left(\Gamma_{M_\piact} \parallel \Gamma_{\widehat{M}_{\pi}}\right) \leq \sum_{c \in \Acomm} G^c(\pi_{\text{comm}},\pi_{\text{act}}) 
+ \sum_{i \in [N]} G^i(\pi_{\text{comm}},\pi_{\text{act}}) - H(X).
\end{align*}

Hence, we have
\begin{align*}
\prob_{M_{\piact}}((\neg\savoid) \mathcal{U} \starget) - \prob_{\widehat M_{\pi}}((\neg\wsavoid) \mathcal{U} \wstarget) \nonumber & \leq \sqrt{1-\exp\left(-D_{KL}\left(\Gamma_{M_\piact} \parallel \Gamma_{\widehat{M}_{\pi}}\right) \right)}\\
& \leq \sqrt{1-\exp\left(-D_{(\pi_{\text{comm}},\pi_{\text{act}})}  \right)}.     
\end{align*}
\end{proof}
\clearpage

\expressionentropy*
\begin{proof}
Using the chain rule for conditional entropy, we compute the entropy at time \(t \geq 1\) as \eqref{eq:step_entropy_joint}.
\begin{equation}
\begin{aligned}
& H \left(A_t S_t \mid S_0A_1S_1\ldots A_{t-1}S_{t-1}\right)\\
& = -\sum_{a',s'} \sum_{s_0}\sum_{a_1,s_1} \ldots \sum_{a_{t-1} s_{t-1}} 
\Big( \prob( A_t=a',S_t=s',S_0=s_0,\ldots, A_{t-1}=a_{t-1}, S_{t-1}=s_{t-1}) \\
& \hspace{4.5cm} \cdot \log \prob \left( A_t=a',S_t=s'\mid S_0=s_0,\ldots, A_{t-1}=a_{t-1}, S_{t-1}=s_{t-1}\right) \Big)\\
& = -\sum_{a',s'} \sum_{s} \prob \left( A_t=a',S_t=s'\mid S_{t-1} =s\right) \cdot \prob \left(S_{t-1} =s\right) \cdot
\log \prob \left( A_t=a',S_t=s' \mid S_{t-1}=s\right) \\
& = \sum_{s}   \prob \left(S_{t-1} =s\right) \cdot L_M(s)\\
\end{aligned}\label{eq:step_entropy_joint}
\end{equation}

where we define
\begin{align*}
        L_M(s)=&-\sum_{a',s'}\piact(s)(a')P(s,a')(s') \cdot \log (\piact(s)(a')P(s,a')(s')).
\end{align*}

Applying \eqref{eq:step_entropy_joint} to the chain rule for joint entropy over an infinite time we obtain 

\begin{align*}
& H(S_0) + \sum_{t=1}^{\infty} H \left(A_t  S_t  \middle| S_0 A_1 S_1 \dots A_{t-1} S_{t-1}  \right) \\
& = 0 + \sum_{t=1}^{\infty}  \sum_{s}   \prob\left(S_{t-1} =s\right) \cdot L_M(s)\\
& =  \sum_{s}    L_M(s) \sum_{t=1}^{\infty}  \prob \left(S_{t-1} =s\right)\\ 
& = \sum_{s}    L_M(s) \cdot \nu_s\\
& =  -\sum_{s}  \nu_s \sum_{a',s'}\piact(s)(a') \cdot P(s,a')(s') \cdot \log (\piact(s)(a')\cdot P(s,a')(s')) \\
& = -\sum_{s,a',s'}  (\nu_s \cdot \piact(s)(a')) \cdot P(s,a')(s') \cdot \log (\piact(s)(a')\cdot P(s,a')(s')) \\
& = - \sum_{s,a',s'}  \nu_{s,a'}  \cdot P(s,a')(s') \cdot \log (\piact(s)(a')\cdot P(s,a')(s')) \\
& = - \sum_{s,a',s'}  \nu_{s,a'}  \cdot P(s,a')(s') \cdot \left(\log \left( \frac{\nu_{s,a'}}{\sum_{b}{\nu_{s,b}}} \right) + \log P(s,a')(s')\right)\\
& = - \left(\sum_{s,a',s'}  \nu_{s,a'}   \cdot P(s,a')(s')\cdot \log \left( \frac{\nu_{s,a'}}{\sum_{b}{\nu_{s,b}}} \right)\right) - \left(\sum_{s,a',s'}  \nu_{s,a'}  \cdot P(s,a')(s') \cdot  \log P(s,a')(s'))\right)\\
& = - \left(\sum_{s,a'}  \nu_{s,a'}  \cdot \log \left( \frac{\nu_{s,a'}}{\sum_{b}{\nu_{s,b}}} \right)\right) - \left(\sum_{s,a',s'}  \nu_{s,a'}  \cdot P(s,a')(s') \cdot  \log P(s,a')(s'))\right)\\
\end{align*}

Note that, with our assumption of a single initial state in each MMDP, it is consistently true that \(H(S_0) = 0\). 
\end{proof}

\expressionsgs*

\begin{proof}
We prove the claim for $i \in [N]$, the proof for $c \subseteq [N]$ is analogous.

By definition, 
$
G^i(\pi_{\text{comm}},\pi_{\text{act}})  = 
\sum_{t=1}^{\infty} \sum\limits_{o \in \ostates, l^i\in\lstates^i }
\prob(O_{t-1} = o,L_{t-1}^i = l^i)\cdot w(o,i) \cdot L(i,o,l^i,t),
$
where  $ L(i,o,l^i,t)$ is defined by equality \eqref{eq:long_equation} below.
\begin{equation}
\label{eq:long_equation}
\begin{array}{llll}
L(i,o,l^i,t) & = -\sum\limits_{a^i \in \acts^i, o_1^i \in \ostates^i,l_1^i \in \lstates^i} & \prob(A_t^i =a^i,O^i_t=o^i_1,L_t^i=l_1^i \mid O_{t-1}=o, L_{t-1}^i=l^i )\cdot\\&&
\log\big(\prob(A_t^i =a^i,O_t^i=o_1^i,L_t^i=l_1^i \mid O_{t-1}=o,L_{t-1}^i=l^i )\big).
\end{array}
\end{equation}
From \eqref{eq:long_equation} we obtain \eqref{eq:long_equation2} below.
\begin{equation}
\label{eq:long_equation2}
\begin{array}{llll}
L(i,o,l^i,t)\!\!\! & = 
-\hspace{-0.5cm}\sum\limits_{a^i \in \acts^i, o_1^i \in \ostates^i,l_1^i \in \lstates^i}\hspace{-0.2cm} & 
\prob(A_t^i =a^i  | O_{t-1}=o, L_{t-1}^i=l^i ) \!\cdot\!  
\prob(O^i_t=o^i_1,L_t^i=l_1^i|  O_{t-1}=o, L_{t-1}^i=l^i, A_t^i =a^i ) \cdot\\&&
\log\big(
\prob(A_t^i =a^i  \mid O_{t-1}=o, L_{t-1}^i=l^i ) \cdot  
\prob(O^i_t=o^i_1,L_t^i=l_1^i\mid  O_{t-1}=o, L_{t-1}^i=l^i, A_t^i =a^i )
\big)\\&=
-\sum\limits_{a^i \in \acts^i, o_1^i \in \ostates^i,l_1^i \in \lstates^i} & 
\prob(A_t^i =a^i  \mid O_{t-1}=o, L_{t-1}^i=l^i ) \cdot  
\prob(O^i_t=o^i_1,L_t^i=l_1^i\mid  O^i_{t-1}=o^i, L_{t-1}^i=l^i, A_t^i =a^i ) \cdot\\&&
\log\big(
\prob(A_t^i =a^i  \mid O_{t-1}=o, L_{t-1}^i=l^i ) \cdot  
\prob(O^i_t=o^i_1,L_t^i=l_1^i\mid  O^i_{t-1}=o^i, L_{t-1}^i=l^i, A_t^i =a^i )
\big)\\&
\leq 
-\sum\limits_{a^i \in \acts^i, o_1^i \in \ostates^i,l_1^i \in \lstates^i} & 
\frac{\nu_{o,l^i,a^i}}{\sum\limits_{b^i \in \acts^i} \nu_{o,l^i,b^i}} \cdot  
P^i(o^i,l^i,A^i)(o_1^i,l_1^i) \cdot
\log\big(
\frac{\nu_{o,l^i,a^i}}{\sum\limits_{b^i \in \acts^i} \nu_{o,l^i,b^i}} \cdot  
P^i(o^i,l^i,A^i)(o_1^i,l_1^i)
\big)
\end{array}
\end{equation}

Substituting \eqref{eq:long_equation2} in the definition of $G^i$, we obtain \eqref{eq:long_equation3} below.

\begin{equation}
\label{eq:long_equation3}
\begin{array}{llrl}
G^i(\pi_{\text{comm}},\pi_{\text{act}})   & \leq &
-\sum_{t=1}^{\infty} \sum\limits_{o \in \ostates, l^i\in\lstates^i,a^i \in \acts^i, o_1^i \in \ostates^i,l_1^i \in \lstates^i }
& \prob(O_{t-1} = o,L_{t-1}^i = l^i)\cdot w(o,i) \cdot 
\frac{\nu_{o,l^i,a^i}}{\sum\limits_{b^i \in \acts^i} \nu_{o,l^i,b^i}} \cdot  \\&&&
P^i(o^i,l^i,A^i)(o_1^i,l_1^i) \cdot\\&&&
\log\big(
\frac{\nu_{o,l^i,a^i}}{\sum\limits_{b^i \in \acts^i} \nu_{o,l^i,b^i}} \cdot  
P^i(o^i,l^i,A^i)(o_1^i,l_1^i)
\big)\\&\leq&
- \sum\limits_{o \in \ostates, l^i\in\lstates^i,a^i \in \acts^i, o_1^i \in \ostates^i,l_1^i \in \lstates^i }
& \nu_{o,l^i,a^i}\cdot w(o,i) \cdot  
P^i(o^i,l^i,A^i)(o_1^i,l_1^i) \cdot\\&&&
\log\big(
\frac{\nu_{o,l^i,a^i}}{\sum\limits_{b^i \in \acts^i} \nu_{o,l^i,b^i}} \cdot  
P^i(o^i,l^i,A^i)(o_1^i,l_1^i)
\big)\\&\leq&
- \sum\limits_{o \in \ostates, l^i\in\lstates^i,a^i \in \acts^i }
& \nu_{o,l^i,a^i}\cdot w(o,i) \cdot  
\log\big(
\frac{\nu_{o,l^i,a^i}}{\sum\limits_{b^i \in \acts^i} \nu_{o,l^i,b^i}} 
\big)\\&&
- \sum\limits_{o \in \ostates, l^i\in\lstates^i,a^i \in \acts^i, o_1^i \in \ostates^i,l_1^i \in \lstates^i }
& \nu_{o,l^i,a^i}\cdot w(o,i) \cdot  
P^i(o^i,l^i,A^i)(o_1^i,l_1^i) \cdot
\log\big(
P^i(o^i,l^i,A^i)(o_1^i,l_1^i)\big)
\end{array}
\end{equation}

Finally, note that 
$
w(o,i) = \sum\limits_{c \in \Acomm, i\not\in c}\picomm(o)(c)= \sum\limits_{c \in \Acomm, i\not\in c}\frac{\nu_{o,c}}{\sum\limits_{c' \in \Acomm}\nu_{o,c'}}.
$
\end{proof}

\clearpage	

\section{Details on Policy Computation}\label{sec:app-optimization}
Here we give the details of the two steps of our approach for computing positional action and communication policies.

\subsubsection{Optimistic Optimal Value for Reach-Avoid Probability}
In the first step, we use a standard method to compute the optimal value $v^*(M,(\neg\savoid) \mathcal{U} \starget)$ for the reach-avoid probability assuming unrestricted communication. 
The problem at this stage is formulated as a linear program with occupancy measures \(x_{s,a}\) as the variables, and the objective is to maximize the reach-avoid probability. 
We solve the following optimization problem to determine the optimal reach-avoid probability value under a centralized policy execution. 
Subsequently, we employ this optimal value in a constraint in the optimization problem solved at the second stage.

\begin{align*}
& v^* = \max_{x_{s,a}}{\sum_{s \in \states\setminus(\states_{avoid}\cup \states_{target} )} \sum_{a \in \acts } \sum_{s' \in \states_{target}}{x_{s,a}P(s,a)(s')}}\\
& \sum_{a \in \acts} {{x_{s,a}} = \sum_{\substack{s'\in \states \\ b \in \acts }} {x_{s',b}P(s',b)(s)}+{\mathbb{1}}_{\{s_\init=s\}}} \; \forall s \in \states \setminus (\states_{avoid}\cup \states_{target} )\\
& {x_{s,a}}  \geq 0 \; \forall s \in \states \setminus (\states_{avoid}\cup \states_{target} ), a \in \acts\\
& {x_{s,a}}  = 0 \; \forall  s\in  (\states_{avoid}\cup \states_{target} ), a \in \acts\\
\end{align*}

\clearpage	

\paragraph{Cost Minimization}

In the second step,  the decision variables are occupancy measures \((x_{o,l,a}, x_{o,c})\), aiming to optimize the additional cost associated with communication while determining a pair of policies for both communication and action.

\begin{align*}
   & \min_{(x_{o,l,a}, x_{o,c})}{\bar{d} = \sum_{i \in [n]} g^i + \sum_{c \in \Acomm} g^c- h}\\
   & g^i = - \left(\sum_{o,l^i,a^i}  x_{o,l^i,a^i}\cdot  w(o,i) \cdot \log \left( \frac{x_{o,l^i,a^i}}{\sum_{b^i}{x_{o,l^i,b^i}}} \right)\right) - \Bigg(\sum_{\substack{
   o,l^i,a^i \\ o_1^i,l^i_1}}  x_{o,l^i,a^i} \cdot  w(o,i)  \cdot P^i(o^i,l^i,a^i)(o_1^i,l^i_1) \cdot  \log P^i(o^i,l^i,a^i)(o_1^i,l^i_1)\Bigg)\\
   &  g^c = - \left(\sum_{o,l^c,a^c}  x_{o,l^c,a^c}  \cdot  w(o,c) \cdot \log \left( \frac{x_{o,l^c,a^c}}{\sum_{b^c}{x_{o,l^c,b^c}}} \right)\right) -\Bigg(\sum_{\substack{
   o,l^c,a^c\\o_1^c,l^c_1}}  x_{o,l^c,a^c} \cdot  w(o,c) \cdot P^c(o^c,l^c,a^c)(o_1^c,l^c_1) \cdot  \log P^c(o^c,l^c,a^c)(o_1^c,l^c_1))\Bigg)\\
   & h = - \left(\sum_{s,a'}  x_{s,a'}  \cdot \log \left( \frac{x_{s,a'}}{\sum_{b}{x_{s,b}}} \right)\right) - \left(\sum_{s,a',s'}  x_{s,a'}  \cdot P(s,a')(s') \cdot  \log P(s,a')(s'))\right)\\
   & w(o,i) =  \sum\limits_{c \in \Acomm, i\not\in c}\frac{x_{o,c}}{\sum\limits_{c' \in \Acomm}x_{o,c'}} \; \forall  o \in \ostates,  i \in [N]\\
   & w(o,c) =  \frac{x_{o,c}}{\sum\limits_{c' \in \Acomm}x_{o,c'}} \; \forall o \in \ostates, c \in \Acomm\\
   & v^* \leq \sum_{\left(o,l\right) \in \states\setminus(\states_{avoid}\cup \states_{target} )} \sum_{a \in \acts } \sum_{\left(o',l'\right) \in \states_{target}}{x_{o,l,a}P(o,l,a)(o',l')}\\
   & \sum_{a \in \acts \cup \{a_\alpha\}}{{x_{o,l,a}} = \sum_{\substack{(o',l')\in \states \\ b \in \acts \cup \{a_\alpha\} }} {x_{o',l',b}P(o',l',b)(o,l)}+{\mathbb{1}}_{\{s_\init=s\}}} \; \forall (o,l) \in \states \\
   & x_{\left( o,l \right), a}  \geq 0 \; \forall \left(o, l\right) \in \states, a \in \acts\cup\{a_\alpha\}\\
   & x_{(o_\alpha,l_\alpha),a}  = 0 \; \forall a \in \acts\\
   & x_{o, c}  \geq 0 \; \forall o \in \ostates, c \in \Acomm\\
   & x_{o_\alpha,c}  = 0 \; \forall c\in \Acomm\\
   & \sum_{l \in \lstates, a \in \acts}{x_{o,l,a}} = \sum_{c \in \Acomm}x_{o,c} \; \forall o \in \ostates\\
\end{align*}

\newpage 

\clearpage	 

\section{Detailed Description of Benchmarks}\label{sec:app-benchmarks}
\subsubsection{Scenario \#1 with Navigation Tasks}
Consider the environment in Figure~\ref{fig:Env_1_figure}, which is a 4 × 3 grid. The three robots \(R1\), \(R2\), and \(R3\) are initialized as marked in the figure, and their tasks are to navigate to their target locations, \(T1\), \(T2\), and \(T3\), respectively.
Each of \(R1\) and \(R2\) has two potential target locations.  Once each of the robots has reached one of their target locations,  the team's task is complete. At any given time step, only two out of the three robots can communicate and share precise locations and local states. They make decisions on communication by sharing public information, including their respective regions within the environment, which is partitioned into three regions labeled \(o=0\), \(o=1\), and \(o=2\).
Following the communication action, each robot selects one of five distinct actions: move North, move East, move South, move West, or remain in the current cell. 
 If the robot selects an action to move (North, East, South, or West), it proceeds to the desired next state with the probability of \(0.9\) and fails to move in the selected state with the probability of \(0.1\). If the robot fails to move to the desired state, it remains in the current state.  If the robot selects the remaining action, it stays in the current state with probability \(1\). If the selected action results in an invalid move (e.g., hitting a wall), all probability is assigned to staying in the current state.

The method proposed in Section~\ref{sec:synthesis} generates a pair of policies where the action policy is optimal under full communication while creating a robust system under communication restrictions. 
Figures~\ref{fig:Env_1_Heat_map_full_comm}--\ref{fig:Env_1_Heat_map_limited_comm} present the heat maps of the occupancy measures for the robots  for the joint action policy synthesized 
without and with minimizing communication respectively.

This example shows that our approach produces action and communication policies that achieve zero communication costs while maintaining optimal reach-avoid probabilities. The reach-avoid probability under full communication is \(0.99\), which can be achieved under restricted communication by our approach with zero communication cost. In this scenario, the generated policies suggest a communication policy with a probability of \(1\) between robots \(R1\) and \(R2\), which differs from the one based on full communication. Therefore, the communication policy effectively identifies the robots that need to communicate.

\begin{figure}
    \centering
    \begin{subfigure}[b]{0.32\textwidth}
        \includegraphics[width=\linewidth]{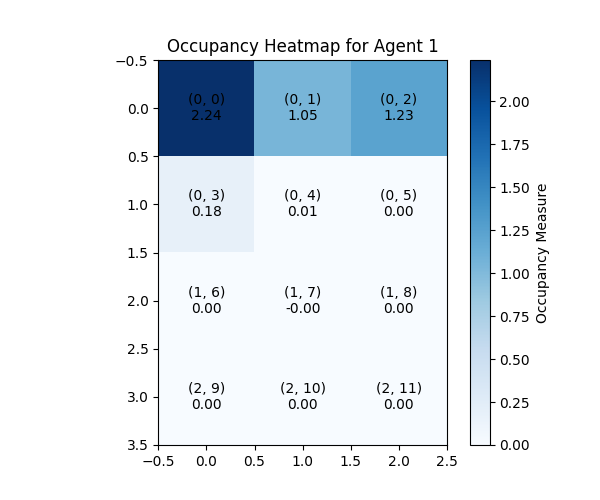}
        \caption{Occupancy measures for agent $R1$}
    \end{subfigure}
    \hfill
    \begin{subfigure}[b]{0.32\textwidth}
        \includegraphics[width=\linewidth]{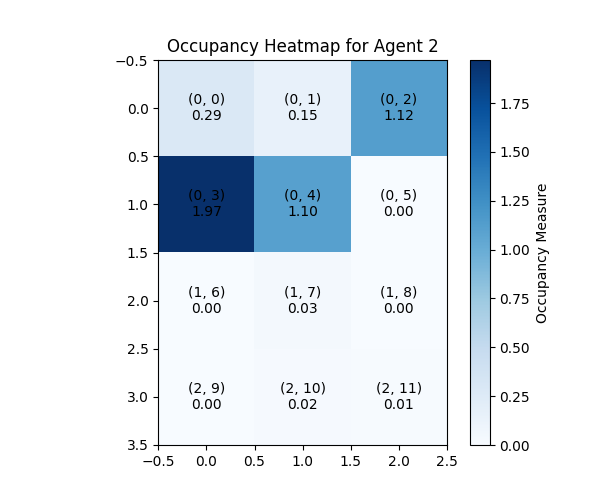}
        \caption{Occupancy measures for agent $R2$}
    \end{subfigure}
    \hfill
    \begin{subfigure}[b]{0.32\textwidth}
        \includegraphics[width=\linewidth]{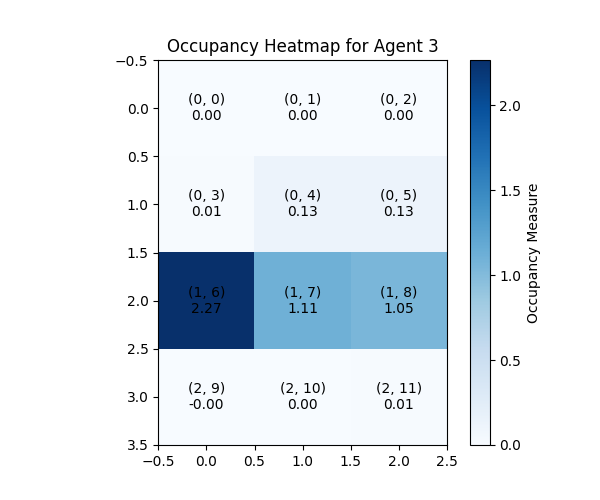}
        \caption{Occupancy measures for agent $R3$}
    \end{subfigure}
    \caption{\textbf{Scenario \#1.} Heat maps of the occupancy measures for the policy computed without minimizing communication.}
    \label{fig:Env_1_Heat_map_full_comm}
\end{figure}

\begin{figure}
    \centering
    \begin{subfigure}[b]{0.32\textwidth}
        \centering
        \includegraphics[width=\linewidth]{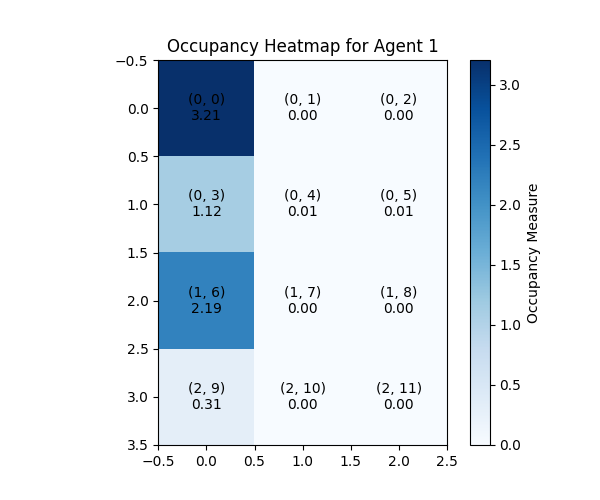}
        \caption{Occupancy measures for agent $R1$}
    \end{subfigure}
    \hfill
    \begin{subfigure}[b]{0.32\textwidth}
        \centering
        \includegraphics[width=\linewidth]{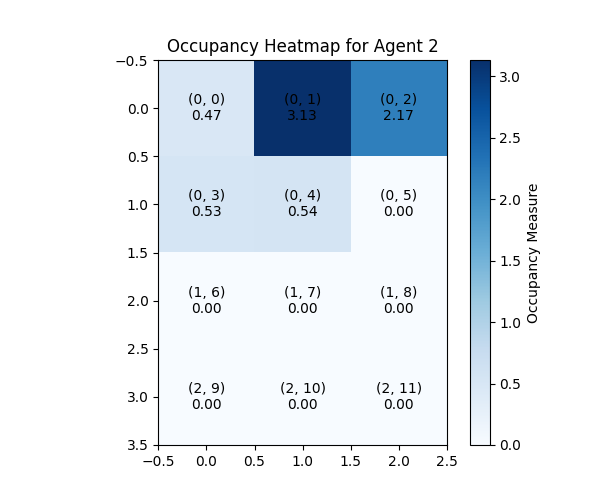}
        \caption{Occupancy measures for agent $R2$}
    \end{subfigure}
    \hfill
    \begin{subfigure}[b]{0.32\textwidth}
        \centering
        \includegraphics[width=\linewidth]{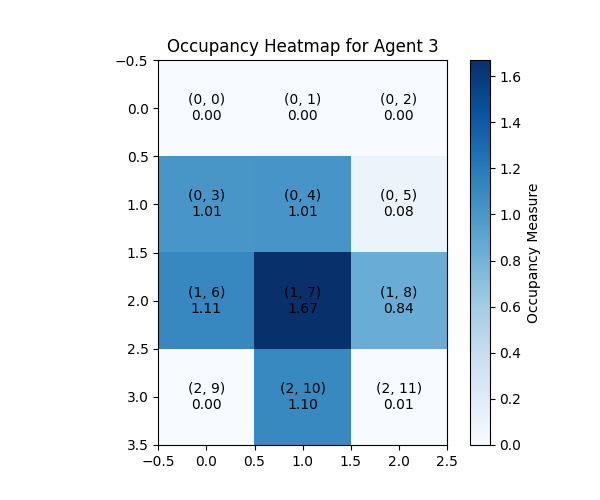}
        \caption{Occupancy measures for agent $R3$}
    \end{subfigure}
    \caption{\textbf{Scenario \#1.} Heat maps of the occupancy measures for the policy computed when minimizing communication.}
    \label{fig:Env_1_Heat_map_limited_comm}
\end{figure}
\clearpage

\subsubsection{Scenario \#2 with a Swarm Intersection}
Consider the environment depicted in Figure~\ref{fig:Environment_2}, where three robots \(R1\), \(R2\), and \(R3\) must navigate to their respective target locations, labeled \(T1\), \(T2\), and \(T3\). The environment comprises 12 cells, labeled from \(0\) to \(11\), as illustrated in Figure~\ref{fig:Env_2_local_states}, and is divided into three regions, denoted by \(o=0\), \(o=1\), and \(o=2\), as shown in Figure~\ref{fig:Environment_2_region}. Each robot completes its task upon reaching one of its designated target locations. The set of all possible joint targets is presented in Table~\ref{tab:Env_2_targets}. The objective is to reach these targets while avoiding collisions with the highest possible probability.
Table~\ref{tab:Env_2_agent_transitions} provides the transition probabilities that describe how robots move through the environment. In this scenario, a congested intersection introduces a high risk of collision, making inter-agent communication essential for coordinating movement and ensuring safe navigation.
We compare the action policy computed by our approach against the approach based on minimizing the total correlation as the objective function. At any given time step, only two out of the three robots are permitted to communicate and exchange precise locations. 
The heat maps of the occupancy measures for the robots, under the joint policies synthesized by our method and by minimizing the total correlation, are shown in Figures~\ref{fig:Env_2_Heat_map_limited_comm}--\ref{fig:Env_2_Heat_map_total_correlation}.
Both the action policy synthesized by total correlation and our method achieve a reach-avoid probability of $1$. However, our approach ensures zero communication cost by selecting an appropriate set of communicating robots. In contrast, the policy derived from total correlation violates the communication restriction at time $t=2$, as it requires coordination among three agents at that time. In this scenario, the minimum total correlation is $0.591$, while the total correlation under our method is $0.693$. This demonstrates that, although an action policy with minimal dependency among agents may exist, minimizing the total correlation alone may fail to find a valid policy that adheres to communication constraints, leading to additional communication costs.
\begin{figure}
    \centering
    \includegraphics[scale=0.075]{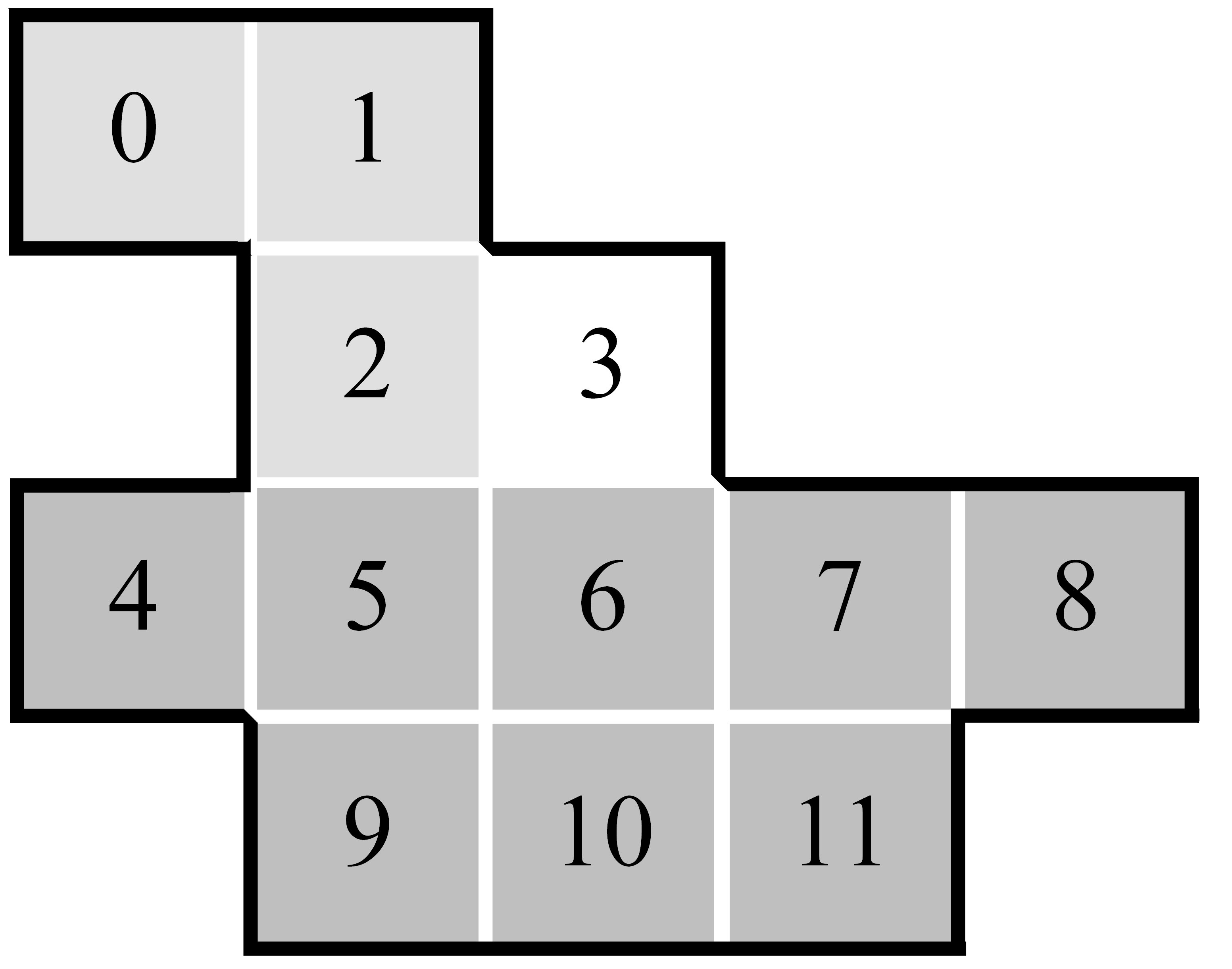}
    \vspace{-.2cm}
    \caption{Local states labels used in \textbf{Scenario \#2}.}
    \label{fig:Env_2_local_states}
\end{figure}
\begin{table}[H]
\centering
\begin{tabular}{cccccccc}
\hline
\multicolumn{8}{c}{\textbf{Joint Target States} \((l_{target}^1, l_{target}^2, l_{target}^3)\) \textbf{in Scenario \#2}} \\
\hline
(4, 5, 7)  & (4, 5, 11) & (10, 5, 7) & (10, 5, 11) & (8, 5, 7)  & (8, 5, 11) & (0, 1, 7)  & (0, 1, 11) \\
(4, 9, 7)  & (4, 9, 11) & (10, 9, 7) & (10, 9, 11) & (8, 9, 7)  & (8, 9, 11) & (1, 0, 7)  & (1, 0, 11) \\
\hline
\end{tabular}
    \vspace{-.2cm}
\caption{Target states in \textbf{Scenario \#2}. Joint local states of robots \(R1\), \(R2\), and \(R3\)}
\label{tab:Env_2_targets}
\end{table}

\begin{table}[H]
\centering
\begin{tabular}{ccccc @{\hskip 1cm} ccccc @{\hskip 1cm} ccccc}
\toprule
\cmidrule(r){1-5} \cmidrule(r){6-10} \cmidrule(r){11-15}
\textbf{Robot} & \textbf{Local} & \textbf{Action} & \textbf{Trans.} & \textbf{Next} &
\textbf{Robot} & \textbf{Local} & \textbf{Action} & \textbf{Trans.} & \textbf{Next} \\
& \textbf{State} &  & \textbf{Prob.} & \textbf{State} &
& \textbf{State} &  & \textbf{Prob.} & \textbf{State} &\\
\midrule
R1 & 2 & 0 & 0.5 & 0  & R2 & 2 & 3 & 1.0 & 0 \\
R1 & 2 & 0 & 0.5 & 1  & R2 & 2 & 0 & 1.0 & 1  \\
R1 & 3 & 0 & 1.0 & 2  & R2 & 3 & 3 & 1.0 & 2  \\
R1 & 3 & 2 & 1.0 & 6  & R2 & 4 & 0 & 1.0 & 3 \\
R1 & 5 & 3 & 1.0 & 4  & R2 & 4 & 1 & 0.2 & 5 \\
R1 & 6 & 3 & 1.0 & 5  & R2 & 4 & 1 & 0.8 & 9  \\
R1 & 6 & 1 & 1.0 & 7  & R3 & 8 & 3 & 0.8 & 7  \\
R1 & 6 & 2 & 0.7 & 9  & R3 & 8 & 3 & 0.2 & 11 \\
R1 & 6 & 2 & 0.2 & 10 &    &   &   &     &    \\
R1 & 6 & 2 & 0.1 & 11 &    &   &   &     &   \\
R1 & 7 & 1 & 1.0 & 8  &    &   &   &     &  \\
R1 & 9 & 1 & 1.0 & 10 &    &   &   &     &  \\
R1 & 11 & 3 & 1.0 & 10 &   &   &   &     &  \\
\bottomrule
\end{tabular}
\caption{\textbf{Scenario \# 2.} Transition probabilities for each robot as a function of state and action. The actions labeled as move North, move East, move South, move West, and Remain are indicated with 0, 1, 2, 3, and 4, respectively.}
\label{tab:Env_2_agent_transitions}
\end{table}

\begin{figure}[H]
    \centering
    \begin{subfigure}[t]{0.32\textwidth}
        \centering
        \includegraphics[width=\textwidth]{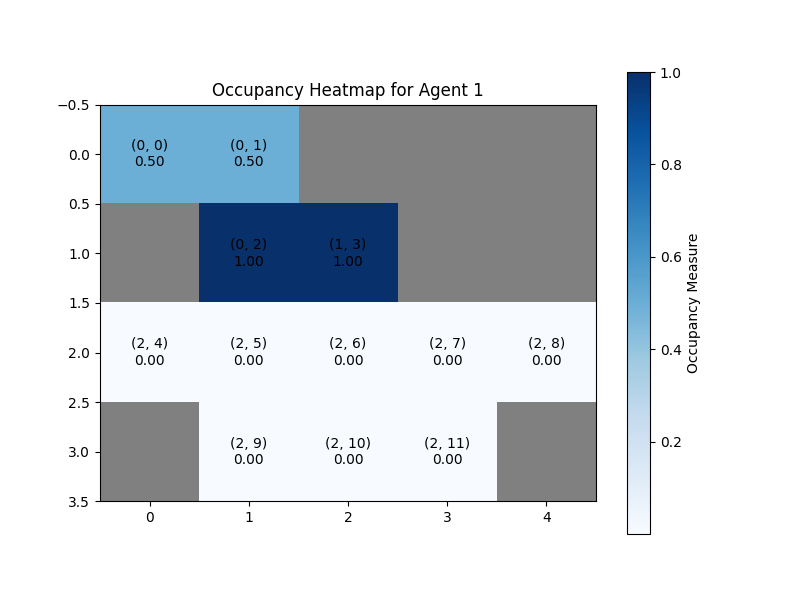}
        \caption{Occupancy measures for agent \(R1\) }
        \label{fig:subfig1_Env_2}
    \end{subfigure}
    \hspace{0.01\textwidth} 
    \begin{subfigure}[t]{0.32\textwidth}
        \centering
        \includegraphics[width=\textwidth]{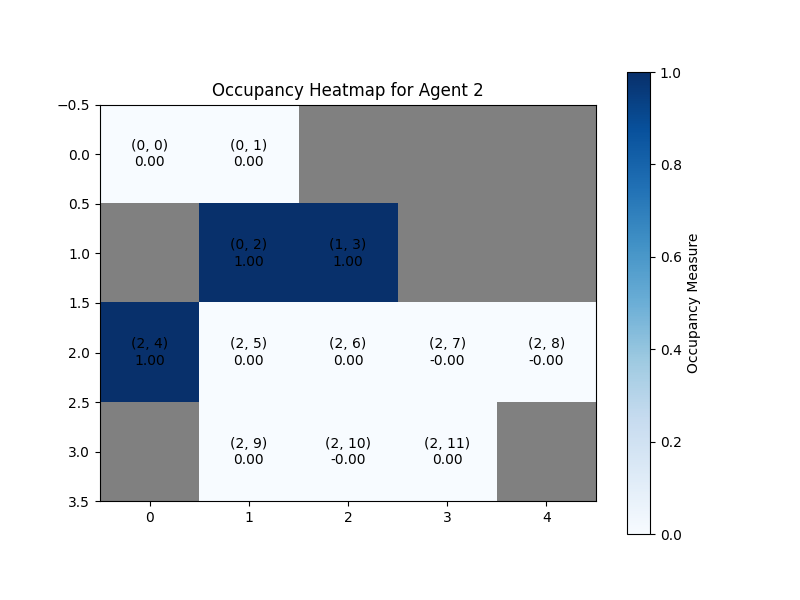}
        \caption{Occupancy measures for agent \(R2\)}
        \label{fig:subfig2_Env_2}
    \end{subfigure}
    \hspace{0.01\textwidth} 
    \begin{subfigure}[t]{0.32\textwidth}
        \centering
        \includegraphics[width=\textwidth]{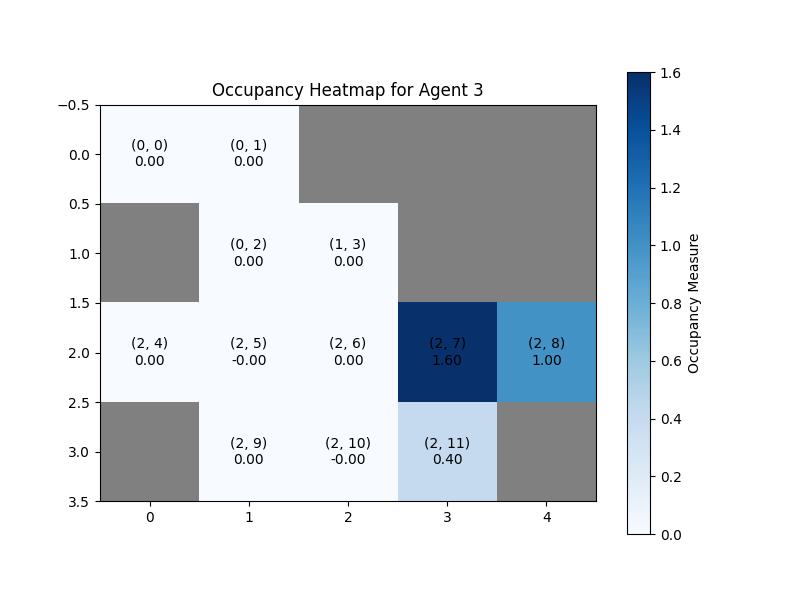}
        \caption{Occupancy measures for agent \(R3\)}
        \label{fig:subfig3_Env_2}
    \end{subfigure}
    \caption{\textbf{Scenario \#2.} Heat maps of the occupancy measures based on the joint action  policy computed  by solving (\ref{opt}).}
    \label{fig:Env_2_Heat_map_limited_comm}
\end{figure}

\begin{figure}[H]
    \centering
    \begin{subfigure}[t]{0.32\textwidth}
        \centering
        \includegraphics[width=\textwidth]{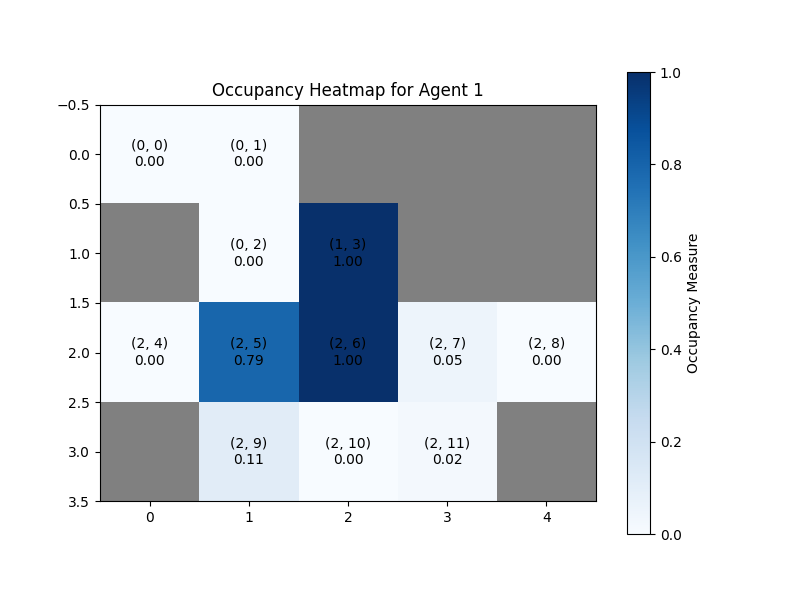}
        \caption{Occupancy measures for agent \(R1\) }
        \label{fig:subfig1_Env_2}
    \end{subfigure}
    \hspace{0.01\textwidth} 
    \begin{subfigure}[t]{0.32\textwidth}
        \centering
        \includegraphics[width=\textwidth]{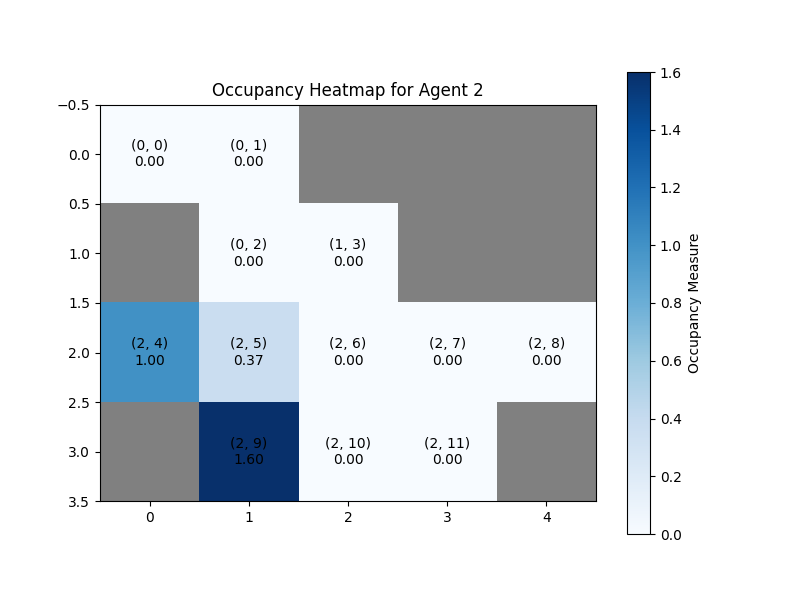}
        \caption{Occupancy measures for agent \(R2\)}
        \label{fig:subfig2_Env_2}
    \end{subfigure}
    \hspace{0.01\textwidth} 
    \begin{subfigure}[t]{0.32\textwidth}
        \centering
        \includegraphics[width=\textwidth]{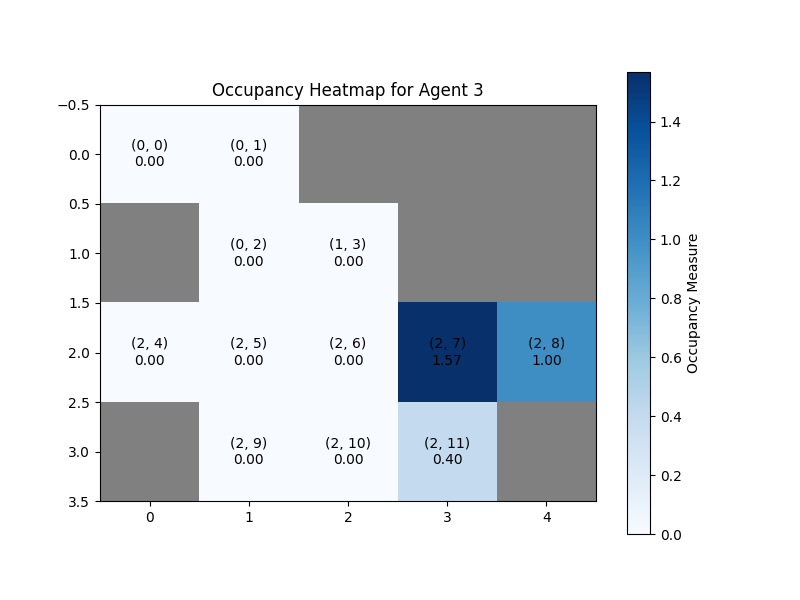}
        \caption{Occupancy measures for agent \(R3\)}
        \label{fig:subfig3_Env_2}
    \end{subfigure}
    \caption{\textbf{Scenario \#2.} Heat maps of the occupancy measures based on the joint action  policy computed  by minimizing total correlation.}
    \label{fig:Env_2_Heat_map_total_correlation}
\end{figure}

\clearpage

\subsubsection{Scenario \#3 with a Hallway}
Consider the environment in Figure~\ref{fig:subfig_map_Env_3}, where three robots are tasked with navigating to their respective goal locations, labeled \(T1\), \(T2\), and \(T3\). The environment consists of 9 cells, labeled from \(0\) to \(8\), as shown in Figure~\ref{fig:Env_3_local_states}, and is divided into three regions \(o=0\), \(o=1\), and \(o=2\), as illustrated in Figure~\ref{fig:subfig_region_Env_3}. A robot’s task is considered complete once it reaches one of its target locations. The objective is to reach these targets while avoiding collisions with the highest possible probability.
The transition probabilities for the movement of the robots can be found in Table~\ref{tab:Env_3_agent_transitions}. 
Note that coordination among the robots is critical at certain time steps to share local state and prevent collisions. 

We evaluate the policy computed using our approach, where at any given time step, only two out of the three robots are permitted to communicate and exchange precise location and state information. 
The heat maps of the occupancy measures for the robots under the synthesized joint policy are shown in
Figure~\ref{fig:Env_3_Heat_map_limited_comm}. 
The generated pair of action and communication policies is shown in Tables~\ref{tab:Env_3_action_policy}--\ref{tab:Env_3_communication_policy}.
Under this scenario, the suggested policy achieves a reach-avoid probability of \(1\). The communication policy dynamically adapts to changing public information, enabling robots to perform optimally without incurring any additional communication costs. This adaptability ensures that the communication cost remains zero while maintaining optimal task performance.

\begin{figure}
    \centering
    \includegraphics[scale=0.07]{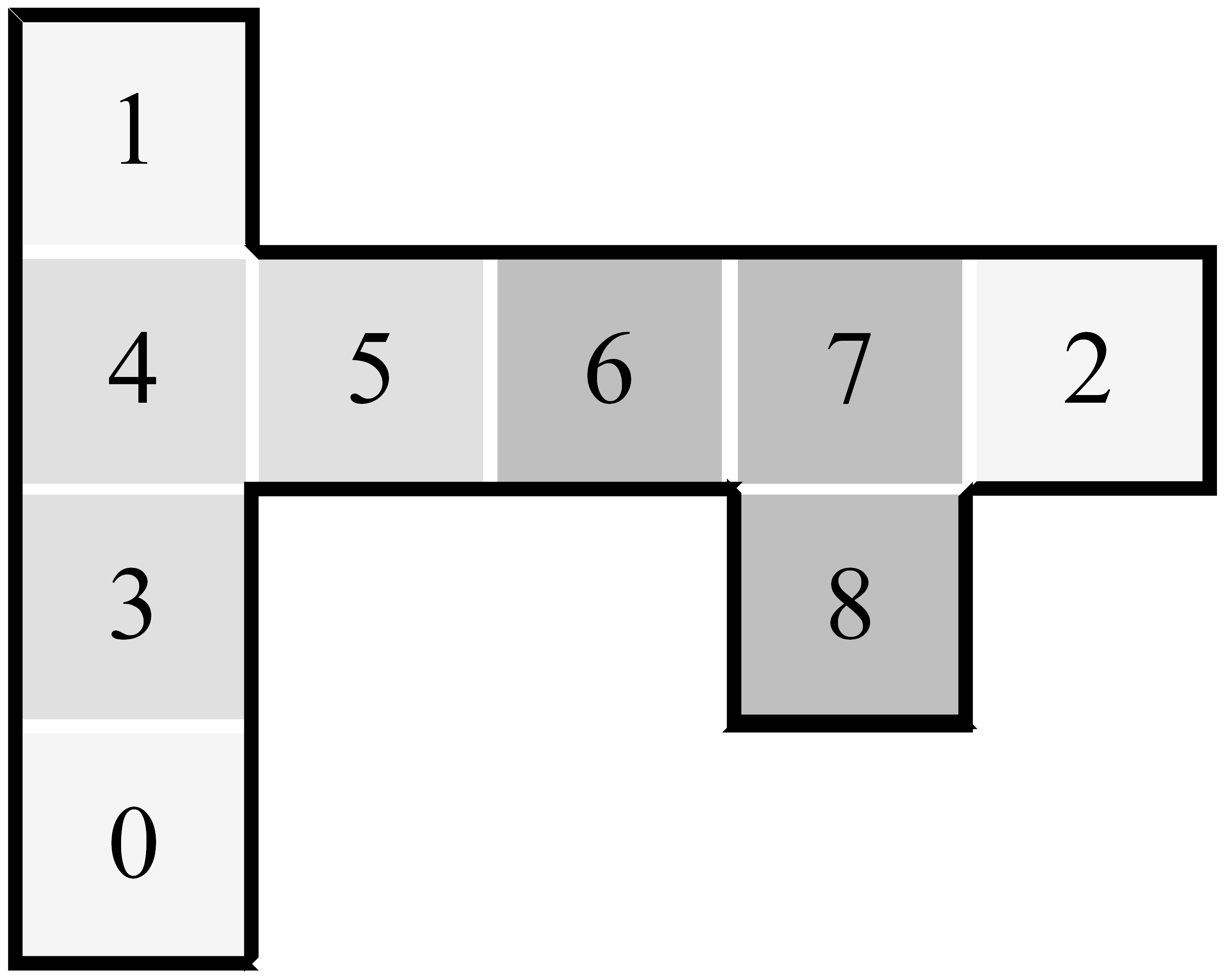}
    \caption{Local states labels used in \textbf{Scenario \#3}.}
    \label{fig:Env_3_local_states}
\end{figure}

\begin{figure}
    \centering
    \begin{subfigure}[t]{0.32\textwidth}
        \centering
        \includegraphics[width=\textwidth]{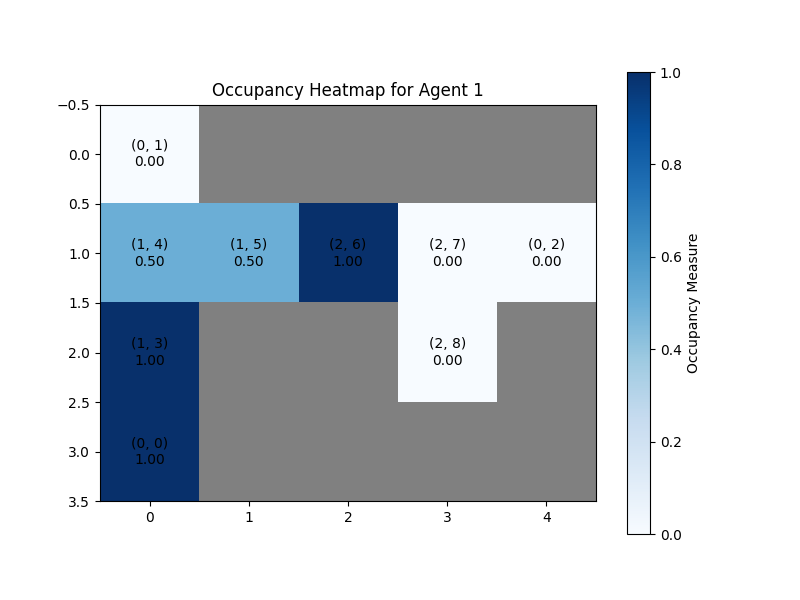}
        \caption{Occupancy measures for  agent \(R1\) }
        \label{fig:subfig1_Env_3}
    \end{subfigure}
    \hspace{0.01\textwidth} 
    \begin{subfigure}[t]{0.32\textwidth}
        \centering
        \includegraphics[width=\textwidth]{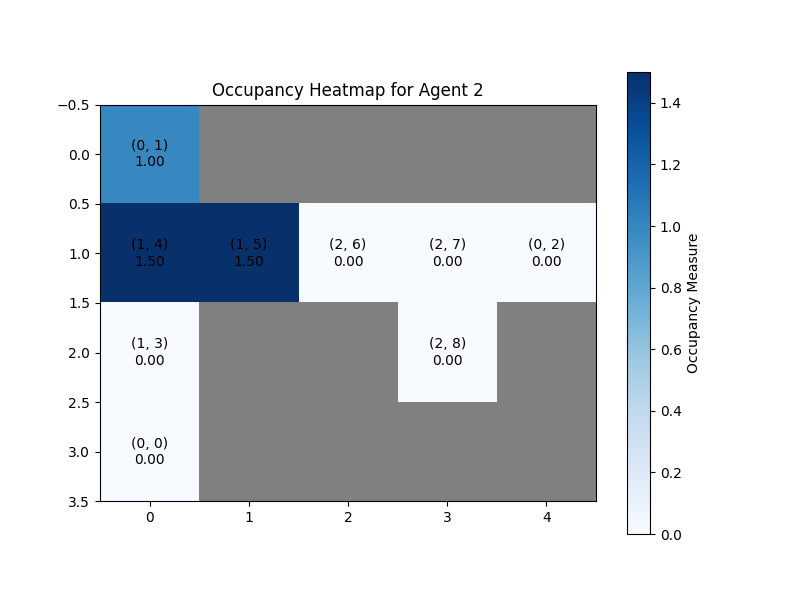}
        \caption{Occupancy measures for agent \(R2\)}
        \label{fig:subfig2_Env_3}
    \end{subfigure}
    \hspace{0.01\textwidth} 
    \begin{subfigure}[t]{0.32\textwidth}
        \centering
        \includegraphics[width=\textwidth]{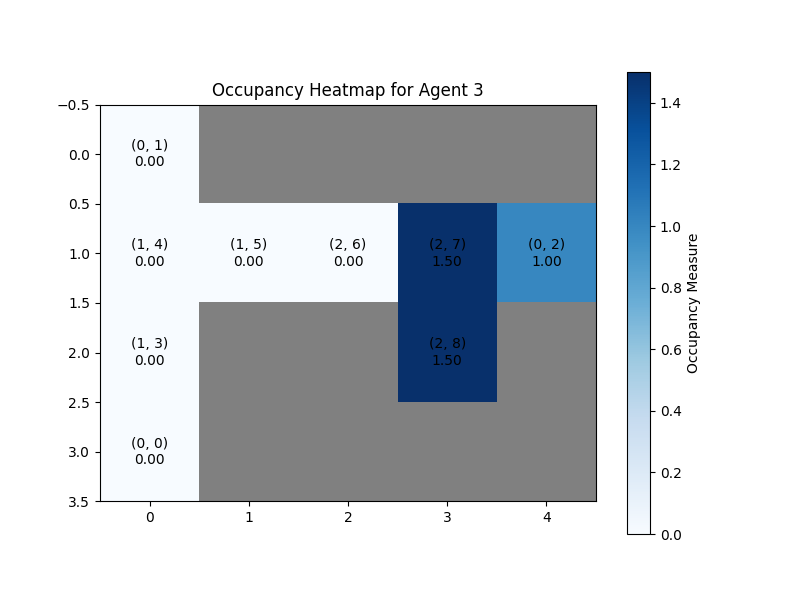}
        \caption{Occupancy measures for agent \(R3\)}
        \label{fig:subfig3_Env_3}
    \end{subfigure}
    \caption{\textbf{Scenario \#3.} Heat maps of the occupancy measures based on the joint action  policy computed  by solving (\ref{opt}).}
    \label{fig:Env_3_Heat_map_limited_comm}
\end{figure}

\begin{table}[h!]
\centering
\begin{tabular}{ccccc @{\hskip 1cm} ccccc @{\hskip 1cm} ccccc}
\toprule
\cmidrule(r){1-5} \cmidrule(r){6-10} \cmidrule(r){11-15}
\textbf{Robot} & \textbf{Local} & \textbf{Action} & \textbf{Trans.} & \textbf{Next} &
\textbf{Robot} & \textbf{Local} & \textbf{Action} & \textbf{Trans.} & \textbf{Next} \\
& \textbf{State} &  & \textbf{Prob.} & \textbf{State} &
& \textbf{State} &  & \textbf{Prob.} & \textbf{State} &\\
\midrule
R1 & 0 & 0  & 1.0 & 3 & R2 & 1 & 2  & 0.5 & 4 \\ 
R1 & 3 & 0  & 1.0 & 4 & R2 & 1 & 2  & 0.5 & 5 \\
R1 & 3 & 1  & 1.0 & 5 & R2 & 4 & 4  & 1.0 & 4 \\
R1 & 4 & 0  & 1.0 & 6 & R2 & 5 & 4  & 1.0 & 5 \\
R1 & 5 & 1  & 1.0 & 6 & R3 & 2 & 3  & 0.5 & 7 \\
R1 & 6 & 1  & 1.0 & 7 & R3 & 2 & 3  & 0.5 & 8 \\
R1 & 6 & 2  & 1.0 & 8 & R3 & 7 & 4 & 1.0 & 7  \\
R1 & 7 & 4  & 1.0 & 7 & R3 & 8 & 4 & 1.0 & 8 \\
R1 & 8 & 4  & 1.0 & 8 &    &   &   &     &    \\
\bottomrule
\end{tabular}
\caption{\textbf{Scenario \# 3.} Transition Probabilities for each robot as a function of state and action. The actions labeled as move North, move East, move South, move West, and Remain are indicated with 0, 1, 2, 3, and 4, respectively.}
\label{tab:Env_3_agent_transitions}
\end{table}
\clearpage

\begin{table}[h!]
    \centering
    \begin{tabular}{ccc} 
        \toprule
        \textbf{Joint Local State} \\ \((l^1, l^2, l^3)\) & \textbf{Action} & \textbf{Probability} \\
        \midrule
        (0, 1, 2)   & (0, 2, 3) & 1.00 \\
        (3, 4, 7)   & (1, 4, 4) & 1.00 \\
        (3, 4, 8)   & (1, 4, 4) & 1.00 \\
        (5, 4, 7)   & (1, 4, 4) & 1.00 \\
        (5, 4, 8)   & (1, 4, 4) & 1.00 \\
        (6, 4, 7)   & (2, 4, 4) & 1.00 \\
        (6, 4, 8)   & (1, 4, 4) & 1.00 \\
        (3, 5, 7)   & (0, 4, 4) & 1.00 \\
        (3, 5, 8)   & (0, 4, 4) & 1.00 \\
        (4, 5, 7)   & (0, 4, 4) & 1.00 \\
        (4, 5, 8)   & (0, 4, 4) & 1.00 \\
        (6, 5, 7)   & (2, 4, 4) & 1.00 \\
        (6, 5, 8)   & (1, 4, 4) & 1.00 \\
        \bottomrule
    \end{tabular}
    \caption{\textbf{Scenario \# 3.} Action policy: The actions are labeled as move North, move East, move South, move West, and Remain with 0, 1, 2, 3, and 4, respectively.}
    \label{tab:Env_3_action_policy}
\end{table}

\begin{table}[h!]
    \centering
    \begin{tabular}{ccc} 
        \toprule
        \textbf{Joint Public} & \textbf{Communication} & \multirow{2}{*}{\textbf{Probability}}\\
        \textbf{Information} \((o^1, o^2, o^3)\) & \textbf{Action} & \\
        \midrule
        (1, 1, 2) & Robot 1 and 2 & 1 \\
        (2, 1, 2) & Robot 1 and 3 & 1 \\
        (0, 0, 0) & Robot 1 and 2 & 1 \\
        \bottomrule
    \end{tabular}
    \caption{\textbf{Scenario \# 3.} Communication policy as a function of joint public information with the probabilities.}
    \label{tab:Env_3_communication_policy}
\end{table}
\clearpage

\subsubsection{Scenario \#4 with High Uncertainty}
Consider a 3 x 3 grid environment as in Figure~\ref{fig:Environment_4} with three robots $R1, R2,R3$ and target locations $T1, T2, T3$, respectively.
The robots must navigate to their respective target locations while avoiding collisions. The robot can communicate which row they are in.
Each robot has five possible actions at any given time: moving North, East, South, West, or remaining in its current position. If the chosen movement is valid (i.e., stays within the grid boundaries), the robot transitions to the intended neighboring cell with a probability of \(0.9\), while the remaining \(0.1\) slip probability is redistributed across the current cell and all other valid neighboring cells. If the intended movement is invalid (i.e., leads outside the grid boundaries), the full transition probability (\(1.0\)) is redistributed among the current cell and all valid neighboring cells.
The team's objective is to reach the target locations while avoiding collisions with the highest probability. In this scenario, all three robots are allowed to share publicly observable parts of each state, while only two out of the three robots can fully communicate at each step, sharing the local parts of their current states. 

Figures~\ref{fig:Env_4_Heat_map_full_comm}--\ref{fig:Env_4_Heat_map_limited_comm} present the heat maps of the occupancy measures for the robots  for the joint action policy synthesized 
without and with minimizing communication respectively.
Under full communication, the team can complete its task with a maximum probability of \(0.958\). Under restricted communication, while no optimal action and communication policy with zero communication cost exists for achieving the maximum reach-avoid probability, our method can compute a pair of action and communication policies with zero communication cost for a lower threshold of the reach avoid probability, which is \(0.92\).

\begin{figure}
    \centering
    \begin{subfigure}[t]{0.32\textwidth}
        \centering
        \includegraphics[width=\textwidth]{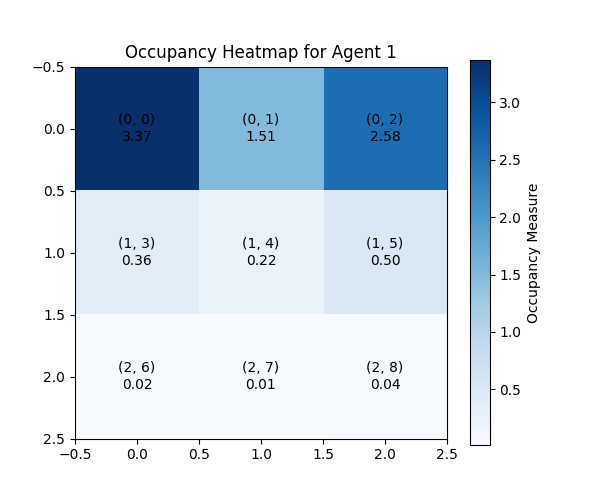}
        \caption{Occupancy measures for agent \(R1\) }
        \label{fig:subfig1_Env_4}
    \end{subfigure}
    \hspace{0.01\textwidth} 
    \begin{subfigure}[t]{0.32\textwidth}
        \centering
        \includegraphics[width=\textwidth]{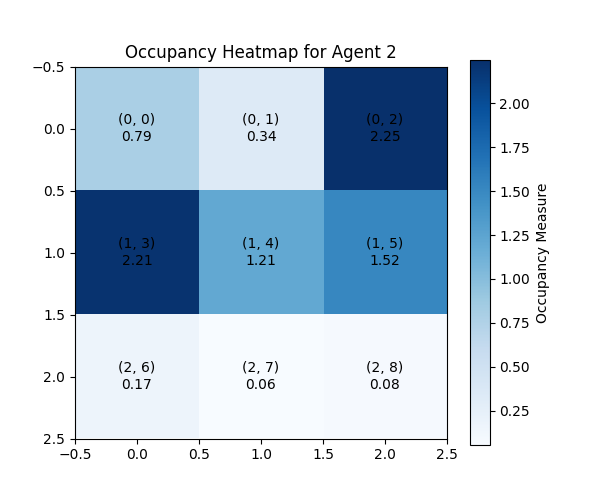}
        \caption{Occupancy measures for agent \(R2\)}
        \label{fig:subfig2_Env_4}
    \end{subfigure}
    \hspace{0.01\textwidth} 
    \begin{subfigure}[t]{0.32\textwidth}
        \centering
        \includegraphics[width=\textwidth]{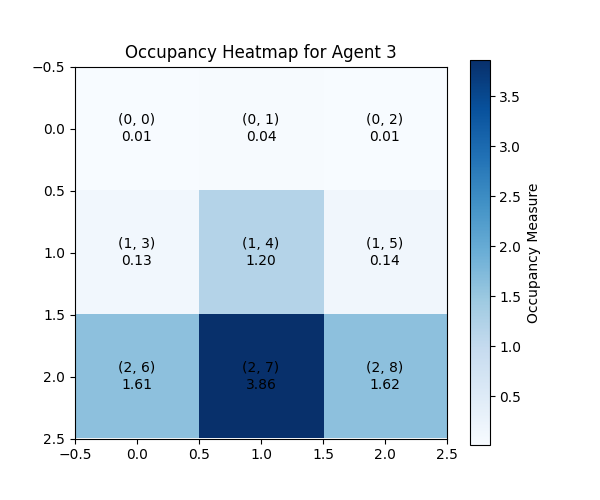}
        \caption{Occupancy measures  for agent \(R3\)}
        \label{fig:subfig3_Env_4}
    \end{subfigure}
    \caption{\textbf{Scenario \#4.} Heat maps of the occupancy measures for the policy computed without minimizing communication.}
    \label{fig:Env_4_Heat_map_full_comm}
\end{figure}

\begin{figure}
    \centering
    \begin{subfigure}[t]{0.32\textwidth}
        \centering
        \includegraphics[width=\textwidth]{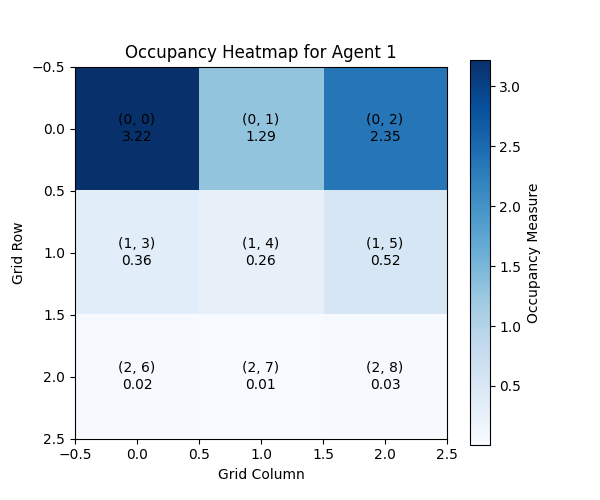}
        \caption{Occupancy measures for agent \(R1\) }
        \label{fig:subfig4_Env_4}
    \end{subfigure}
    \hspace{0.01\textwidth} 
    \begin{subfigure}[t]{0.32\textwidth}
        \centering
        \includegraphics[width=\textwidth]{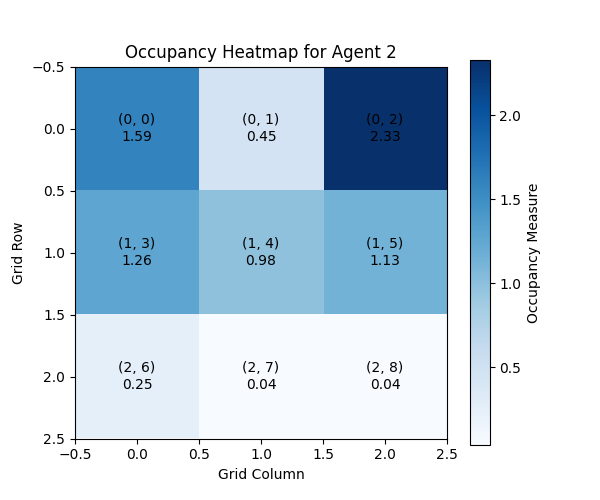}
        \caption{Occupancy measures for agent \(R2\)}
        \label{fig:subfig5_Env_1}
    \end{subfigure}
    \hspace{0.01\textwidth} 
    \begin{subfigure}[t]{0.32\textwidth}
        \centering
        \includegraphics[width=\textwidth]{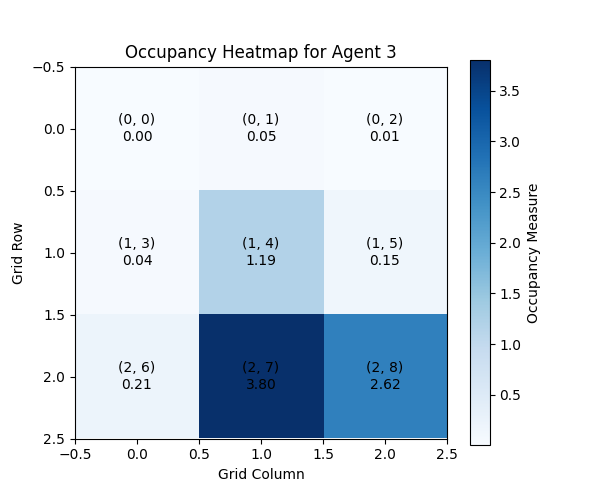}
        \caption{Occupancy measures for agent \(R3\)}
        \label{fig:subfig6_Env_4}
    \end{subfigure}
    \caption{\textbf{Scenario \#4.} Heat maps of the occupancy measures based on the joint action  policy computed  by solving (\ref{opt}).}
    \label{fig:Env_4_Heat_map_limited_comm}
\end{figure}

\end{document}